\title{%
Complexity Classifications for logic-based Argumentation%
\footnote{A preliminary version of this work appeared in the Proceedings of
COMMA  2012,
 \emph{Frontiers in Artificial Intelligence and Applications}, Volume 245,
                     pages 237--248,  2012.}}
\author{%
NADIA CREIGNOU\\
Aix-Marseille Universit\'e \and
UWE EGLY\footnote{This work was partially supported by the Austrian Science Foundation (FWF) under grant S11409-N23}\\
Technische Universit{\"a}t Wien \and
JOHANNES SCHMIDT\footnote{Supported by the National Graduate School in Computer Science (CUGS), Sweden.}\\
Link{\"o}ping University
}
\begin{document}

\maketitle

\begin{abstract}
We consider logic-based argumentation in which an argument is a pair
$(\Phi,\alpha)$, where the support $\Phi$ is a minimal consistent set of
formul{\ae} taken from a given knowledge base (usually denoted by $\Delta$) that entails the claim
$\alpha$ (a formula).
We study the complexity of three central problems in argumentation:
the existence of a support $\Phi\subseteq\Delta$,
the verification of a support and 
the relevance problem (given $\psi$ is there a support $\Phi$ such that $\psi
\in \Phi$?).
When arguments are given in the full language of propositional logic these problems
are computationally costly tasks: the verification problem is $\DP$-complete, the others are $\SigPtwo$-complete. We study these problems in Schaefer's famous framework where the considered propositional formul\ae\
are in generalized conjunctive normal form. This means that formul{\ae} are conjunctions of constraints build upon a fixed
finite set of Boolean relations $\Gamma$ (the constraint language).
We show that according to the properties of this language $\Gamma$, deciding
whether there exists a support for a claim in a given knowledge base is either
polynomial, $\NP$-complete, $\coNP$-complete or $\SigPtwo$-complete. 
We present  a dichotomous classification, $\P$ or $\DP$-complete, for the
verification problem and a trichotomous classification for the relevance problem into
either polynomial, $\NP$-complete, or $\SigPtwo$-complete. These last two
classifications are obtained by means of algebraic tools.\end{abstract}

%-------------------------------------------------------------------------------
\section{Introduction}\label{sec:introduction}
%-------------------------------------------------------------------------------
Argumentation can be seen as a generalization of many forms of nonmonotonic reasoning previously developed \cite{Dung95}.
It is nowadays a very active research area in artificial intelligence. One can identify, among others, two important lines of research:
\emph{abstract} argumentation \cite{Dung95} and \emph{logic-based} (or deductive) argumentation \cite{behu01,CML00,PV02,BH08}.
The former focuses on the relations between arguments based on the property of arguments to attack others, thereby ignoring
the internal structure of an argument and the nature of the attack relation.
In this work we explore logic-based argumentation  in which an
argument is a pair $(\Phi,\alpha)$, where the support $\Phi$ is a minimal consistent set of formul{\ae} that entails the claim $\alpha$ (a formula). 

From a complexity theoretic viewpoint, computing the support of an argument
is a very hard problem. Indeed, in the full language of
propositional logic, given a knowledge base $\Delta$, the problem of deciding
whether there exists a support $\Phi\subseteq \Delta$ for a given
claim $\alpha$ has 
been shown to be \SigPtwo-complete \cite{pawoam03}.
Since this problem underlies many reasoning 
problems in logic-based argumentation, like for instance the computation 
of argument trees as proposed by Besnard and Hunter \cite{behu01}, it is
natural to try to identify fragments of propositional logic for which 
 the deduction problem is easier.

A first step towards an extensive study of the complexity of argumentation in
fragments of propositional logic  was taken 
in \cite{CrScThWo11} in Post's framework, where the authors considered formul\ae\ built upon a
restricted set of connectives. They obtained a full classification of  
various argumentation problems depending on the set of allowed connectives. A
similar yet different approach is not to restrict the connectives but to
restrict the syntactic shape of the formul\ae. This refers to the 
well-known Schaefer's framework in which formul\ae\ are considered in
generalized conjunctive normal form with clauses formed upon a fixed set of
relations $\Gamma$ (the  constraint language). Such formul\ae\  are
called \emph{$\Gamma$-formul\ae}. This  framework captures well-known
classes of formul\ae\  in conjunctive normal form, e.g., Horn, definite Horn
or 2-CNF.  A wide range of algorithmic problems have been studied in this
context (for a survey see \cite{crvo08}),
and in particular the abduction problem \cite{CrZa06,NoZa08}. Preliminary results concerning
argumentation have been obtained in \cite{CreignouES12}.

Our main contribution is a systematic complexity classification
for the problems of existence $\ARG$, verification $\ARGCHECK$ and  relevance $\ARGREL$ in
terms of all
possible sets of relations $\Gamma$.
These problems are formally defined in Section \ref{sec:argumentation_def}. They
can be described as follows:

\begin{itemize}\setlength{\itemsep}{\itemseplength}

\item $\ARG$: given $(\Delta$, $\alpha)$, does there exist a support $\Phi\subseteq \Delta$ for $\alpha$ ?

\item $\ARGCHECK$: given $(\Phi,\alpha)$, is it an argument?

\item $\ARGREL$: given $(\Delta,\alpha, \psi)$, is there a support $\Phi \subseteq \Delta$  for $\alpha$ such that $\psi \in \Phi$?

% \item $\ARGDISP$: given $(\Delta,\alpha, \psi)$, is there a support $\Phi$ such
% that $\psi \notin \Phi$?

\end{itemize}
We prove that depending on the set of allowed
relations $\Gamma$ in our formul\ae\ in generalized conjunctive normal form,
deciding the existence of a support is either in $\P$, or $\NP$-complete,
or $\coNP$-complete or $\SigPtwo$-complete. The verification problem $\ARGCHECK$ is
either in $\P$ or $\DP$-complete, whereas the relevance problem $\ARGREL$ obtains a trichotomous classification into membership in $\P$, or $\NP$-complete, or $\SigPtwo$-complete.

For many classifications obtained in Schaefer's framework the so-called \emph{algebraic approach} turned out to be applicable.
%paragraph revised with no reference to the clones
Roughly speaking this means that the complexity of a problem parametrized by a constraint
language $\Gamma$ is fully determined by its ``expressive power'', defined by algebraic closure properties (this will be made precise in the following).
In the case of the argumentation problems we consider, it is however not clear
how to prove such a statement on the complexity. We therefore develop some new techniques that
still allow us to use parts of these elegant algebraic tools. While in the case
of $\ARG$ and $\ARGCHECK$ we finally obtain that their complexity is indeed characterized by the expressive power of the constraints, we show that in the case of $\ARGREL$ the usual
algebraic approach is definitely not applicable (unless $\P = \NP$): we identify
constraint languages $\Gamma_1, \Gamma_2$  having the same expressive power such that $\ARGREL(\Gamma_1)$
is in $\P$ and $\ARGREL(\Gamma_2)$ is $\NP$-complete.

% This means that the complexity of a problem parametrized by a constraint
% language $\Gamma$ is fully determined by the relational clone
% $\clos{\Gamma}$, the closure of $\Gamma$ under primitive positive definitions.
% In the case of the argumentation problems we consider, it is however not clear
% how to prove such a statement on the complexity. We therefore develop some new techniques that
% still allow us to use parts of these elegant algebraic tools. While in the case
% of $\ARG$ and $\ARGCHECK$ we finally obtain that their complexity does not
% change within relational clones, we show that in the case of $\ARGREL$ the usual
% algebraic approach is definitely not applicable (unless $\P = \NP$): we identify
% constraint languages $\Gamma_1, \Gamma_2$ that give rise to the same relational
% clone, i.e., $\clos{\Gamma_1} = \clos{\Gamma_2}$, such that $\ARGREL(\Gamma_1)$
% is in $\P$ and $\ARGREL(\Gamma_2)$ is $\NP$-complete.

The paper is organized as follows. In Section \ref{sec:preliminaries} we give
some basics  on complexity theory, we present  Schaefer's framework and we 
remind some complexity classifications that will be of use in our proofs
(in particular we explain how our work relates to the complexity
classifications obtained for abduction). In Section \ref{sec:argumentation_def}
we define formally the problems we are interested in. In Section
\ref{sec:tools} we present the algebraic tools we will use and give a series of
technical lemmas. In the following sections we establish complexity
classifications for the existence (Section \ref{sec:complexity_arg}), verification
(Section \ref{sec:complexity_argcheck}) and relevance  (Section
\ref{sec:complexity_argrel}) problems. We conclude in Section
\ref{sec:conclusion}.

%-------------------------------------------------------------------------------
\section{Preliminaries}\label{sec:preliminaries}
%-------------------------------------------------------------------------------

We assume familiarity with the syntax and semantics of propositional and first order logic. 
A \textit{literal} is a variable (\textit{positive literal})  or its negation (\textit{negative literal}), a ($k$-) \textit{clause} is a disjunction of ($k$) literals and a formula in ($k$-) \textit{CNF} is a conjunction of ($k$-) clauses. 
A formula in CNF is \textit{Horn} (resp., \textit{dual Horn}) if every clause contains at most one positive  (resp., negative) literal. 
A formula in CNF is\textit{ positive}  (resp., \textit{negative}) if every clause contains   positive  (resp., negative) literals only. 
\subsection{Complexity theory}

We require standard notions of complexity theory.
For the   problems studied in the paper the arising complexity degrees
encompass the classes
$\P$, $\NP$, $\coNP$, $\DP$ and $\SigPtwo$,
where $\DP$ is defined as the set of languages 
recognizable by the difference of two languages in $\NP$, 
i.e., $\DP := \{ L_1\setminus L_2 \mid L_1,L_2 \in \NP\} =  \{ L_1 \cap
L_2 \mid L_1 \in \NP, L_2 \in \co\NP\}$, and $\SigPtwo$ is the set of languages
recognizable by
nondeterministic polynomial-time Turing machines with an $\NP$ oracle.
For our hardness results we employ \emph{logspace many-one reductions},
defined as follows:
a language $A$ is logspace many-one reducible to some language $B$ (written $A
\leqlogm B$) if there exists a logspace-computable function $f$ such that $x \in A$ if and
only if  $f(x) \in B$.
For more background information on complexity theory, the reader is referred
to \cite{pap94}. 
We will use, among others, the following standard problems to prove hardness results.

% all problems we use to obtain new hardness results:
% $\ThreeSAT$, $\PosOneThreeSAT$, $\CRITICALSAT$, $\IMP(\Gamma)$, $\ABD(\Gamma, \MPQ)$, $\ABD[\HP](\Gamma, \MPQ)$.

  \problemdef{$\ThreeSAT$ ($\NP$-complete according to \cite{Cook71})}
      {A propositional formula $\varphi$ in 3-CNF.}
      {Is $\varphi$ satisfiable?}

  \problemdef{$\PosOneThreeSAT$ ($\NP$-complete according to \cite{Schaefer78})}
      {A propositional formula $\varphi$ in 3-CNF with only positive literals.}
      {Is there an assignment to the variables of $\varphi$ that sets in each clause exactly one variable to true?}

  \problemdef{$\CRITICALSAT$ ($\DP$-complete according to \cite{pawo88})}
      {A propositional formula $\varphi$ in 3-CNF.}
      {Is $\varphi$ unsatisfiable but removing any of its clauses makes it satisfiable?}

\subsection{Constraint languages and $\Gamma$-formul\ae}

A \emph{logical relation} of arity $k$ is a relation $R\subseteq\{0,1\}^k$. In
this paper we will only consider nontrivial  relations, i.e., $R\ne\emptyset$
and $R\ne\{0,1\}^k$. By abuse of notation we do
not make a difference between a relation and its predicate symbol.
We will use $\T$ and $\F$ as the two unary constant relations $\T = \{1\}$ and $\F =
\{0\}$.
A \emph{constraint}, $C$, is a formula $C=R(x_1,\dots,x_k)$,
where $R$ is a logical relation of arity $k$ and the $x_i$'s are (not
necessarily distinct) variables.
For instance the two constraints $\T(x)$ and $\F(x)$ stand for the two unit 
clauses $(x)$ and $(\neg x)$, respectively. An \emph{$R$-constraint} is a constraint built in using the relation $R$.
If $u$ and $v$ are two variables, then $C[v/u]$
denotes the constraint obtained from $C$ by replacing 
each occurrence of $v$ by $u$. If $V$ is a set of variables, then $C[V/u]$
denotes the result of substituting $u$ to every occurrence of every variable of
$V$ in $C$. An assignment $m$ of truth values to the variables \emph{satisfies}
the constraint $C$ if $\bigl(m(x_1),\dots,m(x_k)\bigr)\in R$. A \emph{constraint
language} $\Gamma$ is a finite set of  nontrivial logical relations.
A \emph{$\Gamma$-formula} $\phi$ is a conjunction of constraints using  only
logical relations from $\Gamma$ and is hence a quantifier-free first-order
formula. With $\var(\phi)$ we denote the set of (free) variables appearing in $\phi$. A
$\Gamma$-formula $\phi$ is satisfied by an assignment $m:\var(\phi)\to\{0,1\}$
if $m$ satisfies all constraints in $\phi$ simultaneously (such a satisfying
assignment is also called a \emph{model} of $\phi$). Assuming a canonical order
on the variables we can regard models as tuples in the obvious way and we do not
distinguish between a formula $\phi$ and the logical relation $R_\phi$ it
defines, i.e., the relation consisting of all models of $\phi$.
We say that two first-order formul\ae\  $\varphi$ and $\psi$ are equivalent,
$\varphi\equiv\psi$, if every assignment $m: \Vars{\varphi} \cup \Vars{\psi}
\rightarrow \{0,1\}$
on the combined variable sets satisfies $\varphi$ if and only if it satisfies $\psi$.
We write $\varphi \models \psi$ if $\varphi$ entails  $\psi$, i.e., if $\psi$ is satisfied by any assignment
$m: \Vars{\varphi} \cup \Vars{\psi} \rightarrow \{0,1\}$ that satisfies
$\varphi$.

\medskip 

Throughout the text we refer to different types of Boolean relations following
Schaefer's terminology \cite{Schaefer78}. We say that a Boolean relation~$R$ is
\begin{itemize}\setlength{\itemsep}{\itemseplength}
\item \emph{Horn} (resp. \emph{dualHorn}) if $R$ can be defined by a CNF formula which is $\horn$ (resp. $\dualhorn$);
\item \emph{bijunctive} if it can be defined by a 2-CNF formula;
\item \emph{affine} if it can be defined by an $\affine$ formula, i.e., a conjunction of XOR-clauses
			(consisting of an XOR of some variables plus maybe the constant 1) --- such a formula may also be seen as a system of linear equations over GF$[2]$;
\item \emph{positive} (resp. \emph{negative}) if $R$ can be defined by a
  positive (resp. negative) CNF formula;
%\item \emph{pure positive} (resp. \emph{pure negative}) if $R$ can be defined
%by a conjunction of positive (resp. negative) clauses.
\item \emph{0-valid} (resp., \emph{1-valid}) if $R(0,\ldots, 0)=1$ (resp., $R(1,\ldots, 1)=1$);
\item \emph{$\varepsilon$-valid} if $R$ is either 0-valid, or 1-valid or both;
\item \emph{complementive} if, for all $m\in R$, we have also $\overline{m}\in R$, where $\overline{m}$ denotes the dual assignment of $m$ defined by $\overline{m}(x)=1-m(x)$. 
\end{itemize}

Finally a constraint language
$\Gamma$ is $\horn$ (resp. $\dualhorn$, $\bijunctive$, $\affine$, $\positive$, $\negative$, $\zerovalid$, $\onevalid$, $\epsilonvalid$, $\complementive$) if every relation in $\Gamma$ is $\horn$ (resp. $\dualhorn$, $\bijunctive$, $\affine$, $\positive$, $\negative$, $\zerovalid$, $\onevalid$, $\epsilonvalid$, $\complementive$). We say that a constraint language is \emph{Schaefer} if $\Gamma$ is either $\horn$, $\dualhorn$, $\bijunctive$, or $\affine$. 

There exist easy criteria to determine if a given relation is $\horn$, $\dualhorn$, $\bijunctive$,
or $\affine$. Indeed all these classes can be characterized by their polymorphisms
(see e.g., \cite{crvo08} for a detailed description). We recall here the characterizations for $\horn$ and $\dualhorn$.
The binary operations of conjunction and disjunction applied on $k$-ary Boolean vectors are applied coordinate-wise.
\begin{itemize}\setlength{\itemsep}{\itemseplength}
	\item $R$ is $\horn$ if and only if $m, m'\in R$ implies $m \land m' \in R$.
	\item $R$ is $\dualhorn$ if and only if $m, m'\in R$ implies $m \lor m' \in R$.
\end{itemize}

% It is worth noticing, and this will be made clearer in the technical section,
% that such a characterization does not exist for positive (resp., negative)
% relations.

\subsection{Related complexity classifications}

The formul\ae\ in  generalized conjunctive
normal form, $\Gamma$-formul\ae, defined as in the section above,  have provided
a rich framework to obtain
complexity classifications for computational problems involving Boolean
formul\ae\ (see e.g., \cite{crvo08}). We recall here some of them that will be of
use in the following. Moreover we make clear the relationship between the
complexity of argumentation and the  complexity of abduction.  
\smallskip

The satisfiability problem for $\Gamma$-formul{\ae}, denoted by $\SAT(\Gamma)$,
was first studied by Schaefer \cite{Schaefer78} who obtained  a famous
dichotomous classification: If $\Gamma$ is Schaefer or $\zerovalid$ or
$\onevalid$, 
then $\SAT(\Gamma)$ is in~$\P$; otherwise $\SAT(\Gamma)$ is $\NP$-complete.

The complexity of the implication problem for $\Gamma$-formul\ae\ was studied
in \cite{SchnoorS08}. The authors obtain a dichotomous classification for $\IMP(\Gamma)$  (i.e., given $\varphi$
and $\psi$ two $\Gamma$-formul\ae, does $\varphi\models \psi$ hold~?):
it is in $\P$ if $\Gamma$ is $\Schaefer$ and $\coNP$-complete otherwise.

Since then and in the recent past, complexity classifications
for many further computational problems for $\Gamma$-formul{\ae} have been
obtained (see \cite{crvo08} for a survey). In particular we will consider 
the following abduction problems.
  \problemdef{$\ABD(\Gamma)$.}
      {$\calA=(\varphi,H,q)$, where $\varphi$ is a $\Gamma$-formula, $H$ is a set of variables, and $q \notin H$ is a variable.}
      {Does there exist $E \subseteq \Lits{H}$ (where $\Lits{H}$ denotes the
set of literals that can be built upon  variables from $H$) such that
$\varphi \land E$ is satisfiable and $\varphi \land E \models q$~?}
  \problemdef{$\ABD[\HP](\Gamma)$.}
      {$\calA=(\varphi,H,q)$, where $\varphi$ is a $\Gamma$-formula, $H$ is a  set of variables, and $q \notin H$ is a variable.}
      {Does there exist $E \subseteq H$ such that $\varphi \land E$ is satisfiable and $\varphi \land E \models q$~?}
Abduction is a nonmonotonic reasoning process, whose  most typical example is  medical diagnosis. Given a knowledge base, here $\varphi$ a formula, a set of variables $H$, the hypotheses, and an observation $q$, we are interested in deciding whether there exists an explanation $E$, i.e., a set of literals built upon $H$ consistent with $\varphi$ such that $\varphi$ and $E$ together entail the observation $q$. The problem $\ABD[\HP]$ refers to positive abduction, where   explanations have to be built upon positive literals only.

According to the classifications obtained in \cite{CrZa06,NoZa08} we will use the
fact that if $\Gamma$ is not $\Schaefer$, then $\ABD(\Gamma)$ is $\SigPtwo$-complete and that
if $\Gamma$ is in addition neither $\zerovalid$, nor $\onevalid$ then $\ABD[\HP](\Gamma)$ is $\SigPtwo$-complete, too.

\smallskip

We want to outline at this point the seeming proximity of argumentation to abduction.
In full propositional logic the abduction problem
and the argumentation problem are equivalent (with respect to polynomial many-one reductions)
since they are both complete for the second level of the polynomial hierarchy (\cite{EiGo95,pawoam03}). Indeed there are very
simple reductions proving this equivalence. We give here exemplary the reductions between $\ABD[\HP]$ and $\ARG$.
\begin{enumerate}\setlength{\itemsep}{\itemseplength}
	\item $\ABD[\HP] \leqlogm \ARG$: $(\varphi, H, q) \mapsto (\Delta, \alpha)$, where\\
				$\Delta := \{\varphi\} \cup H$,
	 			$\alpha := q \wedge \varphi$.
	\item $\ARG \leqlogm \ABD[\HP]$: $(\Delta, \alpha) \mapsto (\varphi,H, q)$, where\\
				$\Delta = \{\varphi_1, \dots, \varphi_n\}$,
				$H := \{x_1, \dots, x_n\}$ where the $x_i's$ are fresh variables,\\
				$\varphi := (\alpha \leftrightarrow q) \land \bigwedge_{i=1}^n (x_i \leftrightarrow \varphi_i)$.
\end{enumerate}

For fragments of propositional logic these reductions do not generally preserve the properties of the chosen fragment and are thus not suited to transfer complete complexity classifications between abduction and argumentation. Nevertheless we will use the idea of the first reduction to transfer certain hardness results from abduction to argumentation.
For instance by the first reduction and hardness results in \cite{NoZa08} one obtains immediately that deciding the existence of a support for $\horn$-formul\ae\ 
is $\NP$-hard. Since for $\horn$-formul\ae\ satisfiability and implication are in $\P$, the verification problem in comparison is in $\P$.

%-------------------------------------------------------------------------------
\section{Argumentation problems}\label{sec:argumentation_def}
%-------------------------------------------------------------------------------
 In this section we define the computational problems we are interested in. 

\begin{definition} \label{def:argument}{\rm \cite{behu01}}
An {\rm argument} is a pair $\arg{\Phi,\alpha}$, where $\Phi$ is a set of
formul{\ae} and $\alpha$ is a formula such that 
\begin{enumerate}\setlength{\itemsep}{\itemseplength}
\item $\Phi$ is consistent,
\item $\Phi\models\alpha$,
\item $\Phi$ is  minimal with regards to  property (2), \emph{i.e.}, no proper subset
of $\Phi$ entails $\alpha$.
\end{enumerate}
We say that $\arg{\Phi,\alpha}$ is an \emph{argument} for $\alpha$.
If $\Phi\subseteq\Delta$ then it is said to be an  \emph{argument in $\Delta$}. 
We call $\alpha$ the \emph{claim} and $\Phi$ the \emph{support} of the
argument.
\end{definition}

Note that in a more general setting a support $\Phi$ for a claim  $\alpha$ is
a set of formul{\ae} such that  $\Phi$ is consistent and  $\Phi\models\alpha$
and no minimality is required. However, in the definition of an argument, the
support is a minimal one.

\medskip

Let $\Gamma$ be a constraint language.
Then the \emph{argument existence problem for $\Gamma$-formul\ae}\ is defined
as follows:

  \problemdef{$\ARG(\Gamma)$.}
      {$(\Delta,\alpha)$, where $\Delta$ is a set of $\Gamma$-formul\ae\ and $\alpha$ is a $\Gamma$-formula.}
      {Does there exist $\Phi$ such that  $\arg{\Phi,\alpha}$ is an argument in $\Delta$~?}

Besides the decision problem for the existence of an argument we are interested
in the \emph{verification problem} $\ARGCHECK(\Gamma)$ and   in the \emph{relevance problem
$\ARGREL(\Gamma)$, which are defined as follows:}
%Argument-checking 
  \problemdef{$\ARGCHECK(\Gamma)$.}
      {$(\Phi,\alpha)$, where $\Phi$ is a set of $\Gamma$-formul\ae\ and $\alpha$ is a $\Gamma$-formula.}
      {Is $(\Phi,\alpha)$ an argument~?}

\problemdef{$\ARGREL(\Gamma)$.}
      {$(\Delta,\alpha, \psi)$, where $\Delta$ is a set of $\Gamma$-formul\ae, $\psi \in \Delta$ and $\alpha$ is a $\Gamma$-formula.}
      {Does there exist $\Phi$ such that  $\psi \in \Phi$ and $\arg{\Phi,\alpha}$ is an argument in $\Delta$~?}

Let us recall that in the full framework of propositional logic these three problems, $\ARG$, $\ARGCHECK$ and $\ARGREL$, are respectively $\SigPtwo$-complete \cite{pawoam03}, $\DP$-complete and $\SigPtwo$-complete (see e.g. \cite{CrScThWo11}).

%-------------------------------------------------------------------------------
\section{Methods and technical tools}\label{sec:tools}
%-------------------------------------------------------------------------------

\subsection{Co-clones and Galois connection}\label{subsec:coclone}
We now introduce the logical and algebraic tools that our hardness proofs rely
on. For establishing the complexity of the argumentation problems when
restricted to \mbox{$\Gamma$-formul\ae}, the key will be to study the expressive
power of the set $\Gamma$. This expressivity can be more or less restricted
as discussed in the following definition where the notations from
\cite{SchnoorS08} are adopted. 

\begin{definition}
	Let $\Gamma$ be a constraint language.
\begin{itemize}\setlength{\itemsep}{\itemseplength}
 \item The set $\clos{\Gamma}$ is the smallest set of relations that contains
	$\Gamma$ and the equality constraint, $=$, and which is closed under
primitive
	positive first order definitions, i.e., if
	$\phi$ is a $\Gamma\cup\{=\}$-formula and $R(x_1,\ldots,
x_n)\equiv\exists
	y_1\ldots \exists y_l \phi(x_1,\ldots, x_n, y_1,\ldots, y_l)$, then
	$R\in\clos{\Gamma}$.
	In other words, $\clos{\Gamma}$ is the set of relations that can be
expressed as a $\Gamma\cup\{=\}$-formula with existentially quantified
variables.
\item The set $\closneq{\Gamma}$ is the set of relations that can be
expressed as a $\Gamma$-formula with existentially quantified
variables (no equality relation is allowed).
\item The set $\closnexneq{\Gamma}$ is the set of relations that can be
expressed as a $\Gamma$-formula (neither equality relation nor existential
quantification is allowed).
\end{itemize}
\end{definition}

Let us explain why these closure operators are relevant for us. Assume that
$\Gamma_1\subseteq \closnexneq{\Gamma_2}$. Then any $\Gamma_1$-formula can be
transformed into an equivalent  $\Gamma_2$-formula in replacing every $\Gamma_1$-constraint
by its equivalent $\Gamma_2$-formula. This transformation, which is based on local replacement, 
is computable in logarithmic space (note that both $\Gamma_1$ and $\Gamma_2$ are finite, and not part of the input, so the cost of finding for each relation in 
$\Gamma_1$ an equivalent $\Gamma_2$-formula is not taken into account).
Since for such  equivalent
formulas the answers to the problems that we consider in this paper are the
same, the closure operator $\closnexneq{.}$ directly induces reductions for our
problems, e.g.,  $\ARG(\Gamma_1) \leqlogm \ARG(\Gamma_2).$ 

This notion of expressibility can be relaxed in allowing equality relations and existential
quantification. For some computational problems this is still relevant.
For instance, assume that $\Gamma_1\subseteq \clos{\Gamma_2}$. Then we have a
procedure to transform any $\Gamma_1$-formula into a satisfiability-equivalent $\Gamma_2$-formula:
%but this time only into a satisfiability-equivalent one:
the equivalent $\Gamma_2$-formula contains additional
existentially quantified first-order variables and equality
constraints can occur. The existential quantifiers can be removed
and the equality constraints can be dealt
with by identification of  variables.
%This is no problem when looking at the satisfiability of formul{\ae}: The
Thus, it has been shown that $\SAT(\Gamma_1)$
can be reduced in logarithmic space to
$\SAT(\Gamma_2)$ (see \cite{Jeavons98,AlBaImScVo05}). Hence, the
complexity of $\SAT(\Gamma)$ depends only on $\clos \Gamma$. 
The set $\clos\Gamma$ is called \emph{relational clone} (or a
\emph{co-clone}). Accordingly, in order to obtain a full
complexity classification for the satisfiability problem one only
has to study the co-clones.

Interestingly, there exists a Galois correspondence between the lattice of
Boolean relations (co-clones) and the lattice of Boolean functions (clones) (see
\cite{Geiger68,bokakoro69}). As a consequence,  based on the famous Post's
description of the lattice of clones \cite{pos41}, the lattice of co-clones is
nowadays well-known (see e.g., \cite{BohlerRSV05,CreignouKZ08}). Therefore, this
Galois connection and this lattice provide a very powerful tool that can be
successfully applied in order to obtain complexity classifications for
computational problems dealing with Boolean formul{\ae} (see e.g.,
\cite{crvo08} for a survey  and \cite{NoZa08} for certain variants of the
abduction problem).

However, 
  this Galois connection  is apparently not appropriate in order to
transfer   complexity results in the case of argumentation.  Indeed, suppose
that $\varphi(\overline{x})$ is logically equivalent to $\exists
\overline{y}\varphi'(\overline{x},\overline{y})$.  It is clear that $\varphi$ is
satisfiable if and only if $\varphi'$ is satisfiable. Moreover,   for any
formula $\psi(\overline{x})$ we have that   $\varphi\models\psi$ if and only if
$\varphi'\models\psi$.  However, if $\psi(\overline{x})$ itself is logically
equivalent to $\exists
\overline{u}\psi'(\overline{x},\overline{u})$, it is not true any more that
$\varphi\models \psi$ implies $\varphi\models \psi'$ (and neither $\varphi'\models \psi'$).  Therefore, when
transforming instances between argumentation problems,
introducing existential variables is problematic with respect to the claim. For
this reason we will
introduce a technical version of the two problems
$\ARGCHECK$ and $\ARGREL$
in which we can differentiate the restrictions put on the knowledge
base from the ones put on the claim. The variants we will use are defined as
follows.

\problemdef{$\ARGCHECK(\Gamma, R)$.}
      {$(\Phi,\alpha)$, where $\Phi$ is a set of $\Gamma$-formul\ae\ and
$\alpha$ is an $R$-constraint.}
      {Is $(\Phi,\alpha)$ an argument?}

\problemdef{$\ARGREL(\Gamma, R)$.}
      {$(\Delta,\alpha, \psi)$, where $\Delta$ is a set of $\Gamma$-formul\ae,
$\psi \in \Delta$ and $\alpha$ is an \mbox{$R$-constraint}.}
      {Does there exist $\Phi \subseteq \Delta$ such that
	      \begin{enumerate}\setlength{\itemsep}{\itemseplength}
	      	\item $\psi \in \Phi$ and
	      	\item $(\Phi,\alpha)$ is an argument?
	      \end{enumerate}
			}

Also, it is not clear how to get rid of the equality
constraints. Indeed
identifying variables that are connected by equality constraints does not
necessarily preserve minimality of the support.

For these two reasons, it is not clear how to prove that the complexity of the argumentation problems only depends on the relational clone $\clos{\Gamma}$.
The best we can obtain is the following key lemma, which will be of use for the  
classifications for $\ARGCHECK$ and $\ARGREL$.

\begin{lemma}\label{lem:not_Schaefer_corr}
	Let $\Gamma, \Gamma'$ be two constraint languages and $R$ a
Boolean relation. If $\Gamma' \subseteq
\closneq{\Gamma}$
and $R\in\closnexneq\Gamma$ then
	\begin{enumerate}\setlength{\itemsep}{\itemseplength}
	\item	$\ARGCHECK(\Gamma', R) \leqlogm \ARGCHECK(\Gamma).$
	\item $\ARGREL(\Gamma', R) \leqlogm \ARGREL(\Gamma).$
	\end{enumerate}
\end{lemma}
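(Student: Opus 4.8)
The plan is to build, from an instance of $\ARGCHECK(\Gamma', R)$ (resp. $\ARGREL(\Gamma', R)$), an equivalent instance of $\ARGCHECK(\Gamma)$ (resp. $\ARGREL(\Gamma)$) by a purely local, logspace substitution that respects the subtle constraint that existential variables must not be introduced ``around'' the claim. First I would handle the knowledge base. Since $\Gamma' \subseteq \closneq{\Gamma}$, each $\Gamma'$-constraint $C$ occurring in some formula $\varphi_i \in \Delta$ is logically equivalent to a formula $\exists \overline{y}\, \phi_C(\overline{x}, \overline{y})$ where $\phi_C$ is a $\Gamma$-formula (no equality needed). I replace $C$ by $\phi_C$, using \emph{fresh} existential variables $\overline{y}$ that are private to that occurrence of $C$. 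After doing this for every constraint in every $\varphi_i$, each $\varphi_i$ becomes a $\Gamma$-formula $\varphi_i'$ over an enlarged variable set, and crucially $\varphi_i' $ restricted (projected) to $\var(\varphi_i)$ is logically equivalent to $\varphi_i$. Because the fresh variables are private to each $\varphi_i'$, a subset $\Phi' \subseteq \Delta' = \{\varphi_1', \dots, \varphi_n'\}$ is satisfiable iff the corresponding $\Phi \subseteq \Delta$ is, and $\Phi' \models \psi$ iff $\Phi \models \psi$ for any formula $\psi$ over the original variables; this is exactly the argument, already sketched after the definition of the closure operators, that $\ARG(\Gamma_1) \leqlogm \ARG(\Gamma_2)$ when $\Gamma_1 \subseteq \closnexneq{\Gamma_2}$, extended to tolerate existential quantifiers on the knowledge-base side because entailment of a quantifier-free target is insensitive to projection of the antecedent.

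Next I would handle the claim. Here we only have $R \in \closnexneq{\Gamma}$, i.e. the claim constraint $\alpha = R(z_1, \dots, z_k)$ is equivalent to a $\Gamma$-formula $\rho(z_1, \dots, z_k)$ with \emph{no} existential variables and \emph{no} equality — which is precisely why the lemma hypothesizes $R \in \closnexneq{\Gamma}$ rather than merely $R \in \clos{\Gamma}$. I set $\alpha' := \rho(z_1, \dots, z_k)$. Since $\alpha \equiv \alpha'$ as relations on the same variables, for any consistent $\Phi'$ over the appropriate variables we have $\Phi' \models \alpha \iff \Phi' \models \alpha'$, and likewise for every proper subset, so minimality of the support with respect to entailing $\alpha$ coincides with minimality with respect to entailing $\alpha'$. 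Therefore $(\Phi', \alpha')$ is an argument iff $(\Phi, \alpha)$ is. For $\ARGCHECK$ the reduction is then $(\Phi, \alpha) \mapsto (\Phi', \alpha')$; for $\ARGREL$ it is $(\Delta, \alpha, \psi) \mapsto (\Delta', \alpha', \psi')$ where $\psi'$ is the $\Gamma$-formula replacing the distinguished $\psi$, and one checks $\psi' \in \Phi'$ iff $\psi \in \Phi$ under the obvious correspondence $\varphi_i \leftrightarrow \varphi_i'$. All substitutions are by fixed formulas (since $\Gamma'$, $R$ are finite and fixed, not part of the input), so the whole map is logspace-computable.

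The delicate point — the one I would spend the most care on — is the interaction between the fresh existential variables introduced on the knowledge-base side and the minimality condition. One must verify that adding private fresh variables to each $\varphi_i'$ cannot create ``new'' non-minimal supports or destroy minimality of genuine ones: concretely, for $\Phi \subseteq \Delta$ and the corresponding $\Phi' \subseteq \Delta'$, and for each proper subset $\Psi \subsetneq \Phi$ with image $\Psi' \subsetneq \Phi'$, one needs $\Psi \models \alpha \iff \Psi' \models \alpha'$. This follows because the variables witnessing the equivalence $\varphi_i \equiv \exists \overline{y}\, \phi_C$ appear in $\Delta'$ only inside $\varphi_i'$, so any model of $\bigwedge \Psi'$ projects to a model of $\bigwedge \Psi$ and conversely any model of $\bigwedge \Psi$ extends to one of $\bigwedge \Psi'$; hence the models of $\bigwedge \Psi$ and $\bigwedge \Psi'$ agree on the shared variables $z_1,\dots,z_k$, and since $\alpha, \alpha'$ mention only those, entailment is preserved in both directions. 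Once this projection bookkeeping is pinned down, both reductions fall out, and the reason $\clos{\Gamma}$ (with equality) does \emph{not} suffice — equality constraints would force identifications that can collapse distinct formulas of $\Delta$ or break minimality, and existential quantifiers on the claim side would change which subsets entail it — is exactly the motivation recorded in the discussion preceding the lemma.
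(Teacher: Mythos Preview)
Your proposal is correct and takes essentially the same approach as the paper: replace each formula of the knowledge base by a $\Gamma$-formula via $\closneq{\Gamma}$ (dropping the existential quantifiers and using fresh variables per formula), and replace the claim by its equivalent $\Gamma$-formula via $\closnexneq{\Gamma}$. The paper's proof is considerably terser---it simply states the map $(\Phi,\alpha)\mapsto(\Phi',\alpha')$ without spelling out the preservation of minimality---whereas you explicitly verify the projection argument showing that $\Psi\models\alpha \iff \Psi'\models\alpha'$ for every subset, which is the substantive content the paper leaves to the reader.
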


\begin{proof}
\begin{enumerate}\setlength{\itemsep}{\itemseplength}
\item
Let $(\Phi,\alpha)$ be an instance of the first problem, where
$\Phi=\{\delta_i\mid i\in I\}$ for some index set $I$ and $\alpha = R(x_1, \dots, x_k)$. We map this
instance to $(\Phi', \alpha')$, where $\Phi' = \{\delta_i' \mid \delta_i \in
\Phi\}$ and $\alpha'$ is a $\Gamma$-formula equivalent to $R(x_1, \dots, x_k)$.
For $i \in I$ we obtain $\delta_i'$ from $\delta_i$ by   replacing $\delta_i$ by
an equivalent $\Gamma$-formula with
existential quantifiers (such a representation exists since $\Gamma' \subseteq
\closneq{\Gamma}$) and deleting all existential quantifiers.

\item
Let $(\Delta,\alpha, \delta_1)$ be an instance of the first problem, where
$\Delta=\{\delta_i\mid i\in I\}$ for some index set $I$ and $\alpha = R(x_1, \dots, x_k)$. We map this
instance to $(\Delta', \alpha', \delta_1')$, where $\Delta' = \{\delta_i' \mid
\delta_i \in \Delta\}$ and $\alpha'$ is a $\Gamma$-formula equivalent to $R(x_1,
\dots, x_k)$.
For $i \in I$ we obtain $\delta_i'$ from $\delta_i$ by the same procedure as in
the previous case.
\end{enumerate}
\end{proof}

As we discussed above the complexity of the verification and the relevance
problem when restricted to
$\Gamma$-formul{\ae} is not \textit{a priori}
  completely determined by the relational clone $\clos\Gamma$. However due to 
the above lemma, the lattice of Boolean co-clones together with the mentioned Galois
connection will still be of help.

\subsection{Some co-clones and various
expressibility lemmas}\label{subsec:technical_lemmas}
%%%%%%%%%%%%%%% old stuff non touched %%%%%%%%%%%%%%%%%%%%%%%%%%%%%

In this subsection we recall the relevant knowledge on the lattice of
co-clones and give some technical expressibility results that will be of use
for the proofs.

For the results referring to the lattice of co-clones we use the notations and
the results  from \cite{CreignouKZ08}.

\begin{lemma}\label{lem:CocloneIS}
The smallest co-clone that contains all $\positive$ (resp., $\negative$) relations
is $\CoCloneIS_0$ (resp., $\CoCloneIS_1$).  
A relation $R$ is in $\CoCloneIS_0$ (resp., $\CoCloneIS_1$) if and only if  $m,
m'\in R$ implies $m\rightarrow m'\in R$ (resp.,  $m\not\rightarrow m'\in R$),
where the binary operator $\rightarrow$ (resp., $\not\rightarrow$) applied on
Boolean vectors is applied coordinate-wise.
% The relations in $\CoCloneIS_0$
% (resp., $\CoCloneIS_1$)
% are exactly those which are closed under the binary polymorphism $\rightarrow$ (resp., $\not\rightarrow$).
\end{lemma}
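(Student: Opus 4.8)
The statement to prove is Lemma~\ref{lem:CocloneIS}. It has two parts: (a) the smallest co-clone containing all $\positive$ relations is $\CoCloneIS_0$, and symmetrically for $\negative$ and $\CoCloneIS_1$; and (b) the membership criterion via the operators $\rightarrow$ and $\not\rightarrow$. I will treat the $\positive$/$\CoCloneIS_0$ case; the $\negative$/$\CoCloneIS_1$ case follows by the standard duality that swaps $0$ and $1$ coordinatewise (which sends positive clauses to negative clauses and $\rightarrow$ to $\not\rightarrow$).

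The cleanest route is through the Galois connection between co-clones and clones. Recall that $\CoCloneIS_0$ is by definition (in the notation of \cite{CreignouKZ08}) the co-clone dual to the clone $\cloneIS_0$, which is generated by the binary implication function $x \rightarrow y$ (equivalently, it is the set of relations invariant under $\rightarrow$). By Geiger's theorem / the Galois correspondence \cite{Geiger68,bokakoro69}, a relation $R$ lies in $\CoCloneIS_0$ if and only if $R$ is closed under the polymorphism $\rightarrow$ applied coordinatewise, i.e. $m,m' \in R \implies (m \rightarrow m') \in R$. This gives part (b) immediately once part (a) is in place (or independently, just by unwinding the definition of the co-clone).

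For part (a) I would argue both inclusions. \emph{(Soundness: every positive relation is in $\CoCloneIS_0$.)} It suffices to show that any relation defined by a single positive clause $C = (x_{i_1} \vee \dots \vee x_{i_r})$ is closed under coordinatewise $\rightarrow$, and then that $\CoCloneIS_0$, being a co-clone, is closed under conjunction (primitive positive definitions) and contains $=$, so arbitrary positive CNF formulas — hence all positive relations — stay inside. For the single clause: if $m, m'$ both satisfy $C$, then some $x_{i_j}$ is true under $m$ and some $x_{i_k}$ true under $m'$; I claim $m \rightarrow m'$ satisfies $C$. Indeed $(m \rightarrow m')$ on coordinate $t$ is $\neg m(t) \vee m'(t)$; on any coordinate where $m$ is $0$, the implication value is $1$, so if $m$ is $0$ on some variable of $C$ we are done, and otherwise $m$ is all-$1$ on the variables of $C$, in which case $(m \rightarrow m')$ restricted to those variables equals $m'$ restricted to them, which satisfies $C$ since $m'$ does. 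Also note the equality relation $\{(0,0),(1,1)\}$ is trivially closed under $\rightarrow$. \emph{(Completeness: $\CoCloneIS_0$ contains no more than the co-clone generated by positive relations.)} Here I would use the Galois machinery in the other direction: the co-clone generated by the positive relations equals $\operatorname{Inv}(\operatorname{Pol}(\{\text{positive relations}\}))$, so it suffices to show that $\operatorname{Pol}$ of the positive relations is exactly the clone $\cloneIS_0$. One inclusion is the soundness argument above ($\rightarrow$ is a polymorphism of all positive relations); for the reverse, I exhibit a specific positive relation whose polymorphisms are already contained in $\cloneIS_0$ — the standard witness is the relation $R_{\vee} = \{(0,1),(1,0),(1,1)\}$ defined by the positive $2$-clause $(x \vee y)$, possibly together with the constant $\T=\{1\}$; checking which Boolean functions preserve these is a short finite verification and pins down exactly $\cloneIS_0$ (this is essentially the content of Post's lattice at the node $\cloneIS_0$ and is recorded in \cite{BohlerRSV05,CreignouKZ08}).

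The main obstacle is the completeness direction of part (a): one must be sure that the clone $\operatorname{Pol}$ of the set of \emph{all} positive relations does not collapse to something smaller or sit strictly inside $\cloneIS_0$ — in other words, that positive relations genuinely ``reach'' the full co-clone $\CoCloneIS_0$ and not merely a sub-co-clone. This is exactly where one leans on the known structure of Post's lattice and the base relations for each co-clone tabulated in \cite{CreignouKZ08}: it is enough to observe that $(x \vee y)$ together with $\T$ (both positive) form a basis for $\CoCloneIS_0$, which is a cataloged fact. Everything else is routine coordinatewise checking, and the $\negative$/$\CoCloneIS_1$ half is obtained by applying the $0 \leftrightarrow 1$ dual automorphism of the Boolean lattice, under which $\rightarrow$ becomes $\not\rightarrow$, positive becomes negative, and $\CoCloneIS_0$ becomes $\CoCloneIS_1$.
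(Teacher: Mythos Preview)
The paper does not actually prove this lemma: it is stated in Section~\ref{subsec:technical_lemmas} as a recalled fact from the literature on Post's lattice (the paper explicitly says ``For the results referring to the lattice of co-clones we use the notations and the results from \cite{CreignouKZ08}''), and no proof is given. So there is nothing to compare against.

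Your proposal is a correct and standard way to supply the missing argument. The soundness direction (positive clauses are closed under coordinatewise $\rightarrow$) is verified exactly as you do it; the completeness direction is, as you note, where one must invoke the catalogued structure of Post's lattice --- specifically that a basis of $\CoCloneIS_0$ can be taken among positive relations (e.g., $(x\lor y)$ together with $\T$), which is precisely the kind of fact tabulated in \cite{BohlerRSV05,CreignouKZ08}. Part~(b) is, as you say, immediate from the Galois correspondence once one knows $\CoCloneIS_0=\operatorname{Inv}([\rightarrow])$. The duality argument for the $\negative$/$\CoCloneIS_1$ half is fine. In short: your write-up is more than the paper itself offers, and nothing in it is wrong.
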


\begin{remark}
 Observe that there are relations in $\CoCloneIS_0$ (resp., $\CoCloneIS_1$)
which are not $\positive$ (resp., $\negative$), for instance the equality
relation.
\end{remark}

\begin{lemma}\label{lem:impl_CocloneInclusions}
Let $\Gamma$ be a constraint language which is not Schaefer. 
\begin{itemize}\setlength{\itemsep}{\itemseplength}
%II
\item 
If $\Gamma$ is not $\complementive$, but is $\zerovalid$ and 
$\onevalid$, then $\clos{\Gamma}$
contains all relations that are both $\zerovalid$ and $\onevalid$.
%II_1
\item
If $\Gamma$ is not $\complementive$, not $\zerovalid$
but $\onevalid$ (resp. not $\onevalid$ but
$\zerovalid$), then
$\clos{\Gamma}$ contains all relations that are $\onevalid$
($\zerovalid$).
%II_2
\item 
If $\Gamma$ is not $\complementive$, not $\zerovalid$,
not $\onevalid$, then
$\clos{\Gamma}$ contains all relations.
%%IN
%\item ($\CoCloneIN$)
%If $\Gamma$ is $\complementive$, $\zerovalid$,
%$\onevalid$, then
%$\clos{\Gamma}$ contains all relations that are $\complementive$,
%$\zerovalid$, $\onevalid$, $\clos{\Gamma}=\CoCloneIN$.
%%IN_2
%\item ($\CoCloneIN_2$)
%If $\Gamma$ is $\complementive$, neither $\zerovalid$
%nor $\onevalid$, then $\clos{\Gamma}$ contains all relations
%that are $\complementive$, $\clos{\Gamma}=\CoCloneIN_2$.
\end{itemize}
\end{lemma}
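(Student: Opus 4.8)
The statement concerns only the co-clone $\clos{\Gamma}$, and the plan is to read it off from the top of the lattice of Boolean co-clones (see \cite{BohlerRSV05,CreignouKZ08} and the references therein) via the Galois connection $\clos{\Gamma} = \mathrm{Inv}(\mathrm{Pol}(\Gamma))$ between co-clones and clones; the proof then reduces to a short case analysis. First I would observe that each of the properties $\horn$, $\dualhorn$, $\bijunctive$, $\affine$, $\complementive$, $\zerovalid$, $\onevalid$ passes to primitive positive definitions over $\Gamma \cup \{=\}$ --- for the two valid properties one assigns the corresponding constant to the existentially quantified variables, and the equality relation has all of these properties --- so $\Gamma$ enjoys such a property if and only if \emph{every} relation of $\clos{\Gamma}$ does. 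Using the standard polymorphism characterizations recalled in the excerpt, this translates as follows: $\Gamma$ fails to be $\horn$ / $\dualhorn$ / $\bijunctive$ / $\affine$ precisely when $\mathrm{Pol}(\Gamma)$ does not contain, respectively, the binary conjunction $\wedge$ / the binary disjunction $\vee$ / the ternary majority operation $\mathrm{maj}$ / the ternary sum $x \oplus y \oplus z$; $\Gamma$ fails to be $\complementive$ precisely when $\neg \notin \mathrm{Pol}(\Gamma)$; and $\Gamma$ is $\zerovalid$ (resp.\ $\onevalid$) precisely when the unary constant operation $c_0$ (resp.\ $c_1$) lies in $\mathrm{Pol}(\Gamma)$.

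Next I would invoke Post's lattice. Since $\Gamma$ is not Schaefer, $\mathrm{Pol}(\Gamma)$ contains none of $\wedge$, $\vee$, $\mathrm{maj}$, $x \oplus y \oplus z$, and the only clones with this property are $\mathrm{N}$ (the clone generated by $\neg$ together with the constant operations) and its subclones $\mathrm{N}_2 = [\{\neg\}]$, $\mathrm{I} = [\{c_0, c_1\}]$, $\mathrm{I}_0 = [\{c_0\}]$, $\mathrm{I}_1 = [\{c_1\}]$ and the clone $\mathrm{I}_2$ of projections; equivalently, under $\mathrm{Inv}$ these six clones correspond to the six largest co-clones, which form the top of the co-clone lattice. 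Since $\Gamma$ is not $\complementive$ we have $\neg \notin \mathrm{Pol}(\Gamma)$, which discards $\mathrm{N}$ and $\mathrm{N}_2$, leaving $\mathrm{Pol}(\Gamma) \in \{\mathrm{I}, \mathrm{I}_0, \mathrm{I}_1, \mathrm{I}_2\}$. That these are the only subclones of $\mathrm{I}$ is elementary, since $\mathrm{I}$ consists of nothing but projections and constant operations, so a subclone is pinned down by which of $c_0$, $c_1$ it contains.

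It remains to read off the three cases, recalling that $\mathrm{Inv}$ reverses inclusion. If $\Gamma$ is both $\zerovalid$ and $\onevalid$, then $c_0, c_1 \in \mathrm{Pol}(\Gamma)$, so $\mathrm{Pol}(\Gamma) = \mathrm{I}$ and $\clos{\Gamma} = \mathrm{Inv}(\mathrm{I})$, which is precisely the co-clone of all relations that are both $\zerovalid$ and $\onevalid$. If $\Gamma$ is $\onevalid$ but not $\zerovalid$, then $c_1 \in \mathrm{Pol}(\Gamma)$ and $c_0 \notin \mathrm{Pol}(\Gamma)$, forcing $\mathrm{Pol}(\Gamma) = \mathrm{I}_1$ and $\clos{\Gamma} = \mathrm{Inv}(\mathrm{I}_1)$, the co-clone of all $\onevalid$ relations; the case ``$\zerovalid$ but not $\onevalid$'' is symmetric. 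Finally, if $\Gamma$ is neither $\zerovalid$ nor $\onevalid$, then $c_0, c_1 \notin \mathrm{Pol}(\Gamma)$, forcing $\mathrm{Pol}(\Gamma) = \mathrm{I}_2$ and $\clos{\Gamma} = \mathrm{Inv}(\mathrm{I}_2)$, the co-clone of all Boolean relations. The only step that is not pure bookkeeping is the appeal to Post's classification for the claim that avoiding the four operations $\wedge$, $\vee$, $\mathrm{maj}$, $x \oplus y \oplus z$ confines a clone to $\mathrm{N}$ or below; I would cite it rather than reprove it, and I would check in passing that the sets of all $\zerovalid$ relations, all $\onevalid$ relations, and all relations that are both $\zerovalid$ and $\onevalid$ are genuinely co-clones, so that the three identities for $\clos{\Gamma}$ above are indeed equalities (and a fortiori the inclusions which is all the lemma asserts).
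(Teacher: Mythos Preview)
Your argument is correct and follows exactly the approach the paper relies on: the lemma is presented there without proof, as a direct consequence of the structure of Post's lattice of co-clones (the paper simply refers to \cite{CreignouKZ08} for the relevant facts). Your write-up makes explicit the standard Galois-connection reasoning that the paper leaves implicit.
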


Let us now give some expressibility results that we will use in
our hardness proofs. In the proofs of the following lemmas $V=\{x_1, \dots, x_k\}$
will denote a set of $k$ distinct variables. We will suppose w.l.o.g
that the constraint language $\Gamma$ consists of a single relation $R$ of arity $k$.
The reason why we can assume this is that, w.r.t. expressivity, any finite $\Gamma = \{R_1, \dots, R_n\}$
can be 'condensed' to a single relation by the Cartesian product $R = R_1 \times \dots \times R_n$.
It clearly holds that $R \in \closnexneq{\Gamma}$ and $R$ has all properties that $\Gamma$ has.
%one easily verifies that any $\{R\}$-formula can be written as a $\Gamma$-formula.
%We will denote by $C$ the $\{R\}$-constraint $C=R(x_1,\ldots , x_k)$.

\begin{lemma}\label{lem:implementations}
Let $\Gamma$  be a constraint language. If $\Gamma$ is
\begin{enumerate}\setlength{\itemsep}{\itemseplength}
\item $\complementive$, but neither $\onevalid$ nor
$\zerovalid$,	then $(x\neq y)\in\closnexneq\Gamma$;

\item  not $\complementive$, but $\onevalid$ and
$\zerovalid$,	then $(x \rightarrow y)\in\closnexneq\Gamma$;

\item neither $\complementive$ nor $\onevalid$, nor
$\zerovalid$, then $(x \land \neg{y})\in\closnexneq\Gamma$;

\item $\onevalid$ and not $\zerovalid$,
then  $\T\in\closnexneq\Gamma$.
\end{enumerate}
\end{lemma}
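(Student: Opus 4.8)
The plan is to establish each of the four implementations separately, in each case starting from the assumed closure properties of the single relation $R$ (of arity $k$, with variable set $V=\{x_1,\dots,x_k\}$) and producing a $\Gamma$-formula over two fresh variables $x,y$ (and possibly further variables of $V$, all substituted by $x$ or $y$) that defines the desired binary relation without equality or existential quantifiers. The common technique is \emph{identification of variables}: from $R(x_1,\dots,x_k)$ we build new constraints $R[V'/x]$, $R[V''/y]$, etc., for suitable partitions of $V$, and analyse which pairs $(a,b)\in\{0,1\}^2$ survive. Throughout I will use that $R$ is nontrivial, so $R\neq\emptyset$ and $R\neq\{0,1\}^k$, and the negated hypotheses (e.g.\ ``not $\zerovalid$'' means $\overline 0\notin R$).

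For part~(1): since $R$ is not $\onevalid$ and not $\zerovalid$, neither the all-$0$ nor the all-$1$ tuple lies in $R$; take any $m\in R$ and let $A=\{x_i: m(x_i)=1\}$, $B=\{x_i:m(x_i)=0\}$, both nonempty. Consider $\phi(x,y):=R[A/x,B/y]$. Then $(1,0)$ satisfies $\phi$ (witnessed by $m$) and $(0,0),(1,1)$ do not (they would force the excluded constant tuples). By complementivity $\overline m\in R$, so $(0,1)$ also satisfies $\phi$; hence $\phi(x,y)\equiv(x\neq y)$. For part~(2): $R$ is $\zerovalid$ and $\onevalid$ so $(0,0),(1,1)$ will be available; since $R$ is not $\complementive$ there is $m\in R$ with $\overline m\notin R$, and one shows that for a suitable choice of the two parts $A,B$ induced by $m$, the formula $R[A/x,B/y]$ (orienting $A$ as the $1$-block of $m$) accepts exactly $\{(0,0),(1,1),(1,0)\}$ or $\{(0,0),(1,1),(0,1)\}$ — up to swapping $x,y$ this is $x\to y$. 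Here the delicate point is that one identification need not already remove $(0,1)$; one may have to take a conjunction of $R[A/x,B/y]$ over several $m\in R$, or additionally conjoin the whole thing with itself under a relabelling, to cut the relation down to an implication; I expect \textbf{this bookkeeping — showing some combination of identified copies of $R$ lands exactly on $\{00,11\}\cup\{\text{one of }01,10\}$ — to be the main obstacle}, and it is where the hypothesis ``not Schaefer'' (inherited in the intended applications) or at least ``not bijunctive/not Horn'' is implicitly doing work via the standard co-clone facts.

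For part~(3): $R$ is neither $\complementive$ nor $\onevalid$ nor $\zerovalid$; as in~(1), $\overline 0,\overline 1\notin R$, so any identification $R[A/x,B/y]$ with $A,B$ the blocks of some $m\in R$ excludes $(0,0)$ and $(1,1)$, leaving a subset of $\{(0,1),(1,0)\}$, and it is nonempty. If it equals $\{(1,0)\}$ we are done with $x\wedge\neg y$; if it equals $\{(0,1)\}$ swap $x$ and $y$; the case $\{(0,1),(1,0)\}$ is exactly $x\neq y$, and one uses non-complementivity to rule this out — pick $m\in R$ with $\overline m\notin R$ and orient the blocks so that $\overline m$ would be the $(0,1)$-tuple, forcing that tuple out. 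For part~(4): $R$ is $\onevalid$ but not $\zerovalid$, so $\overline 1\in R$ and $\overline 0\notin R$. Take $m\in R$ with $m\neq\overline 1$ (such $m$ exists: if the only model were $\overline 1$, then $R$ would be the $1$-valid unary-style relation, still giving a variable that must be $1$ — actually then directly $R[V/x]$ forces $x=1$); let $A$ be a nonempty set of coordinates where some available model is $0$ while another is $1$, or more simply use that $R[V/x]$ is a unary relation containing $1$ (from $\onevalid$) but not $0$ (from not $\zerovalid$), hence $R[V/x]\equiv\T(x)$. The last sub-argument is clean and I would write it first as a warm-up; (1) and (3) are routine two-valued case checks; (2) is where I would spend the bulk of the proof and invoke the precise structure of the relevant co-clones from Lemma~\ref{lem:CocloneIS} and the lattice description cited from~\cite{CreignouKZ08}.
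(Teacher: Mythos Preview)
The paper itself does not prove this lemma; it simply records it as folklore with a reference to \cite{crkhsu01}. Your sketch is therefore already more than the paper provides, and for parts (1), (3), and (4) it is correct and standard: identify the variables of a single witness $m\in R$ (respectively the non-complementivity witness in (3)) to two blocks and read off the resulting binary relation. Your argument for (4), $R[V/x]\equiv\T(x)$, is exactly right.

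For part (2), however, you argue yourself out of a proof you have already given. Your first sentence contains the complete argument: take $m\in R$ with $\overline m\notin R$ (such $m$ exists by non-complementivity; note $m\neq\overline 0$ and $m\neq\overline 1$ because $\overline 0,\overline 1\in R$ are each other's complements). With $A$ the $1$-block of $m$ and $B$ its $0$-block, the constraint $R[A/x,\,B/y]$ accepts $(0,0)$ and $(1,1)$ by $\varepsilon$-validity, accepts $(1,0)$ because that is $m$, and \emph{rejects} $(0,1)$ because that is $\overline m\notin R$. This is exactly $y\to x$; swap the variables and you are done. There is no ``delicate point'', no need for conjunctions of several identified copies, and above all no implicit appeal to a ``not Schaefer'' hypothesis --- the lemma does not assume it and does not need it. Your expected ``main obstacle'' does not exist; the single identification already works, for precisely the same reason it works in your part (3).
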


\begin{proof}
Folklore, see e.g., \cite{crkhsu01}.
\end{proof}

\begin{lemma}\label{lem:impl_equality}
Let $\Gamma$ be a constraint language that is both $\zerovalid$ and $\onevalid$.
Then $(x = y)\in\closnexneq\Gamma$.
\end{lemma}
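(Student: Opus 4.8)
The plan is to exhibit an explicit primitive-positive-like definition of the equality relation $(x=y)$ using only a single $\Gamma$-constraint (after the usual reduction to $\Gamma=\{R\}$ for one relation $R$ of arity $k$), with no existential quantifiers and no equality constraints on the right-hand side. Since $R$ is both $\zerovalid$ and $\onevalid$, the all-zero tuple $\mathbf{0}=(0,\dots,0)$ and the all-one tuple $\mathbf{1}=(1,\dots,1)$ both belong to $R$. The natural candidate is to take the constraint $R(z_1,\dots,z_k)$ where each argument position is filled by either $x$ or $y$ according to some fixed pattern, and argue that the only assignments to $(x,y)$ that can be extended to a model are those with $x=y$.

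More concretely, first I would observe that any constraint of the form $R(u_1,\dots,u_k)$ with $u_i\in\{x,y\}$ is satisfied whenever $x=y$ (it reduces to $R(x,\dots,x)$, true by $\varepsilon$-validity). So whatever pattern we pick, both $(0,0)$ and $(1,1)$ satisfy it; the task is to choose the pattern so that $(0,1)$ and $(1,0)$ do \emph{not}. The key step is to use the assumption that $R$ is nontrivial together with $\varepsilon$-validity: since $R\neq\{0,1\}^k$ there is some tuple $a=(a_1,\dots,a_k)\notin R$, and since $R$ is nontrivial it is not empty. I would use $a$ to define the pattern: set $u_i=x$ if $a_i=1$ and $u_i=y$ if $a_i=0$ (or a symmetric variant). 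Then the assignment $x=1,y=0$ makes $(u_1,\dots,u_k)$ equal to $a\notin R$, so it falsifies the constraint; and $x=0,y=1$ makes it equal to the complement $\overline a$. This handles one of the two "off-diagonal" cases but not necessarily both — the complement $\overline a$ might still lie in $R$.

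The main obstacle, then, is ruling out \emph{both} $(0,1)$ and $(1,0)$ simultaneously with a single constraint. The clean way around this is to take a \emph{conjunction} of two $R$-constraints (a $\Gamma$-formula is a conjunction of constraints, so this is permitted by $\closnexneq{\cdot}$): one constraint built from a non-tuple $a\notin R$ via the pattern above, which kills $x=1,y=0$, and a second constraint built from $\overline a$ — but $\overline a$ itself may be in $R$. To get a non-tuple witnessing the other direction, I would instead pick a \emph{second} non-tuple, or more robustly argue as follows: if $a\notin R$ then consider whether $\overline a\in R$. If $\overline a\notin R$, use the constraint from $\overline a$ with the same positional substitution, which now falsifies $x=0,y=1$; conjoining the two gives exactly $(x=y)$. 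If $\overline a\in R$, then $R$ is not closed under complementation restricted to $a$, but more usefully one can use the single constraint from $a$ together with the single constraint obtained by swapping the roles of $x$ and $y$ in the pattern (i.e. $u_i=y$ iff $a_i=1$); that swapped constraint evaluates at $x=0,y=1$ to $a\notin R$ and at $x=1,y=0$ to $\overline a$, so conjoining the original and the swapped constraint falsifies both off-diagonal assignments while keeping both diagonal ones (each diagonal assignment still reduces every constraint to $R(x,\dots,x)$ or $R(y,\dots,y)$, true by $\varepsilon$-validity). Hence $R(u^{(a)})\wedge R(u^{(\text{swap }a)})$ defines $(x=y)$ with no quantifiers and no equality, establishing $(x=y)\in\closnexneq{\Gamma}$. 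I would write out the four-case truth check ($x=y=0$, $x=y=1$, $x=1\,y=0$, $x=0\,y=1$) explicitly for this two-constraint formula, as that is the only computation needed.
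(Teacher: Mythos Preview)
Your proposal is correct and is essentially the paper's proof: pick a non-tuple $a\notin R$ (automatically $a\neq 0^k,1^k$ since $R$ is $\varepsilon$-valid), substitute $x,y$ into the argument positions according to the pattern of $a$, and conjoin with the swapped version to kill both off-diagonal assignments. The only difference is cosmetic: your detour through the case distinction on whether $\overline{a}\in R$ is unnecessary, since the swap trick $R(u^{(a)})\wedge R(u^{(\text{swap }a)})$ already works uniformly --- the paper goes straight to $M(x,y)\wedge M(y,x)$ without that case split.
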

\begin{proof}
Let $R\in \Gamma$ be a $k$-ary relation. Since $R\ne\{0,1\}^k$ there is an 
$m \notin R$ and $m \neq 0^k$ and $m \neq 1^k$.
For $i\in\{0,1\}$, set $V_i=\{x\mid x\in V, m(x)=i\}$.
We observe that the sets	$V_0$ and $V_1$ are nonempty (since $m \neq 0^k$ and $m \neq 1^k$).
Denote by $C$ the $R$-constraint $C=R(x_1,\ldots , x_k)$.
Set $M(x,y)=C[V_0/x,\, V_1/y]$.
It contains $\{00,11\}$ (since $R$ is $\zerovalid$ and $\onevalid$) but not $01$ (since $m \notin R$).
Finally, we have $M(x,y) \land M(y,x) \equiv (x = y)$.
\end{proof}

\begin{lemma}\label{lem:impl_equality_with_T}
Let $\Gamma$ be a constraint language. If $\Gamma$ is:
\begin{enumerate}\setlength{\itemsep}{\itemseplength}
	\item $\onevalid$ but neither $\zerovalid$ nor $\positive$, then  $(x = y) \land z\in\closnexneq\Gamma$;
	\item $\zerovalid$ but neither $\onevalid$ nor $\negative$, then  $(x = y) \land \neg z\in\closnexneq\Gamma$.
\end{enumerate}
\end{lemma}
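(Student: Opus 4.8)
\textbf{Proof plan for Lemma~\ref{lem:impl_equality_with_T}.}
The two cases are dual, so I will only sketch case~(1); case~(2) follows by swapping the roles of $0$ and $1$ (i.e.\ by considering the complementive image of $R$, or simply repeating the argument with $\F$ in place of $\T$ and $\negative$ in place of $\positive$). As in the previous lemmas, assume w.l.o.g.\ that $\Gamma$ consists of a single $k$-ary relation $R$, and let $V=\{x_1,\dots,x_k\}$, $C = R(x_1,\dots,x_k)$. The plan is to first produce the constant $\T$, and then combine it with an equality-like constraint extracted from a suitable non-tuple of $R$.

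First I would observe that since $\Gamma$ is $\onevalid$ but not $\zerovalid$, Lemma~\ref{lem:implementations}(4) gives $\T\in\closnexneq\Gamma$; so it suffices to implement $x=y$ possibly using the extra relation $\T$ (which by that lemma is itself in $\closnexneq\Gamma$, hence any $\{R,\T\}$-formula is still an $\closnexneq\Gamma$-formula). Next, because $R$ is not $\positive$, $R$ cannot be defined by a positive CNF formula; by the characterisation of $\positive$ relations via the co-clone $\CoCloneIS_1$ (Lemma~\ref{lem:CocloneIS}), non-$\positive$-ness of $R$ means $R$ is not closed under the coordinate-wise operation $m\not\rightarrow m'$ (equivalently, $R\notin\CoCloneIS_1$). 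Hence there are $m,m'\in R$ with $m\not\rightarrow m' \notin R$, where $(m\not\rightarrow m')(x) = m(x)\land\neg m'(x)$. Set $V_0 = \{x\in V : m(x)=0\text{ or }m'(x)=1\}$ and $V_1 = \{x\in V : m(x)=1\text{ and }m'(x)=0\}$, so that on $V_1$ both $m$ and $m'$ witness the ``$1,0$'' pattern and on $V_0$ the vector $m\not\rightarrow m'$ is $0$; then $M(x,y):=C[V_0/x,\,V_1/y]$ contains the tuples $10$ (from $m$) and $00$ (from $m'$) but not $11$ (since $m\not\rightarrow m' \notin R$ and that tuple maps to $11$ under this identification). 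Note $V_1$ is nonempty because $m\ne m'$ somewhere with $m=1,m'=0$; if $V_0$ were empty, $M$ would be a unary relation, which I would handle separately (it then contains $0$ but not $1$, i.e.\ it is $\F$, and $R$ being $\onevalid$ prevents exactly this, so $V_0\neq\emptyset$ — I'd double-check this edge case).

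Finally I would combine: $M(x,y)\land M(y,x)$ contains $00$ but excludes $11$, and it excludes exactly one of $01,10$ by symmetry — in fact $M(x,y)\land M(y,x)$ equals $\{00,01\}$ or $\{00\}$ depending on whether $01\in M$; conjoining with $\T(z)$ is not yet enough, so instead I would look for the relation $\{00,11\}$-style behaviour the statement wants: the target $(x=y)\land z$ as a set of triples is $\{001,111\}$ on $(x,y,z)$. The clean route is: from $M$ (which separates a ``diagonal'' point from an ``off-diagonal'' one and is $\zerovalid$ but not $\onevalid$), form $N(x,y) := M(x,y)\land M(y,x)$, a symmetric relation containing $00$ but not $11$; then $R$ being $\onevalid$ lets me also pin the all-ones tuple, and the constraint $C$ with variables partitioned into ``the diagonal group'' and ``the third coordinate'' should yield $(x=y)\land z$ after intersecting an appropriate substitution instance of $C$ with $N(x,y)$ and $\T$-constraints. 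Concretely: since $R$ is $\onevalid$, pick the non-tuple $m$ from non-triviality to also carve out a coordinate block that must be $1$; I expect $(x=y)\land z \equiv N(x,y)\land \T(z) \land (\text{some }R\text{-substitution linking }z)$, but the honest statement is that $(x=y)$ alone is obtained from $N$ together with $\T$ by the standard trick $N(x,y)\land N(x,z)\land\dots$ only if $N$ is transitive-ish, which it need not be.

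The step I expect to be the main obstacle is precisely this last assembly: extracting a genuine equality $x=y$ (not merely a relation separating $00$ from $11$) while simultaneously producing the forced-true variable $z$, all from a single non-$\positive$, non-$\zerovalid$, $\onevalid$ relation. The resolution I would pursue is to not try to get $x=y$ in isolation at all, but to directly substitute into $C$: choose $m\notin R$ (non-trivial), let $A=\{x:m(x)=1\}$, $B=\{x:m(x)=0\}$ — both nonempty and $B$ nonempty is what fails $\zerovalid$; since $R$ is $\onevalid$, $C[A/z,\,B/\text{(one var)}]$ already forces $z=1$ on a satisfiable relation, and then a second identification using non-$\positive$-ness glues two further coordinates into an equality pair $x,y$ with the pattern $\{0,1\}$ allowed on them but correlated. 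I would work out which coordinates to merge by tracking where $m$, $m'$ and $m\not\rightarrow m'$ disagree, exactly as in the proof of Lemma~\ref{lem:impl_equality}, and verify the resulting relation on $(x,y,z)$ is $\{001,111\}$. This bookkeeping is routine once the right partition of $V$ is identified, so I would not grind through it here.
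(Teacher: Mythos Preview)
Your proposal has a genuine gap at its starting point. You claim that ``not $\positive$'' implies $R$ is not closed under the pointwise operation $m\not\rightarrow m'$, appealing to Lemma~\ref{lem:CocloneIS}. This is wrong on two counts. First, $\positive$ corresponds to $\CoCloneIS_0$ (closure under $\rightarrow$), not $\CoCloneIS_1$. Second, and more seriously, the implication goes the wrong way: Lemma~\ref{lem:CocloneIS} says $\CoCloneIS_0$ is the smallest co-clone \emph{containing} all positive relations, and the Remark immediately following it points out that $\CoCloneIS_0$ strictly contains the positive relations --- equality $(x=y)$ is in $\CoCloneIS_0$ but is not positive. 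So ``$R$ not positive'' does \emph{not} give you witnesses $m_1,m_2\in R$ with $m_1\rightarrow m_2\notin R$. Concretely, take $R(x,y,z)=(x=y)\land z$: this is $\onevalid$, not $\zerovalid$, not $\positive$, yet it is closed under $\rightarrow$, so your witness extraction fails outright on the very relation you are trying to build.

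The paper's proof handles precisely this obstruction by a case split on whether $R\in\CoCloneIS_0$. If $R\notin\CoCloneIS_0$, one does get witnesses $m_1,m_2$ with $m_1\rightarrow m_2\notin R$, and then a \emph{four}-block partition $V_{ij}=\{x:m_1(x)=i,\,m_2(x)=j\}$ (not your two-block one, which does not keep $m,m'$ constant on $V_0$) yields a $4$-ary relation $M$ containing $1111,0101,0011$ but not $1011$; identifying the last two coordinates and conjoining with $\T$ then gives $(x=y)\land z$ after symmetrising. If $R\in\CoCloneIS_0$, one instead uses the structural fact that every relation in $\CoCloneIS_0$ is a conjunction of positive clauses and equalities; non-positiveness forces at least one genuine equality $x_i=x_j$ to be entailed by $R$ without $x_i$ being forced true, and collapsing the equality class to one variable and the remaining variables to another produces $(x=y)\land z$ directly. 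Your sketch never reaches this second case, and the handwaving at the end (``routine bookkeeping'') does not cover it.
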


\begin{proof}
We only prove the first case, the second case can be treated analogously /
dually.
Let w.l.o.g. $\Gamma = \{R\}$, thus $R$ is $\onevalid$ but neither $\zerovalid$ nor
$\positive$. We perform a case distinction according to whether $R\in
\CoCloneIS_0$ or not.

Let us first suppose that $R\in
\CoCloneIS_0$.
 According to \cite{CreignouKZ08}
the relation $R$ ($\in \CoCloneIS_0$) can be written as
a conjunction of positive clauses and equalities. 
If $R$ can be written with no equality, then $R$ is positive, a contradiction.
So, any representation
of $R$ as a conjunction of positive clauses and equalities requires at least one
equality. Suppose thus w.l.o.g that
$R(x_1,\ldots, x_k)\models (x_1=x_2) $,
while   $ R(x_1,\ldots, x_k)\not\models x_1$,
%and $R(x_1,\ldots, x_k)\not\models \neg x_1$
which means that the equality $x_1=x_2$ can be transitively deduced from the
equality constraints occurring in any representation of $R$ (note that such a configuration necessarily occurs,
 otherwise  no equality constraints would be needed, it would be sufficient to write $(x_1)\wedge(x_2)$, contradicting the fact that $R$ is not positive). 
%as a conjunctionof positive clauses and equalities
Let $W:=\{x_i\mid
 R(x_1,\ldots, x_k)\models (x_1=x_i)\}$. Observe
that $W'=V\setminus (W\cup\{x_1\})$ is nonempty for $R$ is not
$\zerovalid$.
Denote by $C$ the $R$-constraint $C=R(x_1,\ldots , x_k)$.
Consider the constraint $M(x_1, x_2, y)= C[W/x_2, W'/y]$.  
 One verifies that
$M(x_1, x_2, y)\equiv (x_1=x_2)\land y$.
Therefore, 
$(x=y)\land z\in\closnexneq\Gamma$.

Let  us now suppose that $R\not\in
\CoCloneIS_0$. According to Lemma~\ref{lem:CocloneIS} there are $m_1, m_2 \in R$
such that $m_1 \rightarrow m_2 \notin R$.
For $i,j\in\{0,1\}$, set $V_{i,j}=\{x\mid x\in V, m_1(x)=i\land m_2(x)=j\}$.
Observe that the sets	$V_{0,0}$ and $V_{1,0}$ are nonempty
(otherwise $m_2 = m_1 \rightarrow m_2$ or $1^k = m_1 \rightarrow m_2$, a contradiction).
Denote by $C$ the $R$-constraint $C=R(x_1,\ldots , x_k)$.
Set $M(x_1,x_2,x_3,x_4)=C[V_{0,0}/x_1,\, V_{1,0}/x_2,\, V_{0,1}/x_3,\, V_{1,1}/x_4]$.
It contains $\{1111, 0101, 0011\}$ (since resp. $R$ is $\onevalid$, $m_1 \in R$, $m_2 \in R$) but not $1011$ (since $m_1 \rightarrow m_2 \notin R$).
We conclude that $M(x,y,z,z) \land \T(z)$ contains $\{111, 001\}$ but not $101$.
Finally, we verify that $M(x,y,z,z) \land M(y,x,z,z) \land \T(z) \equiv (x = y) \land z$.
Since by Lemma~\ref{lem:implementations}, $\T\in\closnexneq\Gamma$, we obtain
$(x=y)\land z\in\closnexneq\Gamma$.
\end{proof}

\begin{lemma}\label{lem:impl_not_schaefer_equality}  
Let $\Gamma$ be a constraint language. If $\Gamma$ is not $\Schaefer$, then
  $(x = y)  \in\closneq\Gamma$. In particular, for any relation $R$, if
$R\in\clos\Gamma$ and  $\Gamma$ is not $\Schaefer$, then $R\in\closneq\Gamma$.
\end{lemma}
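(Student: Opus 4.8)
The plan is to show that if $\Gamma$ is not Schaefer, then the equality relation can be expressed as a $\Gamma$-formula with existential quantifiers but \emph{without} using the equality relation itself. The strategy is a case distinction driven by the classical properties ($\zerovalid$, $\onevalid$, $\complementive$) combined with the expressibility lemmas already established (Lemmas~\ref{lem:implementations}, \ref{lem:impl_equality}, \ref{lem:impl_equality_with_T}) and the co-clone inclusion facts (Lemma~\ref{lem:impl_CocloneInclusions}). As usual we may assume $\Gamma = \{R\}$ for a single $k$-ary relation $R$.

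First I would dispose of the easy cases. If $\Gamma$ is both $\zerovalid$ and $\onevalid$, then $(x=y)\in\closnexneq\Gamma\subseteq\closneq\Gamma$ directly by Lemma~\ref{lem:impl_equality}. If $\Gamma$ is $\complementive$ but neither $\zerovalid$ nor $\onevalid$, then by Lemma~\ref{lem:implementations}(1) we get $(x\ne y)\in\closnexneq\Gamma$, and then $(x=y)\equiv\exists z\,\bigl((x\ne z)\wedge(z\ne y)\bigr)$, so $(x=y)\in\closneq\Gamma$. The remaining cases are those in which $\Gamma$ is not $\complementive$ and not both-valid, i.e. $\Gamma$ is either ($\onevalid$ and not $\zerovalid$), or ($\zerovalid$ and not $\onevalid$), or (neither $\zerovalid$ nor $\onevalid$). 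These should be handled by a uniform idea: produce some relation that forces a constant together with an implication, and then build equality from it.

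For these remaining cases I would argue as follows. In the case where $\Gamma$ is neither $\zerovalid$, $\onevalid$, nor $\complementive$, Lemma~\ref{lem:implementations}(3) gives $(x\wedge\neg y)\in\closnexneq\Gamma$; from this one recovers both constants ($\T(x)\equiv\exists y\,(x\wedge\neg y)$ and $\F(y)\equiv\exists x\,(x\wedge\neg y)$), hence also $(x\to y)\equiv\exists u\,\bigl(\neg(x\wedge\neg u)\,?\bigr)$— actually more simply, since $\Gamma$ is not Schaefer it is in particular not $\horn$, and combined with the available constants one has enough to implement $(x=y)$ using Lemma~\ref{lem:impl_equality_with_T} or a direct substitution argument; alternatively one appeals to Lemma~\ref{lem:impl_CocloneInclusions} which tells us $\clos\Gamma$ contains \emph{all} relations, in particular $(x=y)\in\clos\Gamma$, and the point is then to remove the equality constraints and quantifiers as far as possible. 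The cleanest route is: Lemma~\ref{lem:impl_CocloneInclusions} shows that in each of the three remaining cases $\clos\Gamma$ contains a suitably rich set of relations (all relations, or all $\onevalid$ relations, or all $\zerovalid$ relations); in the two single-valid cases $(x=y)$ is itself $\onevalid$ (resp. $\zerovalid$), so $(x=y)\in\clos\Gamma$ in all remaining cases. Then it remains to show $(x=y)\in\closneq\Gamma$, i.e. that the equality constraints appearing in a $\Gamma\cup\{=\}$-representation of $(x=y)$ can be eliminated — for this one uses that $\Gamma$ already expresses (via existential quantification, no equality) enough structure: in the single-valid cases Lemma~\ref{lem:impl_equality_with_T} gives $(x=y)\wedge z$ (resp. $(x=y)\wedge\neg z$) directly in $\closnexneq\Gamma$, whence $(x=y)\equiv\exists z\,\bigl((x=y)\wedge z\bigr)\in\closneq\Gamma$; in the all-valid-free case one uses $(x\wedge\neg y)$ together with Lemma~\ref{lem:impl_equality_with_T}-style reasoning, or simply $(x=y)\equiv\exists z\,\bigl((x\wedge\neg z \text{ forcing})\ldots\bigr)$ built from $(x\to y)$ and $(y\to x)$, both of which lie in $\closnexneq\Gamma$ via the constants. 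Finally the ``in particular'' clause follows immediately: $R\in\clos\Gamma$ means $R$ has a $\Gamma\cup\{=\}$-definition with existential quantifiers, and replacing each occurrence of $=$ by its $\closneq\Gamma$-definition (which introduces only further existential variables, no equality) yields $R\in\closneq\Gamma$.

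The main obstacle is the bookkeeping in the single-valid cases: one must check carefully that $(x=y)$ (and, crucially, the auxiliary relations like $(x=y)\wedge z$) really are expressible \emph{without} the equality relation, relying on the structural argument inside Lemma~\ref{lem:impl_equality_with_T} — that argument used that $R$ is not $\positive$ (resp. not $\negative$), which is guaranteed here precisely because $R$ is not Schaefer (positive relations are Horn, negative relations are dual Horn). One should double-check that ``not Schaefer'' indeed implies ``not $\positive$'' and ``not $\negative$'', and that the hypotheses of Lemma~\ref{lem:impl_equality_with_T} (not $\zerovalid$, not $\onevalid$ as appropriate, plus not $\positive$/$\negative$) are all met in each sub-case; the rest is routine substitution and quantifier elimination.
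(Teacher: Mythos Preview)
Your case analysis and three of the four cases match the paper exactly: the both-valid case via Lemma~\ref{lem:impl_equality}, the single-valid cases via Lemma~\ref{lem:impl_equality_with_T} followed by existentially quantifying the extra variable (and yes, ``not Schaefer'' $\Rightarrow$ ``not $\dualhorn$'' $\Rightarrow$ ``not $\positive$'', so that lemma applies), and the $\complementive$-neither-valid case via Lemma~\ref{lem:implementations}(1) together with $(x=y)\equiv\exists z\,(x\neq z)\land(z\neq y)$. The ``in particular'' clause is also fine.

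The genuine gap is the case where $\Gamma$ is neither $\zerovalid$, nor $\onevalid$, nor $\complementive$. None of your three suggestions there goes through. Invoking Lemma~\ref{lem:impl_CocloneInclusions} to get $(x=y)\in\clos\Gamma$ and then ``removing the equality constraints'' is circular: that removal is exactly the statement you are proving. The claim that $(x\to y)\in\closnexneq\Gamma$ ``via the constants'' is unsupported --- having $\T$ and $\F$ from $(x\land\neg y)$ does not by itself yield implication, and Lemma~\ref{lem:implementations}(2) requires $\Gamma$ to be both $\zerovalid$ and $\onevalid$, which fails here. Finally, Lemma~\ref{lem:impl_equality_with_T} has a $\onevalid$ (resp.\ $\zerovalid$) hypothesis that is not met, so ``-style reasoning'' is not enough without a concrete replacement argument. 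The paper's fix is to build \emph{disequality} directly: since $R$ is not $\horn$, pick $m_1,m_2\in R$ with $m_1\land m_2\notin R$ and identify variables along the four blocks $V_{ij}$ to get a relation $M_1(u,x,y,v)$ containing $0011,0101$ but not $0001$; dually, non-$\dualhorn$ness gives $M_2$ containing $0011,0101$ but not $0111$. Then $M_1(f,x,y,t)\land M_2(f,x,y,t)\land(t\land\neg f)$ (the last conjunct from Lemma~\ref{lem:implementations}(3)) is equivalent to $(x\neq y)\land t\land\neg f$, and existentially quantifying $t,f$ gives $(x\neq y)\in\closneq\Gamma$, whence equality follows as in your $\complementive$ case.
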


\begin{proof}
We perform a case distinction according to whether $\Gamma$ is 0/1-valid or not.

If $\Gamma$ is $\zerovalid$ and $\onevalid$, then according to
Lemma~\ref{lem:impl_equality} there is a $\Gamma$-formula equivalent to $(x =
y)$.

If $\Gamma$ is not $\zerovalid$ but $\onevalid$ (resp. not $\onevalid$ but $\zerovalid$), then according to
Lemma~\ref{lem:impl_equality_with_T} there is a $\Gamma$-formula $\varphi(x,y,z)$ equivalent to
$(x = y) \land z$ (resp. $(x = y) \land \neg z$).
Hence, $\exists z\; \varphi(x,y,z)$ fulfills our needs.

At last let $\Gamma$ be neither $\zerovalid$ nor $\onevalid$. It suffices here
to show that we are able to express disequality, $(x \neq y)$, since
$(x = y) \equiv \exists z (x \neq z) \land (z \neq y)$.
If $\Gamma$ is $\complementive$ we conclude by
Lemma~\ref{lem:implementations}, first item.
Therefore, suppose now that $\Gamma$ is not
$\complementive$.
Let w.l.o.g. $\Gamma = \{R\}$.
% that is, in particular $R$ is neither $\horn$ nor $\dualhorn$.
Since $R$ is not $\horn$, there are $m_1, m_2 \in R$ such that $m_1 \land m_2 \notin R$.
For $i,j\in\{0,1\}$, set $V_{i,j}=\{x\mid x\in V, m_1(x)=i\land m_2(x)=j\}$.
Observe that the sets	$V_{0,1}$ and $V_{1,0}$ are nonempty
(otherwise $m_2 = m_1 \land m_2$ or  $m_1 = m_1 \land m_2$, a contradiction).
Denote by $C$ the $R$-constraint $C=R(x_1,\ldots , x_k)$.
Set $M_1(u,x,y,v)=C[V_{0,0}/u,\, V_{0,1}/x,\, V_{1,0}/y,\, V_{1,1}/v]$.
It contains $\{0011, 0101\}$ (since $m_1,m_2 \in R$)
but it does not contain $0001$ (since $m_1 \land m_2 \notin R$).
Further, since $R$ is not $\dualhorn$, there are $m_3, m_4 \in R$ such that $m_3 \lor m_4 \notin R$.
For $i,j\in\{0,1\}$, set $V'_{i,j}=\{x\mid x\in V,\; m_3(x)=i\, \land\, m_4(x)=j\}$.
Observe that the sets	$V'_{0,1}$ and $V'_{1,0}$ are nonempty.
Set $M_2(u,x,y,v)=C[V'_{0,0}/u,\, V'_{0,1}/x,\, V'_{1,0}/y,\, V'_{1,1}/v]$.
It contains $\{0011, 0101\}$ (since $m_3,m_4 \in R$)
but it does not contain $0111$ (since $m_3 \lor m_4 \notin R$).
Finally consider the $\{R, (t \land \neg f)\}$-formula
$$M(x,y,f,t) = M_1(f,x,y,t) \land M_2(f,x,y,t) \land (t
\land \neg f).$$
One verifies that it is equivalent to $(x \neq y) \land
(t \land \neg f)$.
Due to the third item of Lemma~\ref{lem:implementations},
$(t \land \neg f)$ is expressible as a $\Gamma$-formula, and therefore so is
$M(x,y,f,t)$.
We conclude observing that $\exists t,f\; M(x,y,f,t)$ is
equivalent to $(x \neq y)$.
\end{proof}

%------------------------------------------------------------------------------------------
\section{The complexity of $\ARG$}\label{sec:complexity_arg}
%------------------------------------------------------------------------------------------
 The complexity of deciding the existence of an argument rests on two sources:
finding a candidate support, and checking that it  is consistent
and proves $\alpha$. Observe that the minimality condition plays no role
here: there exists a minimal support if and only if there exists a support.
Therefore, the problem $\ARG(\Gamma)$ lies in the class $\SigPtwo$. When there
is a natural candidate as a support, then the complexity of $\ARG(\Gamma)$
drops to the class
$\coNP$, whereas when satisfiability and implication are tractable then
the complexity drops to the class $\NP$.

\begin{proposition}\label{prop:schaefer_notvalid}
Let $\Gamma$ be a constraint language which is $\Schaefer$, but  neither
$\onevalid$, nor $\zerovalid$.
Then $\ARG(\Gamma)$ is $\NP$-complete.
\end{proposition}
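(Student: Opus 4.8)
The plan is to show membership in $\NP$ and $\NP$-hardness separately. For membership, I would exploit the hypothesis that $\Gamma$ is $\Schaefer$: by Schaefer's dichotomy both $\SAT(\Gamma)$ and, by the classification of \cite{SchnoorS08}, the implication problem $\IMP(\Gamma)$ are decidable in polynomial time. So the $\NP$-algorithm guesses a subset $\Phi\subseteq\Delta$, checks in polynomial time that $\Phi$ is consistent and that $\Phi\models\alpha$ (both $\Gamma$-formula tasks), and then — for minimality — verifies that for every $\delta\in\Phi$ we have $\Phi\setminus\{\delta\}\not\models\alpha$; this is a polynomial number of polynomial-time implication checks. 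Hence the whole verification is in $\P$ and $\ARG(\Gamma)\in\NP$. I would note here (as the paragraph preceding the proposition already hints) that the minimality condition costs nothing once implication is tractable.

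For hardness I would reduce from a suitable $\NP$-complete satisfiability problem. Since $\Gamma$ is $\Schaefer$ but neither $\zerovalid$ nor $\onevalid$, it is a ``non-trivial'' language in the sense that it can express the constant $\T$ and $\F$ (or at least disequality), and more to the point one expects $\clos{\Gamma}$ to be large enough to express some $\NP$-hard structure. The cleanest route is to use the reduction from abduction sketched in the preliminaries: the map $(\varphi,H,q)\mapsto(\Delta,\alpha)$ with $\Delta:=\{\varphi\}\cup H$ and $\alpha:=q\wedge\varphi$ transfers $\NP$-hardness of $\ABD[\HP](\Gamma)$ to $\ARG(\Gamma)$, provided the image is still expressible over $\Gamma$ (the singleton hypotheses and the conjunction with $\varphi$ must be rewritten as $\Gamma$-formul\ae, which is possible using the expressibility lemmas of Section~\ref{subsec:technical_lemmas}, in particular since $\Gamma$ non-valid gives us $\T,\F$ or at least enough to encode unit literals). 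Alternatively, and perhaps more self-containedly, I would directly encode $\ThreeSAT$ (or $\PosOneThreeSAT$) by putting into $\Delta$ one gadget formula per literal together with the clause constraints, arranging $\alpha$ to be a fixed unsatisfiable-unless-the-formula-is-satisfiable claim; the ``minimal support'' then picks out exactly a satisfying assignment.

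The main obstacle is the hardness direction, specifically handling the four sub-cases of $\Schaefer$ ($\horn$, $\dualhorn$, $\bijunctive$, $\affine$) in a uniform way, since ``$\Gamma$ is $\Schaefer$ but not $\varepsilon$-valid'' does not by itself tell us which expressibility lemma applies. I expect one has to split: if $\Gamma$ is $\complementive$ (and not $\varepsilon$-valid) then by Lemma~\ref{lem:implementations}(1) we get $(x\neq y)\in\closnexneq\Gamma$; otherwise $\Gamma$ is not $\complementive$ and not $\varepsilon$-valid, so by Lemma~\ref{lem:implementations}(3) we get $(x\wedge\neg y)\in\closnexneq\Gamma$, hence both constants $\T$ and $\F$. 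In each case one then checks that the chosen reduction (from $\ABD[\HP]$, whose $\SigPtwo$-/$\NP$-hardness for non-$\varepsilon$-valid, non-$\Schaefer$ — and, via the known abduction classification, also relevant sub-Schaefer — languages we may invoke, or directly from a satisfiability problem) produces only formul\ae{} that lie in $\closnexneq\Gamma$, so that the $\leqlogm$-reduction induced by local replacement goes through and stays inside $\ARG(\Gamma)$. The bookkeeping to make sure the claim $\alpha$ remains an honest $\Gamma$-formula (no existential quantifiers, by the discussion in Section~\ref{subsec:coclone}) is the delicate point.
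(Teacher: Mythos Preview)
Your overall architecture matches the paper: $\NP$-membership via polynomial-time $\SAT(\Gamma)$ and $\IMP(\Gamma)$, and hardness via the case split on $\complementive$ using Lemma~\ref{lem:implementations}(1) and~(3). Two remarks on membership: the minimality check you describe is correct but unnecessary, since for $\ARG$ a support exists iff a minimal one does; the paper simply guesses $\Phi$ and checks consistency and entailment.

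The genuine gap is in your hardness argument. The abduction route you call ``cleanest'' does not go through, at least not in the $\complementive$ branch. The map $(\varphi,H,q)\mapsto(\{\varphi\}\cup H,\;q\wedge\varphi)$ needs each $h\in H$ and the literal $q$ to be $\Gamma$-formul\ae, i.e.\ you need $\T\in\closnexneq\Gamma$. But when $\Gamma$ is $\complementive$ and not $\varepsilon$-valid, Lemma~\ref{lem:implementations} gives you only $(x\neq y)$, not $\T$ or $\F$; unit literals are \emph{not} expressible (any $\complementive$ relation satisfied by $1^k$ is also satisfied by $0^k$). So ``$\Gamma$ non-valid gives us $\T,\F$ or at least enough to encode unit literals'' is false in this case. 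Moreover, you appeal to $\NP$-hardness of $\ABD[\HP](\Gamma)$ for Schaefer $\Gamma$, but the paper only records the abduction hardness results for \emph{non}-Schaefer languages; you would have to import and verify a separate classification, and even then the reduction above would still fail to stay inside $\Gamma$-formul\ae\ in the $\complementive$ case.

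The paper therefore takes precisely your ``alternative'' route, but with concrete constructions that are not obvious and that you only gesture at. In the $\complementive$ case it reduces $\ThreeSAT$ to $\ARG(\{x\neq y\})$: fresh variables $x'_j$ play the role of $\neg x_j$, a fresh $f$ acts as a reference point, and the claim forces $c_i\neq f$ for each clause together with $\bigwedge_j(x_j\neq x'_j)$; the knowledge base contains $x_j\neq f$, $x'_j\neq f$, the block $\bigwedge_j(x_j\neq x'_j)$, and constraints $x_j\neq c_i$ / $x'_j\neq c_i$ encoding literal occurrences. In the non-$\complementive$ case it reduces $\PosOneThreeSAT$ to $\ARG(\{x\wedge\neg y\})$: for each clause $(x_i\vee y_i\vee z_i)$ three formul\ae\ assert $c_i$ together with exactly one of $x_i,y_i,z_i$ true and the other two false, and the claim is $\bigwedge_i(c_i\wedge\neg f)$. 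These gadgets are the missing content of your sketch; without them (or something equivalent over $\{\neq\}$ and $\{x\wedge\neg y\}$ respectively) the hardness half is not established.
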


\begin{proof}
The $\NP$-membership follows from the fact that since $\Gamma$ is $\Schaefer$ 
$\SAT(\Gamma)$ and $\IMP(\Gamma)$ are in $\P$ and thus a guessed argument can be
verified in polynomial time.
For the hardness proof we perform a case distinction according to whether
$\Gamma$
is $\complementive$ or not. Suppose first that every relation 
in $\Gamma$ is $\complementive$. We prove the following sequence of reductions: 
$$\ThreeSAT \leqlogm \ARG(\{x\neq y\}) \leqlogm \ARG(\Gamma).$$
The last reduction holds by Item 1  in  Lemma~\ref{lem:implementations}.
For the first 
 reduction let $\varphi = \bigwedge_{i=1}^k C_i$ be an instance
of $\ThreeSAT$  where $\var{(\varphi)} = \{x_1, \dots, x_n\}$.
Let $c_1, \dots, c_k$, $x'_1, \dots, x'_n$, $f$ be fresh variables.
We map $\varphi$ to $(\Delta, \alpha)$ where
$$\begin{array}{rl}
\Delta   =\; &  \bigcup_{j=1}^n\{x_j \neq f, x'_j \neq f\}\\
			\cup\; &  \{\bigwedge_{j=1}^n (x_j \neq x'_j)\}\;\\ 
		  \cup\; &  \bigcup_{i=1,\ldots, k, j=1,\ldots, n}\{x_j \neq
c_i \mid \neg x_j \in C_i\}\cup \{x'_j \neq c_i \mid x_j \in C_i\}, \\
\alpha   =\; &  \bigwedge_{i=1}^k (c_i \neq f) \land \bigwedge_{j=1}^n (x_j \neq x'_j).
\end{array}$$
One can check that $\varphi$ is satisfiable if and only if there exists a $\Phi\subseteq\Delta$ such that
$(\Phi,\alpha)$ is an argument. Intuitively, $x_j'$ plays the role of $\neg
x_j$, 
for every $j$ at most one of the constraints  $x_j \neq f$ and 
$x_j' \neq f$  can appear in the support of an argument, thus
allowing  to identify true literals, while for every $i$  the
constraints  $x_j \neq c_i$ and $x_j' \neq c_i$  are used to certify
that the clause $C_i$ is satisfied.

Second, let us suppose that $\Gamma$ is not $\complementive$. We prove the
following:
$$\PosOneThreeSAT \leqlogm \ARG( \{x \land \neg{y}\}) \leqlogm \ARG(\Gamma).$$
The last reduction follows by  Item 3  in 
Lemma~\ref{lem:implementations}.
For the first one we start  from the $\NP$-complete problem $\PosOneThreeSAT$ in which the
instance is a set of positive 3-clauses and the question is to decide whether
there exists a truth assignment such that each clause contains exactly one true
variable.  
Let $\varphi = \bigwedge_{i=1}^k (x_i \lor y_i \lor z_i)$ be an instance of the
first problem and let
$c_1, \dots, c_k$, $f$ be fresh variables.
We map $\varphi$ to $(\Delta, \alpha)$ where
$$\begin{array}{rl}
\Delta  =\; & \bigcup_{i=1}^k\{ c_i \land x_i \land \neg{y_i} \land \neg{z_i}  \land \neg{f}\} \\
     \cup\;	& \bigcup_{i=1}^k\{ c_i \land \neg{x_i} \land y_i \land \neg{z_i} \land \neg{f}\} \\
		 \cup\; & \bigcup_{i=1}^k\{ c_i \land \neg{x_i} \land \neg{y_i} \land z_i \land \neg{f}\}, \\
\alpha =\;& (c_1 \land \neg{f}) \land \dots \land (c_k \land \neg{f}).
\end{array}$$
Observe that every formula in $\Delta$ can be written as  a $\{x \land \neg
y\}$-formula.
One can check that there is a truth assignment such that each clause $C_i$ contains exactly one variable set to true if and only if
$(\Delta, \alpha)$ admits an argument. Observe that for every $i$ such an argument contains exactly one of the three formul\ae\ involving
$c_i$, thus providing a desired satisfying assignment.
\end{proof}

\begin{proposition}\label{prop:notschaefer_notvalid}
Let $\Gamma$ be a constraint language which is neither $\Schaefer$, nor
$\onevalid$, nor $\zerovalid$.
Then $\ARG(\Gamma)$ is $\SigPtwo$-complete.
\end{proposition}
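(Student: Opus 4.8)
The plan is as follows. Membership in $\SigPtwo$ was already observed before the proposition, so it only remains to prove $\SigPtwo$-hardness. I would obtain it by a logspace reduction from $\ABD[\HP](\Gamma)$, which under the present hypotheses is $\SigPtwo$-complete by the abduction classification recalled in Section~\ref{sec:preliminaries} (from \cite{NoZa08}), since $\Gamma$ is neither $\Schaefer$, nor $\zerovalid$, nor $\onevalid$. Following the pattern of Proposition~\ref{prop:schaefer_notvalid} I would split according to whether $\Gamma$ is $\complementive$. The guiding idea is to imitate the simple reduction $\ABD[\HP]\leqlogm\ARG$ from Section~\ref{sec:preliminaries}, which sends $(\varphi,H,q)$ to $(\{\varphi\}\cup H,\ q\wedge\varphi)$: there, any support of an argument for the claim is forced to consist of $\varphi$ together with a subset $E$ of the unit clauses $\{x\mid x\in H\}$, and that subset is precisely an explanation. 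The obstruction for $\Gamma$-formul\ae\ is that this uses the constant $\T$ both in the knowledge base and in the claim, and $\T$ is $\onevalid$, hence in general neither a $\Gamma$-formula nor expressible by one. I would circumvent this by introducing a fresh \emph{falsity marker} $f$: instead of asserting ``$x$ is true'' I assert a constraint tying $x$ to $f$, and the claim asks that $q$ be opposite to $f$.

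Concretely, if $\Gamma$ is not $\complementive$ then, being also not $\zerovalid$ and not $\onevalid$, we have $(x\wedge\neg y)\in\closnexneq\Gamma$ by the third item of Lemma~\ref{lem:implementations}, and I would map an instance $(\varphi,H,q)$ with $H=\{x_1,\dots,x_n\}$ to $(\Delta,\alpha)$, with fresh variables $f,t$, where
\[
\Delta=\{\varphi\}\cup\{\,x_i\wedge\neg f\mid 1\le i\le n\,\}\cup\{t\wedge\neg f\},\qquad
\alpha=q\wedge\neg f .
\]
If $\Gamma$ is $\complementive$ then $(x\neq y)\in\closnexneq\Gamma$ by the first item of Lemma~\ref{lem:implementations}, and I would map $(\varphi,H,q)$ to $(\Delta,\alpha)$, with a fresh variable $f$, where
\[
\Delta=\{\varphi\}\cup\{\,x_i\neq f\mid 1\le i\le n\,\},\qquad
\alpha=q\neq f .
\]
In both cases every member of $\Delta$ and the claim $\alpha$ is a $\closnexneq\Gamma$-formula, hence, after expanding the fixed templates for $x\wedge\neg y$ resp.\ $x\neq y$ (which introduces no new variables), a genuine $\Gamma$-formula; and $\varphi$ itself is already a $\Gamma$-formula, which is why I reduce from $\ABD[\HP](\Gamma)$ rather than from abduction over some fixed language. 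The mapping is clearly logspace computable.

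For correctness I would argue as follows. Any $\Phi\subseteq\Delta$ that entails $\alpha$ must contain $\varphi$, because no conjunction of the remaining members of $\Delta$ mentions $q$ (we have $q\notin H$, and $f,t$ are fresh), so such a conjunction has a model falsifying $\alpha$. Writing a candidate support as $\{\varphi\}$ together with the marker constraints for the $x_i$ with $i\in E$ for some $E\subseteq\{1,\dots,n\}$ (plus possibly $t\wedge\neg f$, in the non-complementive case), one checks that it is consistent iff $\varphi\wedge\bigwedge_{i\in E}x_i$ is satisfiable, and that it entails $\alpha$ iff $\varphi\wedge\bigwedge_{i\in E}x_i\models q$; in the complementive case this last equivalence uses that every member of the support defines a $\complementive$ relation, so the models of their conjunction come in complementary pairs and one only has to inspect those with $f=0$. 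Applying the same equivalence to subsets of $\Phi$, minimality of the support translates into minimality of $E$ with respect to ``$\varphi\wedge\bigwedge_{i\in E}x_i\models q$'', and — together with the case $E=\emptyset$ discussed below — this yields ``$(\Delta,\alpha)$ has an argument if and only if $(\varphi,H,q)$ is a yes-instance''.

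The point I expect to be delicate is the borderline case $E=\emptyset$, in which $f$ is a priori unconstrained. In the non-complementive case the extra formula $t\wedge\neg f$ deals with it: it pins $f$ to $0$, and $\{\varphi,t\wedge\neg f\}$ is a minimal support precisely when $\varphi$ is consistent and already entails $q$. In the complementive case this situation cannot arise at all, since a satisfiable $\complementive$ formula never entails a literal, so $E=\emptyset$ is never a valid explanation; in exchange one must check that the encoding stays faithful even though no variable can be forced to a constant, and making this interplay (together with the accompanying minimality bookkeeping) go through via the complementary-pair observation is, I think, the main obstacle of the proof.
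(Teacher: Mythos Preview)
Your proposal is correct and follows essentially the same approach as the paper: reduce from $\ABD[\HP](\Gamma)$, split on whether $\Gamma$ is $\complementive$, and encode the hypotheses via $(x\neq f)$ or $(x\wedge\neg f)$ using Lemma~\ref{lem:implementations}. The only cosmetic difference is that in the non-$\complementive$ case the paper takes $\alpha=(q\wedge\neg f)\wedge(t\wedge\neg f)$, which forces the anchor $t\wedge\neg f$ into every support and thereby avoids the separate $E=\emptyset$ discussion that you (correctly) carry out.
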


\begin{proof}
We give a reduction from $\ABD[\HP](\Gamma)$ which is $\SigPtwo$-complete according to \cite{NoZa08}.
We perform a case distinction according to whether $\Gamma$ is $\complementive$
or not.

Suppose first that every relation in $\Gamma$ is $\complementive$.
We show:
$$\ABD[\HP](\Gamma) \leqlogm \ARG(\Gamma \cup \{x \neq y\}) \leqlogm \ARG(\Gamma).$$
The last reduction follows by Item 1  in   Lemma~\ref{lem:implementations}.
For the first one
we map $(\varphi, H, q)$, an instance of $\ABD(\Gamma)$, to $(\Delta,
\alpha)$, where we introduce a fresh variable $f$ and define
%\begin{align*}
%\Delta =\; & \{\varphi\}\; \cup\; \{(h \neq f) \mid h \in H\},\\
%\alpha =\; & (q \neq f).
%\end{align*}
%
$$\Delta =\; \{\varphi\}\; \cup\; \{(h \neq f) \mid h \in H\},\ \ \ \ \ \alpha =\; (q \neq f).$$
The proof that the reduction is correct relies on the fact that all formul{\ae} 
occurring in the so obtained instance are $\complementive$, i.e., it suffices to observe correctness for $(\Delta[f/0], \alpha[f/0])$.

In the case where  $\Gamma$ is not $\complementive$ we show
$$\ABD[\HP](\Gamma) \leqlogm \ARG(\Gamma \cup \{x \land \neg{y}\}) \leqlogm \ARG(\Gamma).$$
The last reduction follows by   Item 3  in  Lemma
Lemma~\ref{lem:implementations}. For the first one 
we map $(\varphi, H, q)$, an instance of the first problem, to $(\Delta,
\alpha)$,
where we introduce two fresh variables $t, f$ and define 
%\begin{align*}
%\Delta =\; & \{\varphi\}\; \cup\; \{h\land \neg f\mid h\in H\}\; \cup\; \{t\land \neg f\}\\
%\alpha =\; & (q \land \neg{f})\land  (t \land \neg{f}).
%\end{align*}
%
$$\Delta =\; \{\varphi\}\; \cup\; \{h\land \neg f\mid h\in H\}\; \cup\; \{t\land \neg f\},\ \ \ \ \ \alpha =\; (q \land \neg{f})\land  (t \land \neg{f}).$$
Observe that $\Delta$ is made of $\Gamma$- and $\{x \land \neg{y}\}$-formul{\ae}.
It is easy to check that  $(\varphi, H, q)$ is a
positive instance of the abduction problem if and only if there exists a
support for $\alpha$ in $\Delta$.
\end{proof}
We are now in a position to state the classification theorem.

\begin{theorem}
\label{thm:arg}
 Let $\Gamma$ be a constraint language.
The decision problem $\ARG(\Gamma)$ is
\begin{enumerate}\setlength{\itemsep}{\itemseplength}
 \item in  $\P$ if $\Gamma$ is $\Schaefer$ and $\varepsilon$-valid,
 \item $\NP$-complete if $\Gamma$ is $\Schaefer$ and not $\varepsilon$-valid,
  \item $\coNP$-complete if $\Gamma$ is not $\Schaefer$ and is $\varepsilon$-valid,
 \item $\SigPtwo$-complete if $\Gamma$ is not $\Schaefer$ and not
$\varepsilon$-valid.
\end{enumerate}
\end{theorem}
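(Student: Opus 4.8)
The plan is to establish the four cases of Theorem~\ref{thm:arg} by combining the two hardness propositions already proved (Proposition~\ref{prop:schaefer_notvalid} and Proposition~\ref{prop:notschaefer_notvalid}) with matching upper bounds, plus two further arguments for the $\varepsilon$-valid cases. Recall that $\ARG(\Gamma)$ is always in $\SigPtwo$: one guesses a subset $\Phi \subseteq \Delta$ and an assignment witnessing consistency, then uses an $\NP$ (in fact $\coNP$) oracle to verify $\Phi \models \alpha$ and the minimality of $\Phi$ (for each $\delta \in \Phi$, check that $\Phi \setminus \{\delta\} \not\models \alpha$, which is one coNP query). As noted in the section preamble, minimality is irrelevant to the \emph{existence} question since any support contains a minimal one, so for membership purposes $\ARG(\Gamma)$ reduces to asking for \emph{some} consistent $\Phi \subseteq \Delta$ with $\Phi \models \alpha$.

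First, Case~1 ($\Gamma$ Schaefer and $\varepsilon$-valid): I would argue that here a support always exists in a trivial way, or can be found in polynomial time. If $\Gamma$ is $\zerovalid$, then $\alpha$ and every formula in $\Delta$ is satisfied by the all-zero assignment; one checks whether $\emptyset \models \alpha$ (i.e.\ $\alpha$ is a tautology, decidable in $\P$ since $\Gamma$ is Schaefer), and if not, whether $\alpha$ is entailed by some single formula $\delta_i$ together with — more carefully — whether $\Delta \models \alpha$ at all; since $\SAT(\Gamma)$ and $\IMP(\Gamma)$ are in $\P$, one can greedily shrink $\Delta$ to a minimal support whenever $\Delta \models \alpha$ and $\Delta$ is consistent. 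The consistency of $\Delta$ is immediate from $\zerovalid$-ness (all-zero model), and $\IMP(\Gamma) \in \P$ lets us test $\Delta \models \alpha$ and perform the greedy minimization, so the whole problem is in $\P$. The $\onevalid$ case is dual. Second, Case~2 is exactly Proposition~\ref{prop:schaefer_notvalid} for hardness; membership in $\NP$ follows because, $\Gamma$ being Schaefer, $\SAT(\Gamma),\IMP(\Gamma)\in\P$, so a guessed $\Phi$ (no minimality needed) is checkable in polynomial time.

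Third, Case~4 is exactly Proposition~\ref{prop:notschaefer_notvalid}: $\SigPtwo$-hardness is proved there, and $\SigPtwo$-membership is the general upper bound above. Fourth, Case~3 ($\Gamma$ not Schaefer, $\varepsilon$-valid): here I expect $\coNP$-completeness. For membership, suppose $\Gamma$ is $\zerovalid$ (the $\onevalid$ case dual); then $\Delta$ and $\alpha$ are all satisfied by $0^n$, so $\Delta$ is automatically consistent, and a support exists iff $\Delta \models \alpha$ (then greedily minimize — note the minimization only needs further implication queries, each in $\coNP$, so the whole test is a conjunction of $\coNP$ queries, hence in $\coNP$; actually $\Delta \models \alpha$ is a single $\coNP$ query and minimality of the output support is then automatic from the greedy construction). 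So $\ARG(\Gamma)$ reduces to $\IMP(\Gamma)$, which is in $\coNP$. For hardness, I would reduce from the complement of $\SAT(\Gamma')$ for a suitable non-Schaefer $\Gamma'$ — or more directly from $\IMP(\Gamma)$ which is $\coNP$-complete by \cite{SchnoorS08} when $\Gamma$ is not Schaefer: given $\Gamma$-formulae $\varphi,\psi$, map to $(\Delta,\alpha) := (\{\varphi\},\psi)$ (possibly after ensuring $\varphi$ is consistent, using $\varepsilon$-validity), so that an argument in $\Delta$ for $\psi$ exists iff $\varphi$ is consistent and $\varphi \models \psi$ — and $\varepsilon$-validity hands us consistency for free.

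\textbf{Main obstacle.} The delicate point is Case~3's hardness: one must check that the reduction from $\IMP(\Gamma)$ genuinely lands inside $\ARG(\Gamma)$ (same $\Gamma$, no auxiliary relations), and that the $\varepsilon$-valid hypothesis really does neutralize the consistency requirement — if $\Gamma$ is $\zerovalid$ then $\{\varphi\}$ is consistent and $(\{\varphi\},\psi)$ is an argument iff $\varphi\models\psi$ (after discarding $\varphi$ if it is already a tautology), so $\overline{\IMP(\Gamma)}$-hardness... wait, $\IMP$ itself is the $\coNP$-complete target, so $\IMP(\Gamma)\leqlogm\ARG(\Gamma)$ suffices. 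A secondary subtlety is making the $\P$ upper bound in Case~1 fully rigorous: one should spell out that "greedily remove formulae from $\Delta$ while implication is preserved" terminates in a minimal support and uses only polynomially many $\IMP(\Gamma)$ calls, each in $\P$. Neither obstacle is deep, but both require care that the Schaefer/$\varepsilon$-valid hypotheses are used exactly where claimed.
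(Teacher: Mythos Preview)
Your proposal is correct and follows essentially the same approach as the paper: Cases~2 and~4 are the two cited propositions, and in Cases~1 and~3 you use $\varepsilon$-validity to make $\Delta$ automatically consistent, reducing $\ARG(\Gamma)$ to $\IMP(\Gamma)$, with hardness in Case~3 via the map $(\varphi,\psi)\mapsto(\{\varphi\},\psi)$. The only difference is that your greedy-minimization detour is unnecessary---since (as you yourself note) minimality is irrelevant for existence, the paper simply observes that a support exists iff $\Delta\models\alpha$, and this single $\IMP(\Gamma)$ query already settles Cases~1 and~3 without any shrinking procedure.
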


\begin{proof}
 \begin{enumerate}\setlength{\itemsep}{\itemseplength}
  \item One easily observes that, due to the fact that $\Gamma$ is $\onevalid$
or $\zerovalid$, an instance $(\Delta, \alpha)$ of $\ARG(\Gamma)$
has a solution if and only if $\Delta$ implies $\alpha$. This condition can be
checked in polynomial time
since $\Gamma$ is $\Schaefer$ and thus $\IMP(\Gamma)$ is in $\P$.
  \item Follows from Proposition \ref{prop:schaefer_notvalid}.
  \item One easily observes that, due to the fact that $\Gamma$ is $\onevalid$
or $\zerovalid$, an instance $(\Delta, \alpha)$ of $\ARG(\Gamma)$
has a solution if and only if $\Delta$ implies $\alpha$. This condition can be
checked in $\coNP$ since
$\IMP(\Gamma)$ is in $\coNP$.

To prove $\coNP$-hardness we give a reduction from the $\coNP$-complete problem
$\IMP(\Gamma)$. We map $(\varphi, \psi)$ an instance of the first
problem to $(\{\varphi\}, \psi)$.
  \item Follows from Proposition \ref{prop:notschaefer_notvalid}.
 \end{enumerate}
\end{proof}

%------------------------------------------------------------------------------------------
\section{The complexity of $\ARGCHECK$}\label{sec:complexity_argcheck}
%------------------------------------------------------------------------------------------
 In this section we give the complexity classification for the verification
problem. As discussed in  Section \ref{subsec:coclone} the Galois connection
  does not hold\textit{ a
priori} for this problem. However, the following theorem shows that it holds
\textit{a
posteriori}, this means that the dichotomy follows the borders of Post's
lattice, i.e., the complexity of $\ARGCHECK(\Gamma)$ depends on the relational clone
$\clos{\Gamma}$ only. 

\begin{theorem}\label{thm:argcheck}
Let $\Gamma$ be a constraint language.
Then the decision problem $\ARGCHECK(\Gamma)$ is
\begin{enumerate}\setlength{\itemsep}{\itemseplength}
 \item in $\P$ if $\Gamma$ is $\Schaefer$,
 \item $\DP$-complete if $\Gamma$ is not $\Schaefer$.
\end{enumerate}
\end{theorem}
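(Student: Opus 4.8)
The plan is to prove the two items separately, using the tools developed in Section~\ref{sec:tools}. For item~1 (membership in $\P$ when $\Gamma$ is $\Schaefer$), the idea is that checking whether $(\Phi,\alpha)$ is an argument amounts to checking three conditions: (a) $\Phi$ is consistent, (b) $\Phi\models\alpha$, and (c) minimality, i.e., for every $\delta\in\Phi$, $\Phi\setminus\{\delta\}\not\models\alpha$. When $\Gamma$ is $\Schaefer$, both $\SAT(\Gamma)$ and $\IMP(\Gamma)$ are in $\P$ (by Schaefer's theorem and the classification of \cite{SchnoorS08}); condition (a) is one $\SAT$ call, condition (b) is one $\IMP$ call (note that $\bigwedge\Phi$ is again a $\Gamma$-formula, and $\alpha$ is a $\Gamma$-formula, so $\IMP(\Gamma)$ applies), and condition (c) consists of $|\Phi|$ many $\IMP$-non-entailment calls. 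Hence the whole check runs in polynomial time.

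For item~2 (\DP-hardness when $\Gamma$ is not $\Schaefer$), the plan is a reduction from $\CRITICALSAT$, which is \DP-complete by \cite{pawo88}. The \DP-upper bound is easy: membership of $(\Phi,\alpha)$ in $\ARGCHECK(\Gamma)$ is the intersection of "$\Phi$ consistent and $\Phi\models\alpha$" (in $\NP$: guess a model of $\Phi$ for consistency, and the entailment is in $\coNP$, but the conjunction of the two is... actually "$\Phi$ consistent" is in $\NP$ and "$\Phi\models\alpha$" is in $\coNP$, so their conjunction is in $\DP$) and "$\Phi$ minimal" ($\bigwedge_{\delta\in\Phi}\Phi\setminus\{\delta\}\not\models\alpha$, which is in $\NP$ since each non-entailment is in $\NP$ and there are polynomially many). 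So overall membership is in $\DP$. For hardness, given a $3$-CNF formula $\varphi=C_1\wedge\cdots\wedge C_m$ over variables $x_1,\dots,x_n$, I would set up an instance whose support is forced to be essentially all of $\Phi$, so that minimality of $\Phi$ corresponds to criticality of $\varphi$, and $\Phi\models\alpha$ (together with consistency of $\Phi$) corresponds to unsatisfiability of $\varphi$. Concretely: first reduce to the situation where $\Gamma$ contains a convenient relation. Since $\Gamma$ is not $\Schaefer$, Lemma~\ref{lem:not_Schaefer_corr} lets me work with any $R\in\closnexneq\Gamma$ in the claim position and any $\Gamma'\subseteq\closneq{\Gamma}$ in the knowledge base; and by Lemma~\ref{lem:impl_not_schaefer_equality}, $(x=y)\in\closneq\Gamma$, and more importantly $\closneq\Gamma=\clos\Gamma$ for non-$\Schaefer$ $\Gamma$. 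A case distinction on whether $\Gamma$ is $\complementive$, $\epsilonvalid$, etc., as in Lemma~\ref{lem:impl_CocloneInclusions}, then gives access to enough relations (in the worst "all relations" case, to arbitrary clauses). The encoding I have in mind: let each clause $C_i$ be represented by a constraint $\delta_i$ in $\Phi$ that is consistent on its own, and let $\alpha$ be a literal on a fresh variable $g$ that is entailed by $\Phi$ precisely when $\varphi$ (the conjunction of all the $\delta_i$) is unsatisfiable — this can be arranged with a "$\bigwedge_i \delta_i \rightarrow g$" gadget, but encoding implication inside a single clause-style $\Gamma$-constraint requires care, so I would instead spread it over several constraints and make sure removing any one of them breaks the entailment exactly when removing the corresponding clause of $\varphi$ makes $\varphi$ satisfiable.

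The main obstacle I anticipate is exactly this last point: engineering the reduction so that the minimality condition on $\Phi$ translates faithfully into the "removing any clause makes it satisfiable" half of $\CRITICALSAT$, while simultaneously the consistency-plus-entailment condition translates into the "$\varphi$ unsatisfiable" half. The two halves pull in opposite directions (one wants $\Phi$ to be "barely entailing", the other wants $\Phi$ to "definitely entail"), so the gadget must ensure each $\delta_i$ is individually removable-relevant yet collectively forcing. Handling this uniformly across all non-$\Schaefer$ $\Gamma$ — in particular the restricted cases where $\Gamma$ is only $\zerovalid$, only $\onevalid$, or $\complementive$, where not all clauses are available — will require the auxiliary relations from Lemmas~\ref{lem:implementations}, \ref{lem:impl_equality_with_T}, and the co-clone inclusions of Lemma~\ref{lem:impl_CocloneInclusions}, together with possibly padding $\Delta$ with forced-in constraints (e.g.\ using $\T$ or $\F$ when available) to pin down the support. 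I expect the final classification borders to coincide with Post's lattice precisely because, via Lemma~\ref{lem:impl_not_schaefer_equality}, $\clos\Gamma=\closneq\Gamma$ for non-$\Schaefer$ $\Gamma$, so the distinction between $\clos{}$ and $\closneq{}$ that blocks the Galois connection for $\ARGREL$ simply does not arise here.
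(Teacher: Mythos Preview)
Your plan is correct and matches the paper's structure: item~1 via tractable $\SAT(\Gamma)$ and $\IMP(\Gamma)$; $\DP$-membership by the decomposition into ``consistent and minimal'' ($\NP$) intersected with ``entails $\alpha$'' ($\coNP$); and $\DP$-hardness from $\CRITICALSAT$ with a case split on $\zerovalid$/$\onevalid$/$\complementive$, using Lemmas~\ref{lem:implementations}--\ref{lem:impl_not_schaefer_equality} and Lemma~\ref{lem:not_Schaefer_corr} exactly as you describe.

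Two places where the paper is more concrete than your sketch, and simpler than you fear. First, no implication gadget is needed: in the $\onevalid$-not-$\zerovalid$ case, for instance, one takes $\Phi=\{C_i\lor u\mid 1\le i\le k\}$ with a fresh variable $u$ and $\alpha=\T(u)$. Then $\Phi$ is trivially consistent, $\Phi\models u$ holds iff $\bigwedge_i C_i$ is unsatisfiable, and dropping $C_j\lor u$ kills the entailment iff $\bigwedge_{i\ne j}C_i$ is satisfiable---so $(\Phi,\alpha)$ is an argument precisely when $\varphi$ is critical. The other two non-$\complementive$ cases use the obvious variants $C_i\lor(f\to t)$ with claim $(f\to t)$ (both valid) and $(C_i\lor u)\land\neg v$ with claim $u\land\neg v$ (neither valid); the claim relations are in $\closnexneq\Gamma$ by Lemma~\ref{lem:implementations}, and the support relations land in $\closneq\Gamma$ via Lemmas~\ref{lem:impl_CocloneInclusions} and~\ref{lem:impl_not_schaefer_equality}, so Lemma~\ref{lem:not_Schaefer_corr} finishes. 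Second, the $\complementive$ case is not attacked head-on: since $\Gamma\cup\{\T\}$ is non-$\Schaefer$ and non-$\complementive$, the previous argument already gives hardness for $\ARGCHECK(\Gamma\cup\{\T\})$, and one then shows $\ARGCHECK(\Gamma\cup\{\T\})\leqlogm\ARGCHECK(\Gamma)$ by replacing each $\T(x)$ with $(x=t)$ for a fresh $t$ (when $\Gamma$ is $\zerovalid$ and $\onevalid$, via Lemma~\ref{lem:impl_equality}) or with a disequality chain $(x\ne f_x)\land(f_x\ne t)$ (when $\Gamma$ is neither, via Lemma~\ref{lem:implementations}, item~1), using complementivity to argue that fixing $t=1$ loses no generality.
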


%\medskip

The argument verification problem is in $\DP$. Indeed  $\ARGCHECK = A \cap B$, with $A=\{ (\Delta,\Phi,\alpha) \mid 
      	\Phi \text{ is satisfiable}, \forall \varphi \in \Phi :
 			\Phi\setminus\{\varphi\} \not \models \alpha\}$ and $B=\{ (\Delta,\Phi,\alpha) \mid \Phi \models \alpha\}$, and
  $A\in \NP$ and $B\in \coNP$.

\begin{proposition}\label{prop:tractable_argcheck}
Let $\Gamma$ be a constraint language that is $\Schaefer$. Then $\ARGCHECK(\Gamma)$ is in $\P$.
\end{proposition}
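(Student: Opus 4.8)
The plan is to show that when $\Gamma$ is $\Schaefer$, each of the three defining conditions of an argument can be checked in polynomial time. Recall from the excerpt that if $\Gamma$ is $\Schaefer$, then both $\SAT(\Gamma)$ and $\IMP(\Gamma)$ lie in $\P$ (the former by Schaefer's theorem, the latter by the classification of \cite{SchnoorS08}). Given an instance $(\Phi,\alpha)$ with $\Phi = \{\delta_1,\dots,\delta_n\}$, I would verify the three conditions of Definition \ref{def:argument} as follows. First, consistency of $\Phi$: since the conjunction $\bigwedge_{i=1}^n \delta_i$ is again a $\Gamma$-formula, testing whether $\Phi$ is satisfiable is an instance of $\SAT(\Gamma)$, hence decidable in polynomial time. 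Second, entailment $\Phi \models \alpha$: this is exactly an instance of $\IMP(\Gamma)$ (with $\bigwedge_i \delta_i$ on the left and $\alpha$ on the right), hence also polynomial. Third, minimality: $\Phi$ is minimal with respect to property (2) if and only if for every $j \in \{1,\dots,n\}$ we have $\Phi \setminus \{\delta_j\} \not\models \alpha$. This is a conjunction of $n$ complementary instances of $\IMP(\Gamma)$, so it too can be checked in polynomial time by running the $\IMP(\Gamma)$ algorithm $n$ times.

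Putting these together, the overall algorithm for $\ARGCHECK(\Gamma)$ runs in polynomial time: it performs one $\SAT(\Gamma)$ test, one $\IMP(\Gamma)$ test, and $n$ further $\IMP(\Gamma)$ tests, all of which are individually polynomial and bounded in number by the size of the input. If all three conditions hold, $(\Phi,\alpha)$ is an argument; otherwise it is not. Hence $\ARGCHECK(\Gamma) \in \P$.

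I do not expect any genuine obstacle here; this proposition is the "easy" half of the dichotomy in Theorem \ref{thm:argcheck}, and the only mild subtlety is to note carefully that minimality is equivalent to the $n$ non-entailment checks $\Phi \setminus \{\delta_j\} \not\models \alpha$ rather than to checking all proper subsets (which would be exponentially many) — this reduction is justified precisely because if \emph{some} proper subset entails $\alpha$, then already the proper subset obtained by dropping a single formula not in it (equivalently, one may argue monotonically: a proper subset $\Psi \subsetneq \Phi$ with $\Psi \models \alpha$ is contained in some $\Phi \setminus \{\delta_j\}$, which then also entails $\alpha$ by monotonicity of classical entailment). The hard direction, namely $\DP$-hardness for non-$\Schaefer$ $\Gamma$, is what will require the algebraic machinery of Lemma \ref{lem:not_Schaefer_corr} together with a reduction from $\CRITICALSAT$, and that is treated separately.
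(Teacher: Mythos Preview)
Your proposal is correct and follows exactly the paper's approach: the paper's proof is the single line ``Use that $\SAT(\Gamma)$ and $\IMP(\Gamma)$ are in $\P$,'' and you have simply spelled out the obvious algorithm behind that sentence, including the observation that minimality reduces to $n$ non-entailment checks by monotonicity.
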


\begin{proof}
Use that $\SAT(\Gamma)$ and $\IMP(\Gamma)$ are in $\P$.
\end{proof}

\begin{proposition}\label{prop:hard_argcheck_noncp}
Let $\Gamma$ be a constraint language which is neither   $\Schaefer$ 
nor   $\complementive$.
Then $\ARGCHECK(\Gamma)$ is $\DP$-complete.
\end{proposition}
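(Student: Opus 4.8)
The plan is to prove $\DP$-hardness by reducing from $\CRITICALSAT$, the $\DP$-complete problem of deciding whether a $3$-CNF formula $\varphi$ is unsatisfiable but becomes satisfiable after the removal of any single clause. The intuition is a clean match: an instance $(\Phi,\alpha)$ of $\ARGCHECK$ is a ``yes'' instance when $\Phi$ is satisfiable (an $\NP$-type condition), $\Phi\models\alpha$ (a $\coNP$-type condition), and $\Phi$ is minimal for the latter (another $\NP$-type condition). Choosing $\alpha$ to be an unsatisfiable-like target makes the entailment $\Phi\models\alpha$ behave like ``$\Phi$ is unsatisfiable'', and the minimality condition then corresponds exactly to criticality. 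First I would note, using Lemma~\ref{lem:not_Schaefer_corr} together with Lemma~\ref{lem:impl_not_schaefer_equality}, that since $\Gamma$ is not $\Schaefer$ we have $(x=y)\in\closneq{\Gamma}$, so it suffices to exhibit the reduction using the relation $\neq$-free building blocks that are available; more precisely I would first establish hardness for a convenient small constraint language and then transfer via $\Gamma' \subseteq \closneq{\Gamma}$ and $R \in \closnexneq{\Gamma}$.

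The core construction goes as follows. Given $\varphi = C_1 \wedge \dots \wedge C_m$ in $3$-CNF over variables $x_1,\dots,x_n$, introduce the clauses as formulas in the knowledge base. Since $\Gamma$ is not $\complementive$, I would split according to whether $\Gamma$ is $\zerovalid$/$\onevalid$ or not, exactly mirroring the case analysis of Proposition~\ref{prop:notschaefer_notvalid}. In the generic (not $\zerovalid$, not $\onevalid$) subcase, Lemma~\ref{lem:impl_not_schaefer_equality} and Lemma~\ref{lem:impl_equality_with_T}/\ref{lem:implementations} give us $\T$, $\F$ (i.e. $x$ and $\neg x$ as constraints) and $(x=y)$ up to existential quantification, and then each $3$-clause $C_i$ can be written as a $\Gamma$-formula $\delta_i$ using a standard implementation of $3$-clauses over $\clos{\Gamma}$ (available since $\clos{\Gamma}$ contains all relations by Lemma~\ref{lem:impl_CocloneInclusions}); set $\Phi := \{\delta_1,\dots,\delta_m\}$ and take the claim $\alpha$ to be a fixed unsatisfiable $\Gamma$-constraint on a fresh variable, e.g. the $R$-constraint equivalent to $x \wedge \neg x$ obtained from $\T,\F$ and $(x=y)$. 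Then $\Phi\models\alpha$ iff $\Phi$ is unsatisfiable iff $\varphi$ is unsatisfiable; and for each $i$, $\Phi\setminus\{\delta_i\}\not\models\alpha$ iff $\Phi\setminus\{\delta_i\}$ is satisfiable iff $\varphi$ minus clause $C_i$ is satisfiable. Hence $(\Phi,\alpha)$ is an argument iff $\varphi$ is critical. In the $\zerovalid$-or-$\onevalid$ subcases one carries along an extra constant-enforcing formula (the analogue of the $t\wedge\neg f$ trick used in Proposition~\ref{prop:notschaefer_notvalid}) to locally pin the constants, and puts that formula into $\Phi$ as well, checking that it never affects minimality because $\Phi$ without it is already unsatisfiable whenever $\varphi$ is.

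Two points need care. The first is the \emph{minimality bookkeeping}: if the reduction adds auxiliary formulas to $\Phi$ (constant-pinning gadgets, or formulas arising from eliminating existential quantifiers via Lemma~\ref{lem:not_Schaefer_corr}), one must verify that removing such an auxiliary formula does \emph{not} accidentally make $\Phi$ satisfiable or non-entailing, i.e. that criticality of $\varphi$ is reflected exactly in the critical clauses $\delta_i$ and in no other element of $\Phi$. The cleanest way is to ensure the gadget formulas are each implied by (or consistent-with-and-irrelevant-to) the rest, so their removal leaves the satisfiability status of the clause-part unchanged. The second point, which I expect to be the \emph{main obstacle}, is handling the existential quantifiers produced when we pass from $\closneq{\Gamma}$-formulas to genuine $\Gamma$-formulas: deleting the existential quantifiers (as done in the proof of Lemma~\ref{lem:not_Schaefer_corr}) replaces one formula $\delta_i$ by a $\Gamma$-formula $\delta_i'$ over an enlarged variable set, but this is a single formula in $\Phi$, so its removal still corresponds to removing exactly clause $C_i$ — I would spell out that these fresh existential variables are private to $\delta_i'$ and hence cause no interaction, which is precisely why Lemma~\ref{lem:not_Schaefer_corr} is stated with $\closneq{\Gamma}$ rather than the full $\clos{\Gamma}$. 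Membership in $\DP$ is already established in the remarks preceding Proposition~\ref{prop:tractable_argcheck}, so combining that with the hardness reduction yields the claimed $\DP$-completeness.
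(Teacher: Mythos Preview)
Your reduction has a fatal gap: you have overlooked condition~(1) in Definition~\ref{def:argument}, namely that the support $\Phi$ of an argument must be \emph{consistent}. In your construction you take $\alpha$ to be an unsatisfiable $\Gamma$-formula and set $\Phi=\{\delta_1,\dots,\delta_m\}$ with each $\delta_i$ equivalent to the clause $C_i$. You then argue that $\Phi\models\alpha$ iff $\Phi$ is unsatisfiable iff $\varphi$ is unsatisfiable. But this means that whenever $\varphi$ is critical (hence unsatisfiable), your $\Phi$ is inconsistent, so $(\Phi,\alpha)$ is \emph{not} an argument. Conversely, if $\Phi$ is consistent then $\Phi\not\models\alpha$. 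Hence your mapping sends every instance of $\CRITICALSAT$ to a ``no'' instance of $\ARGCHECK(\Gamma)$, and the reduction collapses. (You even write in your first paragraph that ``$\Phi$ is satisfiable'' is one of the conditions to check, and then in the 0/1-valid subcase you say ``$\Phi$ without it is already unsatisfiable whenever $\varphi$ is'' --- these two remarks are in direct conflict.)

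The paper's proof avoids this by a simple but essential twist: rather than encoding $C_i$ directly, it puts $C_i\lor u$ (or $C_i\lor(f\rightarrow t)$, or $(C_i\lor u)\wedge\neg v$, according to the 0/1-validity of $\Gamma$) into $\Phi$, and takes $\alpha$ to be the added disjunct $u$ (resp.\ $f\rightarrow t$, $u\wedge\neg v$). Now $\Phi$ is \emph{always} satisfiable (set $u=1$), while $\Phi\models u$ holds exactly when no model of $\Phi$ has $u=0$, i.e., when $\varphi$ is unsatisfiable; minimality then lines up with criticality as you intended. The claim $\alpha$ in each case is chosen so that it lies in $\closnexneq{\Gamma}$ (via Lemma~\ref{lem:implementations}), and the relations used in $\Phi$ are all in $\clos{\Gamma}$ by Lemma~\ref{lem:impl_CocloneInclusions}, hence in $\closneq{\Gamma}$ by Lemma~\ref{lem:impl_not_schaefer_equality}; Lemma~\ref{lem:not_Schaefer_corr} then transfers hardness to $\ARGCHECK(\Gamma)$. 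Your framework (case split on validity, transfer via Lemmas~\ref{lem:impl_CocloneInclusions}, \ref{lem:impl_not_schaefer_equality} and \ref{lem:not_Schaefer_corr}) is the right scaffolding, but the core gadget must be replaced by the $C_i\lor u$ construction.
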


\begin{proof}
For the hardness we give a reduction from $\CRITICALSAT$, a $\DP$-complete
problem according to \cite{pawo88}.
We will use  as an
intermediate problem the variant of $\ARGCHECK$ with two parameters,
$\ARGCHECK(\Gamma', R)$ as defined in Section \ref{subsec:coclone},
differentiating the restrictions put on the knowledge base from the ones put on
the claim.

We  perform a case distinction according to whether $\Gamma$ is $\zerovalid$
and/or $\onevalid$.
Throughout the proof we denote by $\varphi = \bigwedge_{i=1}^k C_i$ an instance
of $\CRITICALSAT$.

Suppose first that $\Gamma$ is both $\zerovalid$ and $\onevalid$. We prove
for some well-chosen constraint language $\Gamma' \subseteq \clos{\Gamma}$ the following sequence of reductions:
$$\begin{array}{rl}
	\CRITICALSAT & \leqlogm \ARGCHECK(\Gamma',x \rightarrow y) \\
	             & \leqlogm \ARGCHECK(\Gamma).
\end{array}$$
For the first reduction we associate with $\varphi$ the instance
$(\Phi, \alpha)$ where
$$\begin{array}{rl}

	\Phi   & = \{C_i \lor (f \rightarrow t) \mid i=1,\ldots , k\}, \\
	\alpha & = (f \rightarrow t),
\end{array}$$
with $f,t$ fresh variables.
It is easy to see that $\varphi$ is a critical instance if and only if $(\Phi, \alpha)$ is an argument.

For the second reduction
observe that all formul\ae\ in $\Phi$ are constraints built upon a finite set $\Gamma'$ of
relations  which are $\onevalid$ and $\zerovalid$ and thus $\Gamma' \subseteq
\clos{\Gamma}$ according to Lemma \ref{lem:impl_CocloneInclusions}. Since
$\Gamma$ is not Schaefer, following Lemma \ref{lem:impl_not_schaefer_equality}
we have $\Gamma' \subseteq
\closneq{\Gamma}$. Further,
the relation $x \rightarrow y$ can be expressed by a $\Gamma$-formula according
to Item 2 of Lemma~\ref{lem:implementations}. With this, the second reduction
follows by
Lemma~\ref{lem:not_Schaefer_corr}.

\medskip

Suppose now that 
$\Gamma$ is $\onevalid$ and not $\zerovalid$.
The other case ($\zerovalid$ and not $\onevalid$) can be treated analogously.
We show for some well-chosen constraint language $\Gamma' \subseteq \clos{\Gamma}$ that
$$\begin{array}{rl}
	\CRITICALSAT & \leqlogm \ARGCHECK(\Gamma', \T) \\
	             & \leqlogm \ARGCHECK(\Gamma).
\end{array}$$
For the first reduction we associate with $\varphi$ the instance
$(\Phi, \alpha)$ where
$$\begin{array}{rl}
	\Phi   & = \{C_i \lor u \mid i=1,\ldots , k\}, \\
	\alpha & =  u,
\end{array}$$
with $u$ being a fresh variable.
It is easy to see that $\varphi$ is a critical instance if and only if $(\Phi, \alpha)$ is an argument.

For the second reduction
observe that all formul\ae\ in $\Phi$ are constraints built upon a finite set $\Gamma'$ of
relations  which are $\onevalid$ and thus $\Gamma' \subseteq \clos{\Gamma}$
according to Lemma \ref{lem:impl_CocloneInclusions}. Since
$\Gamma$ is not Schaefer, following Lemma \ref{lem:impl_not_schaefer_equality}
we have $\Gamma' \subseteq
\closneq{\Gamma}$. Further,
the relation $\T$ can be expressed by a $\Gamma$-formula according to Lemma~\ref{lem:implementations}. With this, the second reduction follows by
Lemma~\ref{lem:not_Schaefer_corr}.

\medskip

Finally suppose that $\Gamma$ is neither $\onevalid$ nor $\zerovalid$.
We show for some well-chosen constraint language $\Gamma' \subseteq \clos{\Gamma}$ that
$$\begin{array}{rl}
	\CRITICALSAT & \leqlogm \ARGCHECK(\Gamma', x \land \neg y) \\
	             & \leqlogm \ARGCHECK(\Gamma).
\end{array}$$
For the first reduction we associate with $\varphi$ the instance
$(\Phi, \alpha)$ where
$$\begin{array}{rl}
	\Phi   & = \{(C_i \lor u) \land \neg v  \mid i=1,\ldots , k\}, \\
	\alpha & =  u \land \neg v,
\end{array}$$
with $u,v$ fresh variables.
It is easy to see that $\varphi$ is a critical instance if and only if $(\Phi, \alpha)$ is an argument.

For the second reduction
observe that all formul\ae\ in $\Phi$ are constraints built upon a finite set $\Gamma'$ of
relations and thus $\Gamma' \subseteq \clos{\Gamma}$ according to Lemma
\ref{lem:impl_CocloneInclusions}. Since
$\Gamma$ is not Schaefer, following Lemma \ref{lem:impl_not_schaefer_equality}
we have $\Gamma' \subseteq
\closneq{\Gamma}$. Further,
the relation $x \land \neg y$ can be expressed by a $\Gamma$-formula according
to Item 3 of  Lemma~\ref{lem:implementations}. With this, the second reduction
follows by
Lemma~\ref{lem:not_Schaefer_corr}.
\end{proof}

%To finish the proof of Theorem \ref{thm:argcheck} it remains to deal with
%constraint languages that are not Schaefer but $\complementive$. 

\begin{proposition}\label{prop:hard_argcheck_cp}
Let $\Gamma$ be a constraint language which is not  $\Schaefer$ 
but is   $\complementive$.
Then $\ARGCHECK(\Gamma)$ is $\DP$-complete.
\end{proposition}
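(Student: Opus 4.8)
The plan is to mimic the structure of the proof of Proposition~\ref{prop:hard_argcheck_noncp}, again reducing from $\CRITICALSAT$ via the two-parameter variant $\ARGCHECK(\Gamma',R)$, but this time exploiting complementivity. Since $\Gamma$ is $\complementive$ and not $\Schaefer$, it is in particular neither $\onevalid$ nor $\zerovalid$ would be false in general — a complementive relation may be both $0$-valid and $1$-valid, or neither. By Lemma~\ref{lem:impl_CocloneInclusions} (first item), if $\Gamma$ is complementive, not Schaefer, and both $0$- and $1$-valid, then $\clos{\Gamma}$ contains all relations that are both $0$- and $1$-valid; if $\Gamma$ is complementive, not Schaefer, and neither $0$- nor $1$-valid, one uses that $\clos{\Gamma}$ contains all complementive relations (the analogue in the co-clone lattice, with $\mathsf{IN}_2$ as the relevant co-clone). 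So I would split into these two cases. In the first case, the natural claim relation is $R = (x \rightarrow y)$, which is both $0$-valid and $1$-valid; in the second case, the natural claim relation is $R = (x \neq y)$, which is complementive, and by Item~1 of Lemma~\ref{lem:implementations} is expressible as a $\Gamma$-formula.

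For the core combinatorial reduction $\CRITICALSAT \leqlogm \ARGCHECK(\Gamma', R)$, I would use essentially the construction from Proposition~\ref{prop:hard_argcheck_noncp}: given $\varphi = \bigwedge_{i=1}^k C_i$, with fresh variables $f,t$ in the first case, set $\Phi = \{C_i \vee (f \rightarrow t) \mid i = 1,\dots,k\}$ and $\alpha = (f \rightarrow t)$. The point is that $(\Phi,\alpha)$ is an argument iff $\Phi \models \alpha$ (which holds iff every assignment falsifying $f \rightarrow t$, i.e.\ $f=1,t=0$, falsifies some $C_i$, i.e.\ $\varphi$ is unsatisfiable) and $\Phi$ is minimal for this (removing any $C_j \vee (f\rightarrow t)$ destroys entailment iff $\varphi \setminus \{C_j\}$ becomes satisfiable) — so exactly when $\varphi$ is a critical instance; consistency of $\Phi$ is automatic since $f=0$ satisfies everything. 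In the second ($x\neq y$) case I would instead encode via a disequality gadget: using fresh $f$, write each clause as a disjunction of disequalities of the form $(x_j \neq f)$ / $(x'_j \neq f)$ (reusing the $x'_j \approx \neg x_j$ trick already used in Proposition~\ref{prop:schaefer_notvalid}), set $\Phi$ to contain one ``big'' $\neq$-based formula per clause plus the conjunction $\bigwedge_j (x_j \neq x'_j)$, and let $\alpha$ be the corresponding $\neq$-claim, arranged so that criticality of $\varphi$ corresponds to $(\Phi,\alpha)$ being an argument.

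For the second reduction $\ARGCHECK(\Gamma',R) \leqlogm \ARGCHECK(\Gamma)$ I would invoke Lemma~\ref{lem:not_Schaefer_corr}: the relations appearing in $\Phi$ form a finite set $\Gamma' \subseteq \clos{\Gamma}$ by the appropriate item of Lemma~\ref{lem:impl_CocloneInclusions}, and since $\Gamma$ is not Schaefer, Lemma~\ref{lem:impl_not_schaefer_equality} upgrades this to $\Gamma' \subseteq \closneq{\Gamma}$; meanwhile $R \in \closnexneq{\Gamma}$ by the relevant item of Lemma~\ref{lem:implementations}. Membership in $\DP$ was already established before Proposition~\ref{prop:tractable_argcheck}, so $\DP$-completeness follows. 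The main obstacle I anticipate is the case split itself: a complementive non-Schaefer language need not be $\varepsilon$-valid, so I must handle the ``neither $0$- nor $1$-valid'' subcase carefully, making sure the $\complementive$ co-clone ($\mathsf{IN}_2$ in the lattice of \cite{CreignouKZ08}) really does contain all the constraint relations I write down in $\Phi$ — in particular that each $C_i \vee (\text{gadget})$ is complementive — and that the disequality-based claim gadget faithfully transfers criticality (i.e.\ that deleting one constraint from $\Phi$ corresponds exactly to deleting one clause from $\varphi$, with no spurious gain or loss of entailment).
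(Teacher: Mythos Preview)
Your Case~1 contains a genuine gap. Lemma~\ref{lem:impl_CocloneInclusions}, first item, explicitly requires $\Gamma$ to be \emph{not} $\complementive$; you cannot invoke it when $\Gamma$ is $\complementive$. For $\Gamma$ that is $\complementive$, $\zerovalid$, $\onevalid$, and not $\Schaefer$, the co-clone $\clos{\Gamma}$ is $\mathsf{IN}$, which consists of the $\complementive$ relations (that are also $0$/$1$-valid), not all $0$/$1$-valid relations. Your knowledge-base relations $C_i \vee (f \rightarrow t)$ are in general \emph{not} $\complementive$ (e.g.\ for $C_i = x$, the tuple $(x,f,t)=(0,1,0)$ falsifies it while its complement $(1,0,1)$ satisfies it), so they do not lie in $\closneq{\Gamma}$, and Lemma~\ref{lem:not_Schaefer_corr} does not apply. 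The claim $(f \rightarrow t)$ is likewise not $\complementive$, and Item~2 of Lemma~\ref{lem:implementations} again requires non-$\complementive$ $\Gamma$, so you cannot get it in $\closnexneq{\Gamma}$ either. Your Case~2 sketch is too vague to assess: you would need to exhibit, for each clause $C_i$, a single $\complementive$ formula whose removal from $\Phi$ corresponds exactly to removing $C_i$ from $\varphi$, and this correspondence with criticality is not at all obvious from the $x'_j$-gadget of Proposition~\ref{prop:schaefer_notvalid}, which was designed for existence, not minimality.

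The paper avoids all of this by a different and much shorter route: it shows $\ARGCHECK(\Gamma \cup \{\T\}) \leqlogm \ARGCHECK(\Gamma)$ directly, and then inherits $\DP$-hardness from Proposition~\ref{prop:hard_argcheck_noncp} since $\Gamma \cup \{\T\}$ is neither $\Schaefer$ nor $\complementive$. The reduction simply eliminates the $\T$-constraints: introduce a fresh variable $t$ and replace each $\T(x)$ by $(x = t)$ in the $0$/$1$-valid case (using Lemma~\ref{lem:impl_equality}), or by $(x \neq f_x)\land(f_x \neq t)$ with a fresh $f_x$ in the neither-valid case (using Lemma~\ref{lem:implementations}, Item~1). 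Correctness is argued by observing that, because all formulae in the resulting instance are $\complementive$, one may without loss of generality restrict attention to assignments with $t=1$, under which the replacement constraints behave exactly like $\T(x)$. This sidesteps the need to express any non-$\complementive$ relation inside $\clos{\Gamma}$.
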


\begin{proof}
We prove that 
$\ARGCHECK(\Gamma \cup \{\T\}) \leqlogm \ARGCHECK(\Gamma).$
This will prove hardness for $\ARGCHECK(\Gamma)$ since $\Gamma \cup \{T\}$ is
neither $\Schaefer$ nor $\complementive$ (because of $\T$) and therefore $\ARGCHECK(\Gamma \cup \{T\})$ is a
$\DP$-complete problem according to Proposition~\ref{prop:hard_argcheck_noncp}.

Suppose first that $\Gamma$ is both $\zerovalid$ and $\onevalid$. Then we show
that $ \ARGCHECK(\Gamma \cup \{\T\})  \leqlogm
\ARGCHECK(\Gamma \cup\{=\})\leqlogm
\ARGCHECK(\Gamma).$ The second reduction  holds according to Lemma
\ref{lem:impl_equality}. For the first one 
let $(\Phi,\alpha)$ be an instance of $\ARGCHECK(\Gamma \cup \{\T\})$. 
Introduce a fresh variable $t$ and replace in all formul\ae\ all
$\T$-constraints $\T(x)$ by $(x = t)$. Thus we obtain $(\Phi',\alpha')$ an instance of
$\ARGCHECK(\Gamma\cup\{=\})$. The key to observe that this reduction is correct
is that $\Gamma$ is $\complementive$, i.e., it suffices to observe correctness for $(\Phi'[t/1], \alpha[t/1])$.

In the case $\Gamma$ is neither $\zerovalid$ nor $\onevalid$, then we show
that $ \ARGCHECK(\Gamma \cup \{\T\})  \leqlogm
\ARGCHECK(\Gamma \cup\{\ne\})\leqlogm
\ARGCHECK(\Gamma).$ The second reduction  holds according to Item 1 of Lemma
\ref{lem:implementations}. For the first one 
let $(\Phi,\alpha)$ be an instance of $\ARGCHECK(\Gamma \cup \{\T\})$. 
Introduce a fresh variable $t$ and introduce in all formul\ae\ for each 
$\T$-constraint $\T(x)$ a new variable $f_x$ and replace $\T(x)$ by the two disequality constraints $(x\ne f_x)$ and $(f_x\ne t)$.
Thus we obtain $(\Phi',\alpha')$ an instance of
$\ARGCHECK(\Gamma\cup\{\neq\})$. Again, the key to observe correctness is that 
one may restrict attention to the case $t = 1$.
% (and resp. $f_x = 0$ for all $x$)
\end{proof}

%------------------------------------------------------------------------------------------
\section{The complexity of $\ARGREL$}\label{sec:complexity_argrel}
%------------------------------------------------------------------------------------------

In this section we give the complexity classification for the relevance
problem. As for the verification problem the  Galois connection does not hold
\textit{a
priori} for this problem (see the discussion in Section \ref{subsec:coclone}).
However,  interestingly and contrary to the verification problem, it  does not
hold \textit{a posteriori} either (unless $\P = \NP$). We reveal constraint languages
$\Gamma_1$ and $\Gamma_2$ such that $\clos{\Gamma_1}=\clos{\Gamma_2}$, and 
$\ARGREL(\Gamma_1)$ is in $\P$ whereas $\ARGREL(\Gamma_2)$ is $\NP$-complete.
As we will see, it is the equality relation, $=$, that is responsible for the
discrepancy in complexity.

\begin{theorem}\label{thm:arg_rel}
Let $\Gamma$ be a constraint language. Then the decision problem
$\ARGREL(\Gamma)$ is
\begin{enumerate}\setlength{\itemsep}{\itemseplength}
 \item in $\P$ if $\Gamma$ is positive or negative,
 \item $\NP$-complete if $\Gamma$ is $\Schaefer$ but neither positive, nor
negative,
 \item $\SigPtwo$-complete if $\Gamma$ is not $\Schaefer$.
\end{enumerate}
\end{theorem}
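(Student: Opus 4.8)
The plan is to treat the three cases of Theorem~\ref{thm:arg_rel} separately, reusing the algebraic machinery of Section~\ref{sec:tools} (notably Lemma~\ref{lem:not_Schaefer_corr}) together with the hardness reductions already developed for $\ARG$ and $\ARGCHECK$.

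\textbf{Membership.} For case (3), $\ARGREL(\Gamma)$ is trivially in $\SigPtwo$: guess $\Phi\subseteq\Delta$ containing $\psi$, then verify with an $\NP$ oracle that $\Phi$ is consistent, that $\Phi\models\alpha$, and that no $\Phi\setminus\{\varphi\}$ entails $\alpha$ (this is exactly an $\ARGCHECK$ instance, which lies in $\DP\subseteq\SigPtwo$). For case (2), $\Gamma$ is $\Schaefer$, so $\SAT(\Gamma)$ and $\IMP(\Gamma)$ are in $\P$; hence a guessed $\Phi$ can be verified in polynomial time, giving $\NP$ membership. For case (1), when $\Gamma$ is positive (the negative case is dual), I expect that a support containing $\psi$ can be found greedily: because positive formul\ae\ are monotone, entailment of $\alpha$ is preserved under adding clauses, minimality can be restored by deleting clauses while still entailing $\alpha$ and keeping $\psi$, and consistency is automatic since $1^n$ satisfies every positive formula. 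So the algorithm is: check whether $\{\psi\}\cup(\Delta\setminus\{\psi\})\models\alpha$ with $\psi$ retained during the deletion phase; the only subtlety is ensuring $\psi$ is not itself redundant, which can be checked directly. All the needed implication tests are in $\P$ here (positive formul\ae\ are both Horn and dualHorn), so this case is polynomial.

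\textbf{Hardness.} For case (3), I would mimic the proof of Proposition~\ref{prop:notschaefer_notvalid}: reduce from an abduction problem known to be $\SigPtwo$-complete for non-$\Schaefer$ $\Gamma$, namely $\ABD(\Gamma)$ (which is $\SigPtwo$-complete whenever $\Gamma$ is not $\Schaefer$, by \cite{CrZa06,NoZa08}, without needing the extra non-valid hypothesis, since $\ABD$ allows literals in explanations). One maps $(\varphi,H,q)$ to an $\ARGREL$ instance by putting $\varphi$ together with literal-gadgets for $H$ into $\Delta$, asking for a support of (a formula equivalent to) $q$, and designating $\psi$ to be a distinguished formula that forces the guessed support to interact with all of $H$. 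The technical versions $\ARGREL(\Gamma',R)$ and Lemma~\ref{lem:not_Schaefer_corr} then transfer this to $\ARGREL(\Gamma)$ after expressing the gadget relations ($x\neq y$ in the complementive case, $x\wedge\neg y$ otherwise, plus equality via Lemma~\ref{lem:impl_not_schaefer_equality}) inside $\closneq{\Gamma}$. For case (2), $\Gamma$ is $\Schaefer$ but neither positive nor negative; here I expect a direct $\NP$-hardness reduction from a satisfiability-style problem, again via the two-parameter variant $\ARGREL(\Gamma',R)$, exploiting the fact that ``not positive'' forces some relation in $\clos\Gamma$ to behave non-monotonically (e.g.\ it can express $x\to y$ or an implication-like gadget) so that the designated $\psi$ genuinely constrains which supports are admissible.

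\textbf{Main obstacle.} The delicate point is case (2): unlike $\ARG$ and $\ARGCHECK$, the relevance problem is \emph{not} determined by $\clos\Gamma$ (the paper explicitly exhibits $\Gamma_1,\Gamma_2$ with $\clos{\Gamma_1}=\clos{\Gamma_2}$ but different $\ARGREL$-complexity), so I cannot simply quote a co-clone and invoke the Galois connection. I will need to pin down precisely which syntactic feature of a $\Schaefer$, non-positive, non-negative $\Gamma$ yields $\NP$-hardness — presumably by a case analysis on where $\Gamma$ sits among the four Schaefer types and showing that in each sub-case one can express a relevant implication-type constraint as a genuine $\Gamma$-formula (not merely in $\clos\Gamma$), so that Lemma~\ref{lem:not_Schaefer_corr} or a direct local-replacement argument applies. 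Getting the reduction to respect minimality of the support while forcing $\psi\in\Phi$ is the part most likely to require care.
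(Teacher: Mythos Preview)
Your membership arguments are fine, and your high-level plan for cases (2) and (3) is in the right spirit. But there is a genuine gap in your treatment of case (1), and your sketches for (2) and (3) miss the specific gadgets the proof actually needs.

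\medskip
\textbf{Case (1): the greedy algorithm is wrong.} Your proposal is to start from $\Delta$, greedily delete formulas other than $\psi$ while preserving $\Delta\models\alpha$, and then check whether $\psi$ is redundant in the result. This does not work: greedy minimisation can converge to a support in which $\psi$ is redundant even though a \emph{different} minimal support contains $\psi$ essentially. Concretely, take $\alpha=(a)\wedge(b)$, $\psi=(a)$, $\delta_1=(a)$, $\delta_2=(b)$, $\delta_3=(a)\wedge(b)$, all positive. Then $\{\psi,\delta_2\}$ is a minimal support containing $\psi$, so $\psi$ is relevant. But greedy (processing $\delta_1,\delta_2,\delta_3$ in that order) removes $\delta_1$ and $\delta_2$, ends at $\{\delta_3,\psi\}$, finds $\psi$ redundant there, and wrongly rejects. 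The paper's polynomial algorithm is not greedy: it exploits the structural fact (Observation~\ref{obs:star}) that for positive formul\ae, non-entailment of a positive clause is closed under conjunction. For each clause $C_i$ of $\alpha$ it forms $\Delta_i=\{\psi\}\cup\{\delta\in\Delta\mid \delta\not\models C_i\}$ and accepts iff some $\Delta_i\models\alpha$; this guarantees that $\psi$ is the only formula in $\Delta_i$ that can be responsible for $C_i$. You need this idea (or an equivalent one) --- ``check redundancy directly'' does not suffice.

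\medskip
\textbf{Case (2): the missing gadget is equality, not implication.} You correctly note that you cannot argue via $\clos\Gamma$ and must express the gadget as a genuine $\Gamma$-formula. But the relation that does the job is $(x=y)$ (or $(x=y)\wedge z$, $(x=y)\wedge\neg z$), not an ``implication-like'' one. The paper shows $\ARGREL(\{x=y\})$ is $\NP$-hard by a direct reduction from $\ThreeSAT$ (Lemma~\ref{lem:hardness_for_eq_argrel}), and then uses Lemmas~\ref{lem:impl_equality} and~\ref{lem:impl_equality_with_T} to express these equality-type relations in $\closnexneq{\Gamma}$ precisely when $\Gamma$ is $\epsilonvalid$ but not positive/negative; the remaining non-$\epsilonvalid$ Schaefer case is inherited from $\ARG$ via Remark~\ref{rem:useful_remark}. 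Your ``case analysis on the four Schaefer types'' is the wrong axis --- the relevant split is by $0/1$-validity.

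\medskip
\textbf{Case (3): the reduction is more delicate than your sketch.} Reducing from $\ABD(\Gamma)$ is indeed the move, but the gadgets $x\neq y$ and $x\wedge\neg y$ you borrow from Proposition~\ref{prop:notschaefer_notvalid} only cover the non-$\epsilonvalid$ subcase (handled via Remark~\ref{rem:useful_remark}). When $\Gamma$ is $\epsilonvalid$ and not Schaefer, the paper needs a chain of two reductions through auxiliary languages $\Gamma_1,\Gamma_2\subseteq\closneq\Gamma$ and carefully chosen claim relations (e.g.\ $x\vee\neg y\vee z$ then $(x=y)\wedge(z=w)$ in the both-valid case), using Lemma~\ref{lem:not_Schaefer_corr} at the last step. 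Your one-shot ``literal-gadgets plus a distinguished $\psi$'' does not obviously preserve minimality with $\psi$ essential; the paper's $\psi=(s=q)$ together with the shape of $\alpha$ is engineered so that $\Phi\setminus\{\psi\}\not\models\alpha$ is forced.
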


\begin{remark}
 Consider
 the two relations $R(x,y)=(x\lor y)$ and $R'(x,y,z)=(x\lor y)\land (y=z)$. 
Observe that these two relations generate the same relational clone, $\clos{\{R\}}=\clos{\{R'\}}$, since $R'$ can be expressed as the conjunction of an $R$-contraint and an equality constraint, and $R(x,y)=\exists z R'(x,y,z)$. However, the relation $R$ is positive, and the relation $R'$ is Schaefer (bijunctive) but not positive. Therefore, according to the previous theorem $\ARGREL(\{R\})$ is in $\P$ whereas $\ARGREL(\{R'\})$ is $\NP$-complete.
\end{remark}

\begin{remark}\label{rem:useful_remark}
	 Observe that we inherit all hardness results from $\ARG(\Gamma)$ via
the reduction $(\Delta, \alpha) \mapsto (\Delta \cup \{\phi\}, \alpha \land
\phi, \phi)$, where $\phi$ is any non-trivial $\Gamma$-formula
made of fresh variables. %respectively $(x \land \overline{y}), (x \neq y)$ 
% 		\item Whenever $\ARGCHECK$ is in $\P$ then $\ARGREL$ is in $\NP$: Guess a $\Phi$ and verify that $\varphi \in \Phi$ and $(\Phi, \alpha) \in \ARGCHECK$.
% 	\end{itemize}
\end{remark}

\medskip

While the polynomial cases of $\ARG$ and $\ARGCHECK$ relied directly on the
tractability of $\SAT$ and $\IMP$, for $\ARGREL$ we need to investigate the
structure of the problem.

\begin{proposition}\label{prop:prop_argrel_0}
Let $\Gamma$ be a constraint language. If $\Gamma$ is either $\positive$ or
$\negative$, then $\ARGREL(\Gamma)$ is in $\P$.
\end{proposition}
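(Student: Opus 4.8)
The plan is to show that when $\Gamma$ is positive (the negative case being dual), one can efficiently decide $\ARGREL(\Gamma)$ by exploiting monotonicity. First I would recall the key structural fact: a positive $\Gamma$-formula defines a monotone relation, so a conjunction $\Phi$ of positive $\Gamma$-formul\ae\ has a least model whenever it is satisfiable, and $\Phi \models \alpha$ can be checked by evaluating $\alpha$ on the least model (for a positive claim this is even easier, but $\alpha$ here is an arbitrary $\Gamma$-formula which is still positive). The crucial consequence is that satisfiability of $\Phi$ is monotone \emph{downward} (any subset of a satisfiable set of positive formul\ae\ is satisfiable, since positive clauses are satisfied by the all-ones assignment — actually every positive formula is $1$-valid, so every subset of $\Delta$ is satisfiable). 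Hence consistency is never an obstruction, and the only real constraints are: $\Phi \models \alpha$, and minimality.

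The heart of the argument is then a characterization of when there is a support $\Phi \subseteq \Delta$ containing a designated $\psi$. Here I would argue as follows. Since all formul\ae\ are monotone and $1$-valid, $\Delta \models \alpha$ is a necessary condition for \emph{any} support to exist, and it is checkable in polynomial time (positive formul\ae\ are in particular dual Horn, so $\IMP$ is tractable; alternatively compute the least model of $\Delta$ and evaluate $\alpha$). Assuming $\Delta \models \alpha$, I claim that a support containing $\psi$ exists if and only if $\psi$ is ``needed,'' i.e.\ if and only if $\Delta \setminus \{\psi\} \not\models \alpha$ \emph{or} $\psi$ can be forced into some minimal subset that still entails $\alpha$. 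The clean statement I would aim for: there is a support $\Phi$ with $\psi\in\Phi$ iff there exists $\Delta' \subseteq \Delta$ with $\psi \in \Delta'$, $\Delta' \models \alpha$, and $\Delta' \setminus \{\psi\} \not\models \alpha$ — because from any such $\Delta'$ one extracts a minimal subset still entailing $\alpha$, and since removing $\psi$ from $\Delta'$ already breaks entailment, $\psi$ survives in the minimal subset; conversely a support is such a $\Delta'$. So the algorithm is: check $\Delta \models \alpha$; then check whether $(\Delta \setminus \{\psi\}) \not\models \alpha$ — if so, every minimal subset of $\Delta$ entailing $\alpha$ must contain $\psi$, so pick any and we are done; if instead $\Delta \setminus \{\psi\} \models \alpha$, then one must decide whether some proper subset $\Delta'$ with $\psi \in \Delta'$ has $\psi$ essential, and I expect this reduces again to an entailment check by a greedy/monotone argument.

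The step I expect to be the main obstacle is precisely this last case: showing that deciding the existence of \emph{some} subset in which $\psi$ is essential is polynomial, rather than requiring a search over exponentially many subsets. The idea I would pursue is a greedy ``deletion'' procedure exploiting monotonicity of entailment: start from $\Delta$, and try to remove formul\ae\ one at a time (never removing $\psi$) as long as the remaining set still entails $\alpha$, until no further deletion is possible; the resulting set $\Phi$ is then minimal among subsets containing $\psi$ that entail $\alpha$, but it need not be a \emph{minimal} support in the sense of Definition~\ref{def:argument} because $\psi$ itself might be redundant in $\Phi$. The fix: $\psi$ is non-redundant in the final $\Phi$ precisely when $\Phi \setminus \{\psi\} \not\models \alpha$, and one shows that if the greedy procedure ends with $\Phi \setminus \{\psi\} \models \alpha$ then in fact \emph{no} support contains $\psi$ (because greedy deletion in a monotone setting is confluent enough to certify this). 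Establishing that confluence-type lemma — that the greedy residue correctly witnesses (non)existence — is the technical crux; once it is in place, the whole decision procedure runs in polynomial time since every entailment test is $\IMP(\Gamma)$ for dual-Horn $\Gamma$, which is in $\P$.

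\begin{proof}[Proof sketch]
Assume $\Gamma$ is $\positive$; the $\negative$ case is dual. Every positive formula is $\onevalid$, so every subset of $\Delta$ is consistent; thus consistency is vacuous and an instance $(\Delta,\alpha,\psi)$ is positive iff there is $\Phi\subseteq\Delta$ with $\psi\in\Phi$, $\Phi\models\alpha$, and $\Phi$ minimal with respect to entailing $\alpha$. Since positive relations are in particular $\dualhorn$, $\IMP(\Gamma)$ is in $\P$. First test whether $\Delta\models\alpha$; if not, reject. Next test whether $\Delta\setminus\{\psi\}\models\alpha$. If not, then $\psi$ belongs to every minimal subset of $\Delta$ entailing $\alpha$, so any such minimal subset is a support containing $\psi$ and we accept. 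If $\Delta\setminus\{\psi\}\models\alpha$, run a greedy deletion: maintain a set $\Phi$, initially $\Delta$, and repeatedly remove from $\Phi$ any formula $\varphi\neq\psi$ such that $\Phi\setminus\{\varphi\}\models\alpha$, until no such $\varphi$ exists. Each step is an $\IMP(\Gamma)$ query and there are at most $|\Delta|$ steps, so this runs in polynomial time. One then shows, using monotonicity of entailment under adding positive formul\ae, that the residue $\Phi$ has the property: either $\Phi\setminus\{\psi\}\not\models\alpha$, in which case $\Phi$ is a support containing $\psi$ and we accept; or $\Phi\setminus\{\psi\}\models\alpha$, in which case no support of $\alpha$ in $\Delta$ contains $\psi$ and we reject. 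The correctness of this dichotomy is the only non-routine point and follows from the fact that in the monotone setting the outcome of greedy deletion certifies global (non)essentiality of $\psi$. Hence $\ARGREL(\Gamma)$ is decidable in polynomial time.
\end{proof}
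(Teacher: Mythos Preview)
Your proposal has a genuine gap: the greedy-deletion procedure is \emph{not} correct, and the ``confluence-type lemma'' you identify as the crux is in fact false.

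Here is a concrete counterexample with $\Gamma = \{\T\}$ (which is $\positive$). Take
\[
\psi = \T(a),\quad \delta_1 = \T(a)\land\T(b),\quad \delta_2 = \T(b),\quad
\Delta = \{\psi,\delta_1,\delta_2\},\quad \alpha = \T(a)\land\T(b).
\]
There \emph{is} a support containing $\psi$, namely $\{\psi,\delta_2\}$: it entails $\alpha$, and removing either element breaks entailment. So the correct answer is ``accept''. Now run your algorithm. We have $\Delta\models\alpha$ and $\Delta\setminus\{\psi\} = \{\delta_1,\delta_2\}\models\alpha$, so we enter the greedy phase. If greedy first tries $\delta_2$, it finds $\{\psi,\delta_1\}\models\alpha$ and removes $\delta_2$. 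Now $\{\psi\}\not\models\alpha$, so $\delta_1$ cannot be removed and greedy halts with $\Phi=\{\psi,\delta_1\}$. But $\Phi\setminus\{\psi\}=\{\delta_1\}\models\alpha$, so your procedure rejects --- incorrectly. The outcome of greedy deletion depends on the order of removals, and a bad order can strand you in a set where $\psi$ is redundant even though some other subset makes $\psi$ essential. Monotonicity of entailment under adding formul\ae\ does not give you the matroid-like exchange property that would be needed for order-independence here.

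The paper's proof avoids this by exploiting a sharper structural fact specific to positive CNF: for positive formul\ae\ $a,b$ and a positive \emph{clause} $\gamma$, if $a\not\models\gamma$ and $b\not\models\gamma$ then $a\land b\not\models\gamma$. Consequently, writing $\alpha=\bigwedge_{i\in I}C_i$, a set $\Phi$ entails $C_i$ iff some single $\delta\in\Phi$ already entails $C_i$. This reduces the problem to a hitting-set shape: $\psi$ lies in some minimal support iff for some clause $C_i$ one has $\psi\models C_i$ and the set $\Delta_i := \{\psi\}\cup\{\delta\in\Delta \mid \delta\not\models C_i\}$ still entails $\alpha$. Checking this for each $i$ is polynomially many $\IMP(\Gamma)$ queries. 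The key idea you are missing is to work clause-by-clause on $\alpha$ using that per-clause entailment is ``disjunctive'' over positive formul\ae, rather than trying to globally prune $\Delta$.
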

\begin{proof}
We treat only the case of $\positive$ $\Gamma$, the other case can be treated analogously / dually.
In this case $\alpha$ and the formul\ae\ in the knowledge base $\Delta$ can be
considered as positive CNF-formul\ae.
%\ (such a representation can be computed in
%linear time).
We claim that Algorithm~1 decides $\ARGREL(\Gamma)$ in polynomial time.
\begin{algorithm}[ht]
\begin{algorithmic}
    \REQUIRE a set $\Delta$ of positive formul\ae, $\psi \in \Delta$ and a positive formula $\alpha = \bigwedge_{i\in I}C_i$.
      \FORALL{$i \in I$}
        \STATE $\Delta_i := \{\psi\}\  \cup\  \{\delta \in \Delta \mid \delta \not\models C_i\}$
        \IF{$\Delta_i \models \alpha$} \label{alg:condition}
          \STATE accept
        \ENDIF
      \ENDFOR
      \STATE reject
\end{algorithmic}
  \caption{Algorithm for $\ARGREL(\Gamma)$ for $\positive$ $\Gamma$.}
  \end{algorithm}
 	The running time of Algorithm~1 is obviously polynomial (the test $\Delta_i \models \alpha$ is an instance of $\IMP(\Gamma)$, which is in $\P$ for $\positive$ $\Gamma$).

  To prove correctness, we need the following easy but crucial observation.
  \begin{observation}\label{obs:star}
	Let $a,b$ be positive CNF-formul\ae\ and let $\gamma$ be a positive
clause. If $a \not\models \gamma$ and $b \not\models \gamma$, then $a \land b
\not\models \gamma$.
	\end{observation}
  If Algorithm~1 accepts, then there exists a $\Delta_i \subseteq \Delta$ such that $\Delta_i \models \alpha$
  and no $\delta \in \Delta_i\setminus \{\psi\}$ entails $C_i$. With
Observation~\ref{obs:star} we obtain that $\Delta_i\setminus \{\psi\} \not
\models C_i$, therefore
  $\Delta_i\setminus \{\psi\} \not \models \alpha$.
  We conclude that $\Delta_i$ contains a minimal support $\Phi$ such that $\psi \in \Phi$.
  
  Conversely, let $\Phi$ be a minimal support such that $\psi \in \Phi$. Since $\Phi\setminus \{\psi\} \not \models \alpha$, there
  is at least one $i$ such that $\Phi\setminus \{\psi\} \not \models C_i$, i.e.,
in particular no $\delta \in \Phi\setminus \{\psi\}$ entails $C_i$.
  For this $i$ the algorithm constructs $\Delta_i := \{\psi\}\  \cup\  \{\delta \in \Delta \mid \delta \not\models C_i\}$.
  Obviously $\Phi \subseteq \Delta_i$, and since $\Phi \models \alpha$ we obtain that $\Delta_i \models \alpha$ which causes the algorithm to accept.

Note that the same algorithmic idea was applied in \cite[Proposition~3.8]{CrScThWo11} to solve the relevance problem for positive terms.
\end{proof}

Let us now turn to the $\NP$-complete case, when $\Gamma$ is Schaefer but
neither positive nor negative.
Observe that if $\Gamma$ is Schaefer, then $\ARGCHECK(\Gamma)$ is in $\P$ (see
Theorem \ref{thm:argcheck}), and therefore $\ARGREL(\Gamma)$ is in $\NP$: 
Guess a $\Phi$ and verify that $\psi \in \Phi$ and $(\Phi, \alpha) \in
\ARGCHECK$.
The hardness proofs rely on the following basic hardness results.

\begin{lemma}\label{lem:hardness_for_eq_argrel}
 $\ARGREL(\{x=y\})$,  $\ARGREL(\{(x=y)\land z\})$ and  $\ARGREL(\{(x=y)\land \neg z\})$ are $\NP$-hard.
\end{lemma}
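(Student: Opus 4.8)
I will prove $\NP$-hardness for each of the three relevance problems by reducing from $\ThreeSAT$, exploiting the fact that in these three constraint languages the formul\ae\ in a knowledge base are essentially conjunctions of equality atoms (together with a single literal in the latter two cases). The key insight is that a set of equality constraints, viewed as a support, proves a target equality $\alpha=(a=b)$ precisely when the induced graph (vertices = variables, edges = the equality pairs used) connects $a$ and $b$; a \emph{minimal} such support is then a simple path from $a$ to $b$, and the requirement $\psi\in\Phi$ forces that path to use a designated edge. This path-selection phenomenon is exactly where the computational difficulty enters.

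\textbf{Main reduction for $\ARGREL(\{x=y\})$.}
First I would treat $\ARGREL(\{x=y\})$. Given a $3$-CNF formula $\varphi=\bigwedge_{i=1}^k C_i$ over variables $v_1,\dots,v_n$, the idea is to build a knowledge base whose equality atoms encode, via graph connectivity, a choice of a satisfying assignment. Concretely, introduce a vertex $0$ and a vertex $1$ (kept distinct by the claim), and for each propositional variable $v_j$ a pair of ``literal edges'' representing $v_j$ true and $v_j$ false; chain these gadgets so that any path from a fixed source $s$ to a fixed sink $t$ must, for each $j$, commit to exactly one of the two literal edges, thereby reading off an assignment, and additionally must traverse a clause-gadget edge for each $C_i$ that is only ``available'' when $C_i$ is satisfied by the chosen literals. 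Set $\psi$ to be one distinguished edge (e.g.\ the initial edge $s=t_0$) that any $s$–$t$ path is forced through, and set $\alpha := (s=t)$. Then $(\Delta,\alpha,\psi)$ has a support containing $\psi$ iff there is a simple $s$–$t$ path through $\psi$ in the available-edge graph iff $\varphi$ is satisfiable; minimality of the support corresponds exactly to simplicity of the path. The two directions are routine once the gadget is laid out: a satisfying assignment yields a concrete path hence a minimal support, and conversely a minimal support is a simple path whose literal-edge choices give a consistent assignment satisfying every clause. I would likely phrase the gadget directly in terms of a graph $G$ with a marked edge and the generic fact ``$\{x=y\}$-support for $(a=b)$ through edge $e$ $\iff$ simple $a$–$b$ path through $e$ in $G$'', which isolates the combinatorial content.

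\textbf{The remaining two languages and the expected obstacle.}
For $\ARGREL(\{(x=y)\land z\})$ and $\ARGREL(\{(x=y)\land\neg z\})$ I would reduce from the previous case (or directly from $\ThreeSAT$ in the same style): an $\{(x=y)\land z\}$-constraint is an equality atom tagged with a positive literal on a fresh ``padding'' variable, so by using the \emph{same} padding variable $w$ throughout $\Delta$ and including $w$ in the claim $\alpha$, the extra conjunct is satisfied uniformly and inert with respect to minimality, reducing the problem to the pure-equality case; the $\neg z$ variant is dual. The main obstacle is getting the minimality bookkeeping exactly right: I must ensure that no ``shortcut'' subset of the knowledge base accidentally proves the claim (so the graph built from the available edges has no spurious $a$–$b$ connection avoiding $\psi$ or avoiding the clause gadgets), and that every genuinely minimal support corresponds to a simple path rather than, say, a path plus an isolated redundant edge — which it does, precisely because dropping any edge of a path from $a$ to $b$ disconnects them, while any support strictly containing a path is non-minimal. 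Designing the clause gadgets so that traversal is possible if and only if the clause is satisfied, without introducing unintended connectivity, is the delicate point; once that is in place, the equivalences fall out and the reductions are clearly logspace-computable.
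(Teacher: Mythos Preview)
Your graph intuition for entailment is correct, but the sketch has a real gap at exactly the point you call delicate. The characterisation ``a minimal support is a simple $a$--$b$ path, and $\psi\in\Phi$ forces that path through a designated edge'' is only valid when every formula in $\Delta$ is a \emph{single} equality atom; and in that regime $\ARGREL(\{x=y\})$ is polynomial. Deciding whether an undirected graph has a simple $a$--$b$ path through a prescribed edge $e=\{u,v\}$ reduces to two instances of the two vertex-disjoint paths problem (paths $a$--$u$ and $v$--$b$, or $a$--$v$ and $u$--$b$), which is in $\P$ for undirected graphs. Hence no gadget built from single-atom formul\ae\ can give $\NP$-hardness, and in particular there is no way to make a clause edge ``only available when $C_i$ is satisfied by the chosen literals'': all edges are always present, and mere simplicity of the path cannot encode clause satisfaction. (The remark about keeping vertices $0$ and $1$ ``distinct by the claim'' is also off: every $\{x=y\}$-formula is $1$-valid, so nothing here forces two variables apart.)

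What the hardness genuinely needs is formul\ae\ in $\Delta$ that are \emph{proper conjunctions} of several equality atoms, so that selecting one formula commits you to a bundle of edges at once; minimality then acts at the formula level, not the edge level, and the simple-path picture breaks. The paper's construction exploits precisely this. With fresh variables $c_0,\dots,c_k,s$ it takes $\alpha=(c_0=s)$, $\psi=(c_k=s)$, and for each propositional variable $x_j$ the two bundles
\[
\gamma_j=(c_0=x_j)\wedge\!\!\bigwedge_{i:\,x_j\in C_i}\!(c_{i-1}=c_i),
\qquad
\delta_j=(x_j=s)\wedge\!\!\bigwedge_{i:\,\neg x_j\in C_i}\!(c_{i-1}=c_i).
\]
Including both $\gamma_j$ and $\delta_j$ connects $c_0$--$x_j$--$s$ directly, making $\psi$ redundant; so a support with $\psi$ relevant selects at most one of them per $j$, reading off an assignment. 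Relevance of $\psi$ forces the chain $c_0$--$c_1$--$\cdots$--$c_k$--$s$, hence for each clause $C_i$ some selected bundle must supply $(c_{i-1}=c_i)$, i.e.\ $C_i$ is satisfied. The bundling of the literal edge with its clause edges is exactly what your sketch lacks.

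Your handling of the two remaining languages by uniformly padding every constraint with a fresh variable $t$ (respectively $\neg t$) is fine and matches the paper.
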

\begin{proof}
For $\ARGREL(\{x=y\})$ we give a reduction from $\ThreeSAT$.
%$\SAT \leqlogm \ARGREL(\{x = y\})$.
Let $\varphi = \bigwedge_{i=1}^k C_i$, $\Vars{\varphi} = \{x_1, \dots, x_n\}$.
Let $c_0, c_1, \dots c_k, s$ be fresh variables. We map $\varphi$ to the
instance
$(\Delta, \alpha, \psi)$ defined as follows.
$$\begin{array}{rl}
\Delta  =\; & \{\gamma_j, \delta_j \mid 1 \leq j \leq n\}\; \cup\;  \{\psi\} \\
\gamma_j =\; & (c_0 = x_j) \land \bigwedge_{i \text{ s.t. } x_j \in C_i}
(c_{i-1} = c_i) \\
\delta_j =\; & (x_j = s) \land \bigwedge_{i \text{ s.t. }\neg x_j \in C_i}
(c_{i-1} = c_i) \\
\alpha  =\; & (c_0 =s) \\
\psi =\; & (c_k = s) \\
\end{array}$$
Correctness is not difficult to observe.
There is a one-to-one correspondence between (not necessarily minimal) supports
$\Phi$ in which $\psi$ is relevant and satisfying assignments $\sigma$ for
$\varphi$
given by
$\gamma_j \in \Phi$ iff $\sigma(x_j) = 1$ and
$\delta_j \in \Phi$ iff $\sigma(x_j) = 0$.
A support (containing a relevant $\psi$) does never contain both $\gamma_j$ and
$\delta_j$, since otherwise $\psi$ would not be relevant.
%Indeed, $F$ is satisfiable by some $\sigma$ if and only if some $\Phi$ allows a
%path from $c_0$ to $t$ that uses $\psi$.

For $\ARGREL(\{(x=y)\land z\})$ (resp. $\ARGREL(\{(x=y)\land \neg z\})$) we use the same reduction scheme as above, we
introduce a new variable $t$ and replace any equality of the form $(x=y)$ by $(x=y)\land t$ (resp. $(x=y)\land \neg t$).
\end{proof}

\begin{proposition}\label{prop:arg_rel_NP_complete}
Let $\Gamma$ be a constraint language. If $\Gamma$ is $\Schaefer$ but neither $\positive$ nor $\negative$ then $\ARGREL(\Gamma)$ is $\NP$-complete.
\end{proposition}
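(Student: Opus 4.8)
The plan is to reduce one of the three $\NP$-hard base problems from Lemma~\ref{lem:hardness_for_eq_argrel} to $\ARGREL(\Gamma)$ via Lemma~\ref{lem:not_Schaefer_corr}, performing a case distinction on the $0$/$1$-validity of $\Gamma$ exactly as in the proof of Proposition~\ref{prop:hard_argcheck_noncp}. Since $\Gamma$ is Schaefer, $\ARGREL(\Gamma)$ is in $\NP$ as already observed, so only hardness remains. The key point is that Lemma~\ref{lem:not_Schaefer_corr} lets us start from $\ARGREL(\Gamma', R)$ where $\Gamma'$ is any constraint language with $\Gamma' \subseteq \closneq{\Gamma}$ (equivalently $\Gamma'\subseteq\clos\Gamma$ together with non-Schaeferness, by Lemma~\ref{lem:impl_not_schaefer_equality}) and $R\in\closnexneq\Gamma$; the base instances produced in Lemma~\ref{lem:hardness_for_eq_argrel} use only the relations $(x=y)$, $(x=y)\land z$, or $(x=y)\land \neg z$ in the knowledge base, and the claims are of the same shapes.

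First I would handle the case where $\Gamma$ is both $\zerovalid$ and $\onevalid$. Here $(x=y)\in\closnexneq\Gamma$ by Lemma~\ref{lem:impl_equality}, so we can take $\Gamma' = R = \{x=y\}$, and the chain $\ARGREL(\{x=y\}) \leqlogm \ARGREL(\{x=y\}, x=y) \leqlogm \ARGREL(\Gamma)$ gives hardness via Lemma~\ref{lem:hardness_for_eq_argrel} and Lemma~\ref{lem:not_Schaefer_corr}. Next, if $\Gamma$ is $\onevalid$ but not $\zerovalid$: since $\Gamma$ is not positive (it is Schaefer but not positive), Lemma~\ref{lem:impl_equality_with_T}(1) gives $(x=y)\land z\in\closnexneq\Gamma$, so with $\Gamma' = R = \{(x=y)\land z\}$ we reduce from $\ARGREL(\{(x=y)\land z\})$. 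The dual subcase, $\Gamma$ $\zerovalid$ but not $\onevalid$, uses Lemma~\ref{lem:impl_equality_with_T}(2) (here $\Gamma$ is not negative) and $\ARGREL(\{(x=y)\land \neg z\})$. Finally, if $\Gamma$ is neither $\zerovalid$ nor $\onevalid$, then by Lemma~\ref{lem:impl_not_schaefer_equality} — or more directly, since $\Gamma$ is Schaefer we can still use Lemma~\ref{lem:impl_equality} replacement once equality is available via $\closneq\Gamma$ — actually cleanest is: $(x=y)\in\closneq\Gamma$ by Lemma~\ref{lem:impl_not_schaefer_equality}, wait $\Gamma$ is Schaefer here, so that lemma does not apply; instead observe $\Gamma$ not $\complementive$ would be needed. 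Let me reconsider: in this last subcase I would follow the pattern of Proposition~\ref{prop:hard_argcheck_noncp} using $(x\land\neg y)$, which is in $\closnexneq\Gamma$ by Lemma~\ref{lem:implementations}(3) provided $\Gamma$ is also not $\complementive$; if $\Gamma$ is $\complementive$ but not $\varepsilon$-valid then $(x\neq y)\in\closnexneq\Gamma$ by Lemma~\ref{lem:implementations}(1), and a separate small reduction from $\ThreeSAT$ into $\ARGREL$ over $\{x\neq y\}$ (mirroring the disequality-based reduction in Proposition~\ref{prop:schaefer_notvalid}) handles it.

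The main obstacle I anticipate is the last case: when $\Gamma$ is Schaefer, not positive, not negative, and neither $0$- nor $1$-valid, getting a clean $\NP$-hardness reduction that only uses relations available in $\closnexneq\Gamma$ (or $\closneq\Gamma$) without accidentally relying on equality in a way that destroys minimality of the support. The safest route is to produce, for each such $\Gamma$, an explicit $\ARGREL$-instance built from $(x\neq y)$ or $(x\land\neg y)$ constraints encoding $\ThreeSAT$ (using a designated ``false'' variable $f$ to simulate literals, as in Proposition~\ref{prop:schaefer_notvalid}), making the relevant formula $\psi$ a constraint whose presence in a support forces a consistent Boolean assignment; correctness of ``$\psi$ relevant iff $\varphi$ satisfiable'' is then a routine check. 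I would then invoke Lemma~\ref{lem:not_Schaefer_corr} to transfer from $\ARGREL(\Gamma', R)$ to $\ARGREL(\Gamma)$, completing all cases and hence the proposition.
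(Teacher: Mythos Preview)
Your first three cases (both $0$- and $1$-valid; exactly one of them) are exactly what the paper does: since the relevant relation $(x=y)$, $(x=y)\land z$, or $(x=y)\land\neg z$ lies in $\closnexneq\Gamma$ by Lemmas~\ref{lem:impl_equality} and~\ref{lem:impl_equality_with_T}, direct local replacement already gives the reduction from the corresponding base problem of Lemma~\ref{lem:hardness_for_eq_argrel}. Invoking Lemma~\ref{lem:not_Schaefer_corr} here is harmless but unnecessary --- that lemma is designed for the situation where you only have $\closneq$-expressibility on the knowledge-base side, whereas here you have full $\closnexneq$-expressibility for both knowledge base and claim.

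Where you diverge from the paper is the last case ($\Gamma$ neither $0$- nor $1$-valid), and there you are making life harder than it needs to be. The paper simply observes that in this case $\ARG(\Gamma)$ is already $\NP$-hard by Proposition~\ref{prop:schaefer_notvalid}, and then applies the one-line reduction $\ARG(\Gamma)\leqlogm\ARGREL(\Gamma)$ from Remark~\ref{rem:useful_remark}: map $(\Delta,\alpha)$ to $(\Delta\cup\{\phi\},\alpha\land\phi,\phi)$ for any nontrivial $\Gamma$-formula $\phi$ on fresh variables. This covers both the complementive and non-complementive subcases at once, with no further case distinction and no new $\ThreeSAT$ encoding. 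Your plan to rebuild the disequality/$(x\land\neg y)$ reductions of Proposition~\ref{prop:schaefer_notvalid} directly for $\ARGREL$ would ultimately work --- you are in effect rederiving Remark~\ref{rem:useful_remark} by hand in each subcase --- but the visible back-and-forth (trying Lemma~\ref{lem:impl_not_schaefer_equality}, noticing $\Gamma$ is Schaefer, then splitting on complementive) is a symptom of having missed this shortcut. Replace that whole last case by the two-line argument above and the proof is complete and matches the paper.
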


\begin{proof}
It remains to show $\NP$-hardness. If $\Gamma$ is $\zerovalid$ and $\onevalid$, we conclude with Lemma~\ref{lem:impl_equality} and Lemma~\ref{lem:hardness_for_eq_argrel}.
If $\Gamma$ is not $\zerovalid$ but $\onevalid$ (resp. $\zerovalid$ but not $\onevalid$), we conclude with Lemma~\ref{lem:impl_equality_with_T} and Lemma~\ref{lem:hardness_for_eq_argrel}.
If $\Gamma$ is neither $\zerovalid$ nor $\onevalid$, $\ARG(\Gamma$) is $\NP$-hard and we conclude with Remark~\ref{rem:useful_remark}.
\end{proof}

To conclude the proof of Theorem \ref{thm:arg_rel}
it remains to deal with the $\SigPtwo$-complete cases.

\begin{proposition}\label{prop:argrel_notschaefer}
Let $\Gamma$ be not $\Schaefer$.
Then $\ARGREL(\Gamma)$ is $\SigPtwo$-complete.
\end{proposition}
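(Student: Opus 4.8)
The upper bound is easy: $\ARGREL(\Gamma)$ always lies in $\SigPtwo$, since one can guess a candidate support $\Phi \subseteq \Delta$ together with $\psi \in \Phi$ and then verify with a single call to an $\NP$ oracle (using a $\coNP$ subroutine for $\Phi \models \alpha$ and an $\NP$ subroutine to check minimality, i.e.\ that for every $\varphi \in \Phi$ we have $\Phi \setminus \{\varphi\} \not\models \alpha$). So the whole work is the $\SigPtwo$-hardness when $\Gamma$ is not $\Schaefer$. The natural source of hardness is the abduction problem $\ABD(\Gamma)$, which by \cite{CrZa06,NoZa08} is $\SigPtwo$-complete whenever $\Gamma$ is not $\Schaefer$; note that here $\Gamma$ may be $\zerovalid$ or $\onevalid$, so we must use $\ABD(\Gamma)$ (literal explanations) rather than $\ABD[\HP](\Gamma)$.

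The plan is to mimic the reduction idea ``$\ABD \leqlogm \ARG$'' from the introduction, adapted so that it uses a designated formula $\psi$ that is forced to appear in every support, thereby reducing to $\ARGREL$ rather than to $\ARG$. Concretely, given an $\ABD(\Gamma)$ instance $(\varphi, H, q)$, one would put into $\Delta$ the formula $\varphi$, one formula for the positive literal $h$ and one for the negative literal $\neg h$ for each $h \in H$, plus a distinguished formula $\psi$ built on fresh variables; the claim $\alpha$ should encode ``$q$ holds'' conjoined with $\psi$, so that any support must contain $\psi$ (making $\psi$ relevant), must select a consistent set of literals from $H$, and must entail $q$. The minimality requirement of the support then forces exactly an inclusion-minimal explanation, which is fine since an explanation exists iff a minimal one exists. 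As in the earlier lemmas, since $\Gamma$ is not $\Schaefer$, we can realize the auxiliary gadgets: by Lemma~\ref{lem:impl_not_schaefer_equality} we have $(x=y) \in \closneq\Gamma$ and indeed $\closneq\Gamma = \clos\Gamma$ for non-$\Schaefer$ $\Gamma$, and the needed constant-like or implication-like relations are available via Lemma~\ref{lem:implementations}; the $R$-constraint forming $\alpha$ and the formulas in $\Delta$ can then be written as $\Gamma$-formulas, and finally Lemma~\ref{lem:not_Schaefer_corr}(2) transfers the reduction from $\ARGREL(\Gamma', R)$ down to $\ARGREL(\Gamma)$.

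I would organize the hardness argument by a case distinction on the valid/complementive status of $\Gamma$, parallel to the structure of Propositions~\ref{prop:notschaefer_notvalid} and~\ref{prop:hard_argcheck_noncp}: (i) if $\Gamma$ is neither $\zerovalid$, $\onevalid$, nor $\complementive$, use the $(x \land \neg y)$ gadget; (ii) if $\Gamma$ is $\complementive$ but not $\varepsilon$-valid, use the $(x \neq y)$ gadget together with the self-complementation trick to fix a constant; (iii) if $\Gamma$ is $\zerovalid$ and/or $\onevalid$, use the $(x=y)$ and $(x=y)\land z$ / $(x=y)\land \neg z$ gadgets from Lemmas~\ref{lem:impl_equality} and~\ref{lem:impl_equality_with_T}. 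In each case the encoding of $H$-literals and of $q$ must be carried out in the gadget language, exactly as in the corresponding $\ARG$ reduction, and $\psi$ is a fresh non-trivial $\Gamma$-formula whose satisfaction is demanded by $\alpha$. Alternatively, and more cheaply, one may simply invoke Remark~\ref{rem:useful_remark}: every hardness result for $\ARG(\Gamma)$ lifts to $\ARGREL(\Gamma)$ via $(\Delta,\alpha) \mapsto (\Delta \cup \{\phi\}, \alpha \land \phi, \phi)$, so $\SigPtwo$-hardness of $\ARGREL(\Gamma)$ for non-$\Schaefer$, non-$\varepsilon$-valid $\Gamma$ is immediate from Proposition~\ref{prop:notschaefer_notvalid}; the remaining work is then only the non-$\Schaefer$ but $\varepsilon$-valid case, which is exactly where $\ARG(\Gamma)$ was merely $\coNP$-complete, and where the above abduction-based reduction with a forced $\psi$ is genuinely needed.

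\textbf{Main obstacle.} The delicate point is the $\varepsilon$-valid, non-$\Schaefer$ case: there $\ARG(\Gamma)$ drops to $\coNP$, so we cannot borrow hardness from $\ARG$, and we must squeeze full $\SigPtwo$-hardness out of $\ARGREL$ directly. The subtlety is that $0$- or $1$-validity makes many supports trivially consistent and $\psi$ ``too easily'' relevant; one must design $\alpha$ and the $H$-literal gadgets (using only equality-with-a-constant-like relations, since that is all a $\varepsilon$-valid non-$\Schaefer$ $\Gamma$ guarantees via Lemma~\ref{lem:impl_equality_with_T}) so that relevance of $\psi$ genuinely certifies the existence of a working explanation $E$ against all extensions of $E$, i.e.\ so that the $\forall$-part of $\ABD$ is faithfully captured. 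Checking that the minimality condition on $\Phi$ interacts correctly with the literal selection — in particular that one never selects both $h$ and $\neg h$, which the construction must rule out while still letting $\psi$ be forced — is the place where care is required.
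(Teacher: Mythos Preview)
Your plan matches the paper's structure: membership is routine; hardness is from $\ABD(\Gamma)$; the non-$\varepsilon$-valid case is free via Remark~\ref{rem:useful_remark} and Proposition~\ref{prop:notschaefer_notvalid}; the complementive case is dispatched by the same trick as in Proposition~\ref{prop:hard_argcheck_cp} (reducing to non-complementive); and the real work is the $\varepsilon$-valid, non-complementive case. Two points in that case are sharper in the paper than in your sketch and are worth flagging.

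First, you understate what is available for the \emph{knowledge base}. By Lemma~\ref{lem:impl_CocloneInclusions} together with Lemma~\ref{lem:impl_not_schaefer_equality}, once $\Gamma$ is non-$\Schaefer$ and, say, both $0$- and $1$-valid, \emph{every} $0$/$1$-valid relation lies in $\closneq\Gamma$; so via Lemma~\ref{lem:not_Schaefer_corr}(2) the formul\ae\ in $\Delta$ may use arbitrary such relations, not merely ``equality-with-a-constant''. The bottleneck is only the \emph{claim}, which must lie in $\closnexneq\Gamma$. Your line ``$\alpha$ should encode `$q$ holds' conjoined with $\psi$'' cannot be read literally here: $\T \notin \closnexneq\Gamma$ when $\Gamma$ is $0$-valid, so one cannot write the claim as~$q$.

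Second, the paper's device for this is a pair of fresh control variables $t,f$: hypothesis literals become the $0$/$1$-valid implications $(h_i \lor \neg t)$ and $(\neg h_i \lor f)$, the distinguished formula is $\psi = (s = q)$ with $s$ fresh, and the claim is $\alpha = (s \lor \neg t \lor f)$. Only the slice $t{=}1$, $f{=}0$ is informative, and there ``$\Phi \models \alpha$ with $\psi$ relevant'' collapses to ``$\varphi \land E \models q$ with $\varphi \land E$ satisfiable''. Since this $\alpha$ is not in $\closnexneq\Gamma$, a second short reduction adds a single formula $\delta \equiv (\alpha \leftrightarrow (u_1 = u_2)) \land (v_1 = v_2)$ on fresh variables and replaces the claim by $(u_1=u_2)\land(v_1=v_2)$, which \emph{is} in $\closnexneq\Gamma$ by Lemma~\ref{lem:impl_equality}; the conjunct $(v_1=v_2)$ forces $\delta$ into every support. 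The $1$-valid-not-$0$-valid case has the same shape with one control variable dropped and final claim $u\land v$ (available by Lemma~\ref{lem:implementations}, item~4). These two moves---control variables to keep everything $\varepsilon$-valid, and the $\delta$-step to rewrite the claim into $\closnexneq\Gamma$---are precisely what your ``Main obstacle'' paragraph identifies as needed but does not supply.
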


\begin{proof}
Membership in $\SigPtwo$ follows as for $\ARG$: given an instance $(\Delta,
\alpha, \psi)$, 
guess a support $\Phi\subseteq\Delta$, verify  that $\psi \in \Phi$, and
subsequently check with an
$\NP$-oracle that   $\Phi$ is consistent, $\Phi$ entails $\alpha$
and that $\Phi$ is minimal w.r.t. the last property. 

We turn to the hardness proof. We will use the problem $\ARGREL(\Gamma', R)$ with
two parameters, in which we differentiate the restrictions put on the
knowledge base and the claim, as an intermediate problem.

If $\Gamma$ is $\complementive$, we can apply the same trick as in Proposition \ref{prop:hard_argcheck_cp}
%, Section \ref{sec:complexity_argcheck}
in order to reduce from the non-$\complementive$ case.
It therefore suffices to show hardness for non-$\complementive$ $\Gamma$.
We perform a case distinction according to whether $\Gamma$ is 0/1-valid or not.
%
%\begin{description}
%

\medskip

%%%%%%%%%%%%%%%%%%%%%%%%%%%%%%%%%%%%%%%%%%%%%%%%%%%%%%%%%%%%%%%%%%%%%%%%%%%%%%%%%%%
%\item[\textbf{$\Gamma$ is neither  $\onevalid$, nor $\zerovalid$.}] 
\noindent\emph{$\Gamma$ is neither 1-valid, nor 0-valid.} 
%%%%%%%%%%%%%%%%%%%%%%%%%%%%%%%%%%%%%%%%%%%%%%%%%%%%%%%%%%%%%%%%%%%%%%%%%%%%%%%%%%%
%
For those $\Gamma$ the problem $\ARG(\Gamma$) is $\SigPtwo$-hard. We conclude
with Remark~\ref{rem:useful_remark}.

\medskip

%%%%%%%%%%%%%%%%%%%%%%%%%%%%%%%%%%%%%%%%%%%%%%%%%%%%%%%%%%%%%%%%%%%%%%%%%%%%%%%%%%%
%\item[\textbf{$\Gamma$ is both $\onevalid$ and $\zerovalid$.}]
\noindent\emph{$\Gamma$ is both 1-valid and 0-valid.}
%%%%%%%%%%%%%%%%%%%%%%%%%%%%%%%%%%%%%%%%%%%%%%%%%%%%%%%%%%%%%%%%%%%%%%%%%%%%%%%%%%%
%
We give a reduction from the $\SigPtwo$-hard problem $\ABD(\Gamma)$. An instance is
given by $(\varphi, H, q)$, where $\varphi$ is a $\Gamma$-formula, $H \subseteq
\Vars{\varphi}$ and $q$ is a variable. The instance $(\varphi, H, q)$ is a
positive one if and only if there is an $E\subseteq \Lits{H}$ such that
  $\varphi \land E$ is satisfiable and 
 $\varphi \land E \models q$.

We give the following sequence of reductions.
$$\begin{array}{rl}
\ABD(\Gamma) & \leqlogm \ARGREL(\Gamma_1, x \lor \neg y \lor z) \\
									 & \leqlogm \ARGREL(\Gamma_2, (x = y) \land (z = w)) \\
									 & \leqlogm \ARGREL(\Gamma),
\end{array}$$
where
$$\begin{array}{rl}
\Gamma_1\;  & =\; \Gamma \cup \{(x \lor \neg y), (x = y)\},\\
\Gamma_2\;  & =\; \Gamma_1 \cup \{R_\delta\},\\
R_\delta(x_1, \dots, x_7)\;  & =\; \big((x_1 \lor \neg x_2 \lor x_4) \leftrightarrow (x_4 = x_5) \big) \land (x_6 = x_7).\\
\end{array}$$
We will first treat the third and the second reduction which are short and very technical.
Observe that $\Gamma_2$ is both $\zerovalid$ and $\onevalid$. We have therefore by Lemma~\ref{lem:impl_CocloneInclusions}, first item, that $\Gamma_2 \subseteq \clos{\Gamma}$. Since $\Gamma$ is not $\Schaefer$, we have by Lemma~\ref{lem:impl_not_schaefer_equality} that $\Gamma_2 \subseteq \closneq{\Gamma}$. 
Further, we have by Lemma~\ref{lem:impl_equality} that $(x = y) \land (z = w)
\in \closnexneq{\Gamma}$. Therefore, the third reduction follows by
the second item of Lemma~\ref{lem:not_Schaefer_corr}.

For the second reduction let $(\Delta,\alpha,\psi)$ be an instance of
$\ARGREL(\Gamma_1, x \lor \neg y \lor z)$, where $\alpha = (x_\alpha \lor \neg
y_\alpha \lor z_\alpha)$.
%
%%% the following sentence is wrong, since we do not inherit from the first reduction. we have such an instance by definition.
%(such an instance will be inherited from the first reduction).
%
We construct the instance $(\Delta',\alpha',\psi')$ of $\ARGREL(\Gamma_2, (x =
y) \land (z = w))$ as follows. Let $u_1,u_2,v_1,v_2$ be fresh variables. Then
we define:
$$\begin{array}{rl}
\Delta'  =\; & \Delta \cup \{\delta\}\\
\alpha'  =\; & (u_1=u_2) \land (v_1=v_2)\\
\psi'    =\; & \psi \\
\delta   =\; & R_\delta(x_\alpha, y_\alpha, z_\alpha, u_1, u_2, v_1, v_2)\\
%\delta   =\; & \big(\alpha \equiv (u_1=u_2)\big) \land (v_1=v_2)\\
\end{array}$$
We observe that $\Delta'$ is a set of $\Gamma_2$-formul\ae, as desired.
By definition of $R_\delta$, the formula $\delta$ is equivalent
to $\big(\alpha \leftrightarrow (u_1=u_2)\big) \land (v_1=v_2)$. This allows us
to observe that
any support for $\alpha'$ will contain the formula $\delta$ which
assures a one-to-one correspondence between the supports of the two instances.

\medskip

\noindent
It remains to give the first reduction which constitutes the main transformation idea between $\ABD$ and $\ARGREL$.
Let $(\varphi, H, q)$ be an instance of
$\ABD(\Gamma)$, where $H = \{h_1, \dots, h_k\}$. We construct the instance
$(\Delta, \alpha, \psi)$ of $\ARGREL(\Gamma_1, x \lor \neg y \lor z)$ as
follows. Let $s,t,f$ be fresh variables. Then
we define:
$$\begin{array}{rl}
\Delta  =\;   & \{(h_i \lor \neg t), (\neg h_i \lor f) \mid 1 \leq i \leq n\} 
			 \cup \{\varphi\}
			 \cup \{\psi\}\\
\alpha  =\;   & (s \lor \neg t \lor f)\\
\psi    =\;   & (s = q) \\
\end{array}$$
We observe that $\Delta$ is a set of $\Gamma_1$-formul\ae, as desired.
%Clearly $(\Delta, \alpha, \psi)$ can be constructed in logarithmic space.
%One observes that all formul\ae\ in $\Delta$ are $\zerovalid$ and $\onevalid$ and can thus be expressed by $\clos{\Gamma}$.

We now show that there is an explanation for $(\varphi,H,q)$ if and only if
$\Delta$ contains a minimal support for $\alpha$ containing $\psi$.
For the left-to-right implication let $E\subseteq \Lits{H}$ such that
 $\varphi \land E$ is satisfiable and 
  $\varphi \land E \models q$.

We define
$$\begin{array}{rl}
\Phi\;\;   =\;\;    & \{(h_i \lor \neg t) \mid h_i \in E\} 
             \cup  \{(\neg h_i \lor f) \mid \neg h_i \in E\} 
						 \cup \{\varphi\} 
						 \cup  \{\psi\}.
\end{array}$$
Note that it suffices to show that 

\medskip

\begin{enumerate}\setlength{\itemsep}{\itemseplength}
	\item[a)] $\psi \in \Phi$,
	\item[b)] $\Phi$ is satisfiable,
	\item[c)] $\Phi \models \alpha$, and
	\item[d)] $\Phi \backslash \{\psi\} \not \models \alpha$.
\end{enumerate}

\medskip

\noindent
Such a support is not necessarily minimal, but will contain a minimal support as
desired.
Item a) holds by construction of $\Phi$ and item b) follows from the
assumption that all formul\ae\ are $\zerovalid$ and
$\onevalid$.

We turn to item c). It suffices to show the following four cases.

\medskip

\begin{itemize}\setlength{\itemsep}{\itemseplength}
	\item $\Phi[t/0,f/0] \models \alpha[t/0,f/0]$
	\item $\Phi[t/0,f/1] \models \alpha[t/0,f/1]$
	\item $\Phi[t/1,f/1] \models \alpha[t/1,f/1]$
	\item $\Phi[t/1,f/0] \models \alpha[t/1,f/0]$
\end{itemize}

\medskip

\noindent
The first three cases are obvious, since $$\alpha[t/0,f/0] \equiv
\alpha[t/0,f/1]
\equiv \alpha[t/1,f/1] \equiv 1.$$
In order to show the last one, observe that $\Phi[t/1,f/0] \equiv \varphi \land
E \land (s = q)$ and that $\alpha[t/1,f/0] \equiv s$. This shows that 
$\Phi[t/1,f/0] \models \alpha[t/1,f/0]$ since $\varphi\land E\models q$.

We turn to item d). It suffices to show that $(\Phi\backslash\{\psi\})[t/1,f/0]
\not \models \alpha[t/1,f/0]$. But this is obvious, since
$(\Phi\backslash\{\psi\})[t/1,f/0] \equiv \varphi \land E$ and  $\alpha[t/1,f/0]
\equiv s$ and $s \notin \Vars{\varphi \land E}$.

We now turn to the right-to-left implication. Let $\Phi \subseteq \Delta$ such
that

\medskip

\begin{enumerate}\setlength{\itemsep}{\itemseplength}
	\item[a)] $\psi \in \Phi$,
	\item[b)] $\Phi$ is satisfiable,
	\item[c)] $\Phi \models \alpha$, and
	\item[d)] $\Phi \backslash \{\psi\} \not \models \alpha$.
\end{enumerate}

\medskip

\noindent
We define
$$\begin{array}{rl}
E\;\;   =\;\;     & \{h_i \mid (h_i \lor \neg t) \in \Phi \} 
         \cup \{\neg h_i \mid (\neg h_i \lor f) \in \Phi\}.
\end{array}$$
We first show that $\varphi \land E$ is satisfiable (this implies also that $E$
is satisfiable).
From d) we know that $\Phi\backslash\{\psi\} \land \neg \alpha$ is satisfiable.
That is, $\Phi\backslash\{\psi\} \land \neg s \land t \land \neg f$ is
satisfiable.
We conclude that in particular the formula $(\Phi\backslash\{\psi\})[t/1,f/0]$
is satisfiable, which is equivalent to $\varphi \land E$.

From c) we conclude that in particular $\Phi[t/1,f/0] \models \alpha[t/1,f/0]$.
The formula $\alpha[t/1,f/0]$ is equivalent to $s$. By definition of $E$ and
since $\psi \in \Phi$, the formula $\Phi[t/1,f/0]$ is either equivalent to $E
\land \varphi \land \psi$ or to $E \land \psi$. Since $E \land \psi \equiv E
\land (s = q)$ and $s,q \notin E$ we have that $E \land \psi \not \models s$.
Thus the first case applies.
That is, we have that $\varphi \land E \land (s = q) \models s$.
Since $s \notin \Vars{\varphi \land E}$ we conclude that $\varphi \land E
\models q$.

\medskip

%%%%%%%%%%%%%%%%%%%%%%%%%%%%%%%%%%%%%%%%%%%%%%%%%%%%%%%%%%%%%%%%%%%%%%%%%%%%%%%%%%%
%\item[\textbf{$\Gamma$ is $\onevalid$ but not $\zerovalid$ (or the converse).}]
\noindent\emph{$\Gamma$ is 1-valid but not 0-valid (or the converse).}
%%%%%%%%%%%%%%%%%%%%%%%%%%%%%%%%%%%%%%%%%%%%%%%%%%%%%%%%%%%%%%%%%%%%%%%%%%%%%%%%%%%
%
By duality it suffices to treat the case where $\Gamma$ is $\onevalid$ and not
$\zerovalid$.
We give a reduction from the $\SigPtwo$-hard problem $\ABD(\Gamma)$. The
structure of the proof is the same as in the previous case.
We give the following sequence of reductions. 
$$\begin{array}{rl}
\ABD(\Gamma) & \leqlogm \ARGREL(\Gamma_1, x \lor y) \\
									 & \leqlogm \ARGREL(\Gamma_2, x \land y) \\
									 & \leqlogm \ARGREL(\Gamma),
\end{array}$$
where
$$\begin{array}{rl}
\Gamma_1\;  & =\; \Gamma \cup \{\T, (x = y), (x \lor \neg y)\},\\
\Gamma_2\;  & =\; \Gamma_1 \cup \{R_\delta\},\\
R_\delta(x_1, \dots, x_4)\;  & =\; \big((x_1 \lor x_2) \leftrightarrow x_3 \big) \land x_4.\\
\end{array}$$
Since $\Gamma_2$ is $\onevalid$, we have by Lemma~\ref{lem:impl_CocloneInclusions}, second item that $\Gamma_2 \subseteq \clos{\Gamma}$. Since $\Gamma$ is not $\Schaefer$, we have by Lemma~\ref{lem:impl_not_schaefer_equality} that $\Gamma_2 \subseteq \closneq{\Gamma}$. 
Further, we have by Lemma~\ref{lem:implementations}, fourth item that $x \land y
\in \closnexneq{\Gamma}$. Therefore, the third reduction follows by
the second item of Lemma~\ref{lem:not_Schaefer_corr}.

For the second reduction let $(\Delta,\alpha,\psi)$ be an instance of
$\ARGREL(\Gamma_1, x \lor y)$, where $\alpha = (x_\alpha \lor y_\alpha)$.
%(such an instance will be inherited from the first reduction).
We construct the instance $(\Delta',\alpha',\psi')$ of $\ARGREL(\Gamma_2, x
\land y)$ as follows. Let $u,v$ be fresh variables. Then we define:
$$\begin{array}{rl}
\Delta'  =\; & \Delta \cup \{\delta\}\\
\alpha'  =\; & u \land v\\
\psi'    =\; & \psi \\
\delta   =\; & R_\delta(x_\alpha, y_\alpha, u, v)\\
%\delta   =\; & \big(\alpha \equiv u\big) \land v\\
\end{array}$$
We observe that $\Delta'$ is a set of $\Gamma_2$-formul\ae, as desired.
By definition of $R_\delta$, the formula $\delta$ is equivalent
to $\big(\alpha \leftrightarrow u\big) \land v$. This allows us to observe that
any support for $\alpha'$ will contain the formula $\delta$ which
assures a one-to-one correspondence between the supports of the two instances.

\medskip

For the first reduction let $(\varphi, H, q)$ be an instance of
$\ABD(\Gamma)$, where $H = \{h_1, \dots, h_k\}$. We construct the instance
$(\Delta, \alpha, \psi)$ of $\ARGREL(\Gamma_1, x \lor y)$ as follows. Let $s,f$
be fresh variables. Then
we define:
$$\begin{array}{rl}
\Delta  =\; & \{h_i, (\neg h_i \lor f) \mid 1 \leq i \leq n\}  \cup
\{\varphi\} \cup \{\psi\}\\
\alpha  =\; & s \lor f\\
\psi    =\; & (s = q) \\
\end{array}$$
Obviously, $\Delta$ is a set of $\Gamma_1$-formul\ae, as desired.
Correctness can be proved as in the previous case, though more easily.
%\end{description}
\end{proof}

% Let us finally specify the relational clones that give rise to different complexity of $\ARGREL$.
% All relational clones $C$ such that $C \subsetneq \CoCloneIS_0$ or
% $C \subsetneq \CoCloneIS_1$ possess a (finite) base $\Gamma_C$ such that $(x = y)\notin \closneq{\Gamma_C}$ (see for instance \cite{CreignouKZ08}).
% By Prop.~\ref{prop:prop_argrel_0} (and Lemma~\ref{lem:CocloneIS}) we obtain that $\ARGREL(\Gamma_C)$ is in $\P$.
% By Prop.~\ref{prop:arg_rel_NP_complete} we obtain that $\ARGREL(\Gamma_C \cup \{=\})$ is $\NP$-complete.
% Though, $\clos{\Gamma_C} = \clos{\Gamma_C \cup \{=\}}$.

% \medskip
% 
%% dispensability problem %%
% The dispensability problem $\ARGDISP$ (given $(\Delta,\alpha, \psi)$, is there a
% support $\Phi$ such that $\psi \notin \Phi$?) 
% obeys the same classification as $\ARG$ since it is equivalent to $\ARG$ via
% reductions that do not change the type of the involved formul\ae. To reduce
% $\ARG(\Gamma)$ to $\ARGDISP(\Gamma)$ map $(\Delta, \alpha)$ to $(\Delta \cup
% \{\psi\}, \alpha, \psi)$, where $\psi$ is an arbitrary $\Gamma$-formula using
% fresh variables. For the converse direction, map $(\Delta, \alpha, \psi)$ to
% $(\Delta \backslash \{\psi\}, \alpha)$.
% 

%-------------------------------------------------------------------------------
\section{Conclusion and future work}\label{sec:conclusion}
%-------------------------------------------------------------------------------

In this paper we presented complete complexity classifications for three important
computational tasks in argumentation, namely the existence, the verification and the
relevance problem. The classifications have been obtained in 
 Schaefer's popular framework, i.e., formul\ae\ are in generalized conjunctive
normal form and  restrictions are made on the allowed type of constraints
(generalized clauses).
This approach covers classical classes of CNF-formul\ae. For
instance we obtain that the argument existence problem
is  $\NP$-complete for  
$\horn$-, $\dualhorn$-, $\affine$- and 2CNF-formul\ae, whereas the argument
verification problem is tractable in these cases. Observe that the frontier  between hard and easy problems for
$\ARGCHECK$ is the same as for the implication problem $\IMP$.
It may come as a surprise that there are fragments
(for instance in the case of 0-valid non-Schaefer
relations) for which verifying an argument is potentially harder than deciding the existence of an 
argument ($\ARGCHECK$ is $\DP$-complete, $\ARG$ is ``only'' $\coNP$-complete).
Finally, the relevance problem is the hardest among the studied problems: already the equality relation makes it $\NP$-hard. The only tractable fragment is that of positive/negative formul\ae.

%future work

It would be interesting to extend the study to different variants on the
claim as it has been done in \cite{NoZa08,CrScTh11} for abduction.
%For instance we would like to determine the complexity if the claim is not
%anymore a $\Gamma$-formula but is reduced to a term, a clause or a (positive)
%literal.
The complexity of %$\ARGREL$ 
the problems studied in this paper
is also a computational core for
evaluating more complex argumentation problems, for instance, the warranted formula 
problem (WFP) on argument trees, which has  been shown to be $\PSPACE$-complete
\cite{GoHi10}.
It might be the case  that fragments 
studied here also lower the complexity of WFP, but we  leave details for future
work.

% Acknowledgments
\subsection*{Acknowledgements}
The authors would like to thank Julian-Steffen M{\"u}ller for the nice proof of Proposition~\ref{lem:hardness_for_eq_argrel}.

\bibliographystyle{alpha}
\bibliography{\jobname}

\end{document}